\newcommand{\moni}[1]{\emph{\textcolor{red}{\textbf{Moni:} #1}}}
\newcommand{\barkar}[1]{\emph{\textcolor{blue}{\textbf{Bar:} #1}}}
\titleformat{\section}{\large\bf}{\thesection}{1em}{}
\titleformat{\subsection}{\normalsize\bf}{\thesubsection}{1em}{}
\newtheorem*{assumption}{Assumption}
\newtheorem{theorem}{Theorem}
\newtheorem{definition}{Definition}
\newtheorem{claim}{Claim}[section]
\newtheorem*{conclusion}{Conclusion}
\def\bbE{{\mathbb E}}
\def\bbN{{\mathbb N}}
\def\bbR{{\mathbb R}}
\def\bbOne{{\mathbbm{1}}}
\newcommand{\half}{\frac{1}{2}}
\title{New Algorithms and Applications
for Risk-Limiting Audits\footnote{Research supported in part by grants from the Israel Science Foundation (no.2686/20),
by the Simons Foundation Collaboration on the Theory of Algorithmic Fairness and by the Israeli Council for Higher
Education (CHE) via the Weizmann Data Science Research Center.} \footnote{A shorter version of this paper appears in the Proceeding of the 4th Annual  Symposium on Foundations of Responsible Computing, FORC 2023.} }
\author{Bar Karov\footnote{Department of Computer Science and Applied Mathematics, Weizmann Institute of Science, Rehovot,
Israel. Email: barkrv@gmail.com} \and  Moni Naor\footnote{
Department of Computer Science and Applied Mathematics, Weizmann Institute of Science, Rehovot,
Israel. Incumbent of the Judith Kleeman Professorial Chair. Email: moni.naor@weizmann.ac.il. }}
\begin{document}
\maketitle

\begin{center}
    \textbf{Abstract}
\end{center}
Risk-limiting audits (RLAs) are a significant tool in increasing confidence in the accuracy of elections. They consist of randomized algorithms which check that an election's vote tally, as reported by a vote tabulation system, corresponds to the correct candidates winning. If an initial vote count leads to the wrong election winner, an RLA guarantees to identify the error with high probability over its own randomness. These audits operate by sequentially sampling and examining ballots until they can either confirm the reported winner or identify the true winner.

The first part of this work suggests a new generic method, called ``Batchcomp", for converting classical (ballot-level) RLAs into ones that operate on batches. As a concrete application of the suggested method, we develop the first ballot-level RLA for the Israeli Knesset elections, and convert it to one which operates on batches. We ran the suggested ``Batchcomp" procedure on the results of 22nd, 23rd and 24th Knesset elections, both with and without errors.

The second part of this work suggests a new use-case for RLAs: verifying that a population census leads to the correct allocation of political power to a nation's districts or federal-states. We present an adaptation of ALPHA, an existing RLA method, to a method which applies to censuses. Our census-RLA is applicable in nations where parliament seats are allocated to geographical regions in proportion to their population according to a certain class of functions (highest averages). It relies on data from both the census and from an additional procedure which is already conducted in many countries today, called a post-enumeration survey.
\newpage

\tableofcontents
\newpage

\section{Introduction}
\label{sec:Introduction}

\begin{center}
    \epigraph{``Those Who Vote Decide Nothing. Those Who Count The Votes Decide Everything"}{Attributed to many, including Joseph Stalin\footnotemark[1]} \footnotetext[1]{See the Snopes website \url{https://www.snopes.com/fact-check/stalin-vote-count-quote}}
\end{center}

Running an election is a delicate endeavour, since casting and tallying votes entails seemingly contradictory requirements: counting the votes should be accurate and it must also be confidential. 
A risk-limiting audit (RLA) is a process whose goal is to increase the confidence that results of  an election were tallied appropriately,  or more accurately that the winner/s were chosen correctly. It is usually described for election systems where there is an electronic vote tabulation, whose tally is referred to as the \textbf{reported results}, but also backup paper-ballots, whose tally is assumed to be the \textbf{true results}. The procedure examines what is hopefully a relatively small number of the backup paper-ballots, and comparing them to the full reported results of the electronic voting system. These audits are randomized algorithms, where the randomization is manifested in the choice of ballots to examine, and potentially the order in which they are examined.

A risk-limiting audit ends either when the reported winners of the election are confirmed, or after a full recount of the backup paper-ballots of all voters. The audit's goal is to confirm that the reported winners according to the electronic vote tabulation (the reported tally) match the winners according to the paper-backups (the true tally). Note that RLAs verify that the elections resulted in the correct winners according to the backup paper-ballots, and not that the reported vote tally was completely accurate; an RLA will approve election results that contain counting errors which do not change the winners of the elections. This fact is useful since it would be infeasible to expect the vote tally to be accurate up to every single ballot, but we should avoid at all cost counting errors which change the winners of the elections.

The claimed guarantee of  RLAs is that if the reported winners of the elections are not correct (with regards to the full paper count), then the probability that the audit will mistakenly confirm the results is lower than some predetermined parameter, referred to as the {\em risk-limit} of the audit.

\begin{center}
\fbox{\begin{minipage}{40em} \begin{center}\textbf{\hypertarget{The RLA Guarantee}{The RLA Guarantee}:} 

    If the reported winners of the elections are not correct, the probability that an RLA will approve them is at most $\alpha$, where $\alpha$ is a parameter which is set before the audit begins.
\end{center} \end{minipage}} \end{center}

The efficiency of an RLA is measured by the number of paper-ballots it requires to read, given that the reported tally matches the true one. In most cases, an RLA should remain relatively efficient even if the reported tally isn't completely accurate, as long as it results in the same winners as the true tally. The efficiency of any specific RLA method is limited by the election system it operates on. If a system has a sensitive social choice function, meaning that small tallying errors can often change the election winners, then it is more difficult to audit efficiently. 

Risk limiting audits provide several advantages over other type of post-election audits. First, they are software independent, meaning that they do not depend on the specific mechanism with which the voters cast their ballots. Additionally, they are publicly verifiable, as the audit can be easily broadcast to the public and verified by third parties. Lastly, they provide a clear statistical guarantee regarding the reported winners of the election. Previous post-election audits relied mostly on manual recounts of randomly selected precincts, without providing any clear statistical statement regarding the election results. 

For these reasons, RLAs are recommended by many bipartisan organizations who deal with election integrity~\cite{american2010american, center2013presidential, national2018securing}.  
They are currently used to audit a number of local and state-wide elections in the United States~\cite{NcslRLA}. Some states require them by law, while others are in more preliminary stages of their implementation. Specific RLA methods were also designed for a number of elections in Europe~\cite{stark2014verifiable} and Australia~\cite{blom2019raire}, but to the best of our knowledge they have yet to be implemented, with the sole exception of a preliminary pilot conducted in Denmark~\cite{schurmann2016risk}. 
 
The goal of the work is to expand the realm where RLAs can be used. First, in~\Cref{sec:Batchcomp}, we present a new generic method of converting many existing RLAs into batch-level RLAs. Afterwards, in~\Cref{sec:census RLA}, we suggest a new use case of RLAs- assessing whether a census correctly allocates political power to different regions. 

\subsection{Definition of Risk-Limiting Audits}
\label{sec:RLA definition}
Explaining RLAs from a mathematical perspective raises the need to formalize the audited election system. Most election systems can be defined by the set of possible votes a single voter may cast and a social choice function which outputs the winners of the elections. More formally, it requires 3 definitions:
\begin{itemize}
    \item Let $\mathcal{C}$ be the set of possible ballots a voter may cast. For example, in standard parliamentary elections (like the Israeli Knesset Elections), $\mathcal{C}$ would be the set of all running parties, plus an invalid ballot. In the US presidential elections, we could define $\mathcal{C}$ as the set of tuples of every running candidate and every federal-state\footnote{Including the 50 States, District of Columbia and the Maine and Nebraska districts.}.
    \item Let $O$ be the set of all possible election outcomes. An election outcome here is a winning candidate/s, or an allocation of seats to the running parties, and not an exact tally of votes.
    \item Let $f\colon \mathcal{C}^*\rightarrow O$ be the social choice function of the said elections. $f$ takes any number of ballots and outputs the appropriate election outcome.
\end{itemize}

After an initial (possibly electronic) tallying of the votes, the reported results are determined. These results are comprised of a reported election outcome, denoted as $o^{rep}\in O$, and some information regarding the ballots that were cast. Usually, this will be the vote tally - a count of how many ballots of each type were cast in the entire elections. In some instances, the certificate may even include the interpretation that the initial vote count (the reported results) gave for each specific paper-backup ballots. We refer to this extended information which is learned about the ballots, beyond only the reported winners of the elections, as the {\em certificate} of the reported results. This certificate can be used by a risk-limiting audit to increase its efficiency on instances where the reported tally is accurate, though it must fulfil the~\hyperlink{The RLA Guarantee}{RLA guarantee} even if it is not.

Using this notation, let $\mathcal{R}$ be a randomized algorithm with query-access to the paper-backup ballots. The algorithm receives, before it begins querying for ballots, a reported outcome $o^{rep} \in O$, a certificate $t$ as described above, and a risk-limit $0\leq \alpha \leq 1$. $\mathcal{R}$ can sequentially query for paper-backup ballots until it either outputs ''approve", or until all ballots were queried. If all ballots were queried, $\mathcal{R}$ knows whether the true outcome of the election according to the paper-backup ballots is $o^{rep}$, and outputs ''approve" or ''reject" accordingly. $\mathcal{R}$ is an RLA if it fulfills the~\hyperlink{The RLA Guarantee}{RLA guarantee}, meaning that it approves a wrongful outcome with probability of at most $\alpha$: 

\begin{definition}
    $\mathcal{R}$ is an RLA for an election system with possible ballots $\mathcal{C}$, a set of possible election outcomes $O$ and a social choice function $f:\mathcal{C}^*\rightarrow O$, if for any reported results $o^{rep}\in O$, certificate $t$, number of ballots $n\in \bbN$ and $q:[n]\rightarrow \mathcal{C}$, we have:
    $$
        f(q(1),q(2),\cdots, q(n)) \neq o^{rep} \Longrightarrow 
        \ \Pr[\mathcal{R}^{q(\cdot)}(o^{rep}, t, \alpha) = \text{approve}] \leq \alpha.
    $$
\end{definition}

Where $\mathcal{R}^{q(\cdot)}$ denotes that $\mathcal{R}$ has oracle access to $q$. $q$ here encodes the paper-backup ballots, such that $q(i)$ returns the $i$th ballot. The expression above essentially means that for any reported outcome of the elections $o^{rep}$, and for any sequence of paper-backup ballots, if tallying the  paper-backup ballots results in an outcome different than $o^{rep}$, then $\mathcal{R}$ approves w.p.\ of at most $\alpha$. This is exactly the~\hyperlink{The RLA Guarantee}{RLA guarantee} defined previously.

Under this definition, the efficiency of an RLA is measured by the query-complexity of $\mathcal{R}$ on $q:[n]\rightarrow \mathcal{C}$ for which $f(q(1),...,q(n))=o^{rep}$. Ideally, we would like this query complexity to be as small as possible when the certificate that $\mathcal{R}$ receives is representative of the paper-backup ballots. 

\subsection{Risk Limiting Audits - Limitations and Assumptions}
While RLAs are considered the golden-standard of post-election audits, they should always be part of a more exhaustive post election audit. This is due to some critical assumptions that are made by existing RLA methods, and need to be verified by some other mean. These assumptions are:

\begin{itemize}
    \item \textbf{The paper-backup ballots represent the true intention of the voters:} An RLA only verifies that the reported winners according to an initial vote tally matches the winners according to the paper-backup ballots tally. For this reason, it is critical that voters read and approve their paper-backup ballots, and that no party can alter these ballots after being cast.
    \item \textbf{The audit has access to random bits:} Most existing audits require sequential access to randomly selected paper-backup ballots. If the randomness used to choose these ballots is compromised, an adversary can cause an RLA to approve the wrong election winners. 
    In practice, random ballots are sampled using a pseudo random number generator with a random seed, which is generated by throwing dice at a public meeting\footnote{See Q5 on \url{https://www.sos.state.co.us/pubs/elections/RLA/faqs.html} and \url{https://arstechnica.com/tech-policy/2016/11/saving-american-elections-with-10-sided-dice-one-stats-profs-quest/}.}.
    \item \textbf{The total number of ballots is known:} If the number of ballots is not known, some ballots may not be considered during the audit. Moreover, if the number of paper-backup ballots does not match the total number of votes according to the electronic tabulation system, the audit may not fulfill its claimed guarantee. Ensuring that these two numbers match is usually done before an RLA begins, as part of a separate process called a compliance audit~\cite{lindeman2012gentle}. In most RLAs, if the two numbers do not match but the size of the discrepancy is known, it can be fixed by adding imaginary invalid ballots to the reported or true results.   
\end{itemize}

In addition to the points above, note that an RLA only verifies the vote tabulation, and not the voting process itself. If ineligible voters are allowed to vote, or if voters are coerced to vote in a specific manner, an RLA does not guarantee the integrity of the elections.

\subsection{Batch-level Risk-Limiting Audits}
\label{sec:intro Batch RLAs}
Most RLA models assume that the auditing party has the ability to repeatedly sample random ballots. However, in many real elections, the ballots are partitioned into batches such that it may be difficult to sample single ballots at random. This inspires the idea of batch risk-limiting audits. In a batch-level RLA, instead of sampling single ballots, we iteratively sample entire batches and then manually count the ballots in these batches. Additionally, we assume knowledge of the reported tally of each batch, according to the reported count. Batch audits are useful when retrieving a sequence of randomly chosen batches of ballots is easier than retrieving a sequence of randomly chosen single ballots.

During batch-level RLAs, we may not assume that ballots were partitioned into batches randomly; in practice, each batch is usually comprised of ballots cast at a different location, meaning different batches typically have different distributions of votes. Additionally, we cannot assume that different batches have similar probabilities of miscounting votes. If, for example, the initial vote count, which produces the reported results, is done at different locations for different batches, a malicious or faulty vote counter could produce many counting errors within the same batch. For these reasons, batch-level RLAs are considered more difficult than single-ballot based RLAs, and typically require a larger number of ballots to audit the same results.

One critical detail which is assumed to be known is the number of ballots contained in each batch. Without this knowledge, it would always be possible for a single unaudited batch to contain an extremely large number of ballots which would change the election winners. It is generally sufficient, however, to assume knowledge of an upper bound on number of ballots in each batch, at the cost of a certain decrease in the audit's efficiency. This could be done by imagining that in the reported tallies, each batch has a certain number of extra invalid ballots, and recalculating the batch totals accordingly. During the audit, if a batch has less than its reported number of ballots, we treat it as if the missing ballots are invalid ones. 

\subsection{Risk Limiting Audits for the Census}
\label{sec:intro RLA census}
Some political systems allocate political power to different regions of a country based on their population, as reported by a country-wide census. For example, in the United States' House of Representatives, each state is allocated a number of representatives in proportion to their population after every census, which is held every 10 years. Such systems are also used, for example, in the Danish Parliament (Folketing)~\cite{denmarkElectorealSystem} , in the Pakistani National Assembly~\cite{pakistanElectoralSystem}, and in the German Bundestag~\cite{bundestagAllocation}. Some nations re-allocate parliament seats to the nation's regions automatically following every census (e.g.\ Germany, USA, Denmark), while others require law amendments for each such update (e.g.\ Argentina~\cite{argentinaConstitution}, Cyprus~\cite{charalambous2008house}). In these systems, an inaccurate population count for one or more of these regions could lead to inadequate allocation of representation. For this reason, it is critical for a country's census to be as accurate as possible.

Many countries today assess the accuracy of their census by conducting an independent mini-census over a small number of randomly chosen households, in a process called a {\em post-enumeration survey} (PES). This survey is then compared to the census to estimate its accuracy. Statistical analyses of the census and the PES provide an estimation on the counting errors of the census regarding different population groups. In most cases, these comparisons do not provide any statistical assurances regarding the accuracy of the census, but help estimate the number of people who were under or overcounted\footnote{Such estimations are performed, for example, as part of the US PES: \url{https://www.census.gov/library/stories/2022/05/2020-census-undercount-overcount-rates-by-state.html}}~\cite{un2010pes}. 

The issue of verifying the original census by comparing it to an independently sampled and independently conducted mini-census is reminiscent of the problem RLAs were designed to solve. We have a reported tally of the number of residents in each region (the census), and we wish to verify that this tally is accurate by taking a small random sample of the households in these regions and re-running a smaller census over them (the PES). These similarities inspire a new use-case for RLAs - verifying that the census leads to the correct allocation of political representatives to federal-states.

\Cref{sec:census RLA} suggests a new RLA method which applies to population censuses. Just like a classical RLA sequentially samples ballots and learns their true content, this census RLA sequentially samples households and learns their ''true" number of residents, according to the PES. It eventually returns a probability $\alpha'$ with which it can approve the results. This returned probability comes with a statistical guarantee regarding the probability with which the census' resulting allocation of representatives to states matches the results of the PES.

\subsection{Our Contributions} \label{sec:our contributions}
The new contributions suggested in this work are: 
\begin{enumerate}
    \item A new and general method for performing {\bf batch-level RLAs}, which can be applied for many election systems, is presented in~\Cref{sec:Batchcomp}. This method, which we call "Batchcomp", is usable for any social choice function that can be audited using the SHANGRLA framework~\cite{stark2020sets}. To the best of our knowledge, a generic method for converting ballot-polling RLAs into batch-level RLAs was suggested only once before~\cite{stark2022alpha}. Our method is based on that conversion, and outperforms it significantly on real data from the election system we tested (the Israeli Knesset elections).
    
    \item An RLA method for the Israeli Knesset (The Israeli parliament) elections, based on the SHANGRLA framework, is presented in~\Cref{sec:Knesset RLA}. This method can be applied as-is to conduct ballot-level RLAs, or be combined with Batchcomp to conduct a batch-level RLA. To test both the Knesset RLA method and Batchcomp, we simulate their combination on the real results of three recent election cycles.   
    
    While our Knesset RLA method can be viewed as a synthesis and adaptation of previous suggestions in the literature, it is the first time RLAs are applied to this setting. Current recounts in Israel elections are done without an evidence-based approach.

    \item A new type of RLA that applies to population censuses. This new type of audit is applicable in nations where political representatives are allocated to the nation's geographical regions based on their population, like the United States, Germany, Cyprus and more. It relies on data that is already collected in many countries, as part of an existing method for assessing the accuracy of population censuses called a ''post enumeration survey" (PES).  To the best of our knowledge, this is the first and only method which verifies the census' resulting allocation of representatives to federal-states with a clear statistical guarantee. The method is presented in~\Cref{sec:census RLA}.
\end{enumerate}

\subsection{Related Work}
\label{sec:related work}
The need for post-election audits rose as early as 1969, when experts discovered that Los Angeles' computerized punch-card vote tabulation system could be secretly altered to rig election results~\cite{saltman1978effective}. Following this discovery, a number of state appointed committees suggested new methods to prevent fraudulent electronic vote tabulation. The Los Angeles county election security committee suggested ''A statistical recount of a random sample of ballots (should) be conducted after each election using
manual, mechanical or electronic devices not used for the specific election". To the best of my knowledge, this is the first proposal for verifying the results of an electronic vote tabulation system using a partial manual recount of ballots.

Before the advent of RLAs, post-election audits mostly consisted of a manual recount of the ballots cast in a number of randomly selected polling places. Early legislation in the United State demanded that a certain fixed percentage of polling places would be extensively audited~\cite{norden2007post}. These audits, however, {\em did not provide any statistical guarantee}, as they did not depend in any way on the margin of the elections. 
Other suggested auditing methods focused mainly on the number of ballots or polling places that would need to be examined to detect a result-altering miscount, as a factor of the election result's margin~\cite{mccarthy2008percentage, rivest2006estimating, simon2006end}. While some of these methods did provide some statistical guarantee, often under certain assumptions, they only involved manually recounting a set number of paper ballots. If the reported winners of the elections could not be approved based on this initial sample, a full manual recount would be required to complete the audit. This is unlike RLAs, which operate sequentially and can therefore overcome an unlucky initial sample.

One exception to the observations made above is a post-election audit suggested by Johnson~\cite{johnson2004election}. This method claims to be risk-limiting, and includes the option to sample additional ballots if the initial sample does not provide sufficient evidence that the reported winner of the election is correct. However, as pointed out previously by Lindeman and Stark~\cite{lindeman2012bravo}, it does not truly fulfill the~\hyperlink{The RLA Guarantee}{RLA guarantee}- the analysis of the risk-limit of this audit contains a critical error. In reality, the risk-limit of the audit can exceed the pre-set parameter as the audit samples additional ballots.

These issues raised the need for a new type of post-election audit, which provides a clear statistical guarantee, while having the ability to avoid a full manual recount even if the first ballots to be audited do not represent the true  distribution of all votes. RLAs, which were first introduced by Philip B.\ Stark in  2008~\cite{stark2008conservative} and received their name shortly after~\cite{stark2009risk}, fulfill both of these conditions. Early works in the field, headed by Lindeman and Stark, focused mainly on plurality elections~\cite{lindeman2012gentle, lindeman2012bravo},
where the candidate who receives the most votes wins the elections. Later works expanded the domain of RLAs to additional social choice functions~\cite{blom2019raire, stark2014verifiable, stark2020sets}. While often mentioned in literature as a tool for confirming the results of an electronic vote tabulations system, RLAs can be used to confirm any type of initial vote count. This can be the tally according to a computerized voting system, the result of an optical scan of paper-backup ballots, or a manual vote count.

Most RLA methods belong to one of three categories, as defined by Lindeman and Stark~\cite{lindeman2012gentle}:
\begin{enumerate}
    \item \textbf{Ballot-comparison audits:} In ballot-comparison audits, the auditor knows which paper-ballot matches which electronic-ballot. This category of audits is the most efficient, since it contains the most information about the election results. However, since they require finding a matching paper-ballot for any randomly selected electronic-ballot, they place a heavier burden on the body running the election. For this reason, they are seldomly used in practice~\cite{lindeman2012gentle, lindeman2018next}.
    
    \item \textbf{Ballot-polling audits:} In ballot-polling audits, a single paper-ballot can be sampled and examined, but it does not need to be matched to its corresponding electronic-ballot. This category of audits appears to be the most popular in practice~\cite{StateLegislatureConf}.
    
    \item \textbf{Batch-level audit:} In batch-level audits, ballots are partitioned into batches. The reported tally of each batch is available, but there’s no guarantee that a paper-ballot in the batch can be connected to its electronic counterpart. As mentioned in previous sections, ballots are usually not randomly partitioned, and different batches are of different sizes. Batch-level audits are generally the least efficient of the three categories, as they require reading more ballots to get a representative sample of the overall vote distribution.
\end{enumerate}

As mentioned, one of the main goals in recent RLA literature is to develop RLAs for additional election systems. Towards this purpose, Stark suggested a general framework called SHANGRLA~\cite{stark2020sets} which aids in adapting existing RLA algorithms to new social choice functions. This method is based on an abstraction called ``sets of half-average nulls'' (SHAN), where given a collection of lists containing unknown non-negative numbers, we wish to test whether the average of all of those lists is greater than $\half$ by querying for the values at different indexes. In the paper introducing SHANGRLA, it is shown that testing whether the reported winners of an election are correct, for many social choice functions, is reducible to the problem of ``sets of half-average nulls". SHANGRLA has opened the way for auditing new social choice functions, and therefore adapting RLAs to new election systems. A more detailed description of this framework is presented in~\Cref{sec:SHANGRLA}. 

Following SHANGRLA, a number of papers attempted to utilize and improve this framework: Blom, Stuckey and Teague~\cite{blom2021assertion} suggested an even more general way of reducing the problem of approving election results into the problem of SHAN, including such a reduction for election systems which use the D'Hondt method. Waudoby-Smith, Stark and Ramdas~\cite{waudby2021rilacs} and Stark~\cite{stark2022alpha} provide new and generally more efficient ways for solving the SHAN problem, thereby improving the efficiency of any SHANGRLA based RLA. Spertus and Stark~\cite{spertus2022sweeter} expanded the SHANGRLA framework to one that supports stratified RLAs, a type of ``split audit" which allows a certain part of the audit to use ballot-comparisons while only relying on ballot-polling for the rest.

One particularly useful algorithm that is based on the SHANGRLA framework is the ALPHA martingale test~\cite{stark2022alpha}. This test provides one of the most efficient solutions for the SHAN problem, meaning that every risk-limit auditing problem which can be reduced to SHAN can be solved with this test. One unique benefit of ALPHA is its relatively simple expansions to the fields of both stratified RLAs and batch-level RLAs. To the best of my knowledge, it provides the first and only batch-level RLA for the SHANGRLA framework, though its efficiency was not previously analyzed, either analytically or by using simulations. The batch RLA algorithm provided in~\Cref{sec:batch RLA} of this work relies on their suggested method.

The batch RLA method provided by ALPHA can convert any existing SHANGRLA based ballot-polling RLA into a batch-level RLA. That method, however, does not utilize the reported tallies of each batch; it only uses the reported winners of the elections, the sizes of the batches and the overall reported tally of the elections. This could be useful if the reported tallies of specific batches are not available, but is likely to be sub-optimal otherwise. One could naively convert their batch-level RLA method into one that uses these reported batch tallies, but any such simple conversion I could think of turned out to be less efficient than their original ALPHA's batch method, at least when simulated on the Israeli Knesset elections. This observation inspired our work towards new batch-level RLAs.

Since the new batch-level RLA method suggested in this work is based on SHANGRLA and ALPHA, it requires some understanding of these two works. The goal of the following two subsections is to provide all necessary information regarding them. We begin by presenting the SHANGRLA framework, and follow by showing how the ALPHA martingale test can be used to audit any election system which has a SHANGRLA-style reduction.

\subsubsection{The SHANGRLA Framework}
\label{sec:SHANGRLA}
As mentioned previously, one popular way of designing an RLA is the SHANGRLA framework~\cite{stark2020sets}. This framework is based on a reduction of the problem of verifying the election result to another problem, called  ``sets of half average nulls". The exact reduction is dependant on the social choice function used in the elections. Once a reduction for some specific election system is found, a number of existing algorithms~\cite{stark2022alpha, stark2020sets, waudby2021rilacs} for the ``sets of half average nulls" problem can be used to perform an RLA on that system.  

\paragraph*{Sets of Half-Average Nulls}
In this setting we have $\ell$ lists, each containing $n$ unknown entries of non-negative numbers. We denote the values in these lists as:
\begin{gather*}
    [x^1_1, x^1_2, ..., x^1_n]\\
    \vdots \\
    [x^\ell_1,x^\ell_2,...,x^\ell_n],
\end{gather*}
where we are guaranteed that for all $1 \leq i \leq \ell$ and $1 \leq j \leq n$ we have $x^j_i \geq 0$. The goal of an algorithm for this problem to determine w.h.p.\ (up to a pre-set parameter) whether the average of {\em all} of these lists is above $\half$. Meaning, to determine whether for every $j\in[\ell]$ we have:
$$    
    \frac{1}{n} \sum_{i=1}^n{x_i^j} > \half.
$$

The algorithm has query access to the values in the lists, where each query returns the values at some specified index in all lists.  Meaning, if the algorithm queries for index $i$, it learns the values of $x^1_i, x^2_i,...,x^\ell_i$.

The efficiency of such an algorithm for a specific input and a parameter $0\leq \alpha \leq 1$  is measured by the expected number of queries it performs to determine whether an input is a yes-instance (e.g. 
all lists have an average greater than $\half$), w.p.\ of at least $1-\alpha$. Typically, we wish that an algorithm would be as efficient as possible on yes-instances, but we do not care about its efficiency on no-instances.

For the purposes of RLAs, we are interested in randomized and adaptive algorithms for this setting. Such an algorithm can query for indexes sequentially, and decide after each query whether to query again, or to stop and declare that all of the list-averages are at greater than $\half$. 

A typical algorithm for this problem keeps $\ell$ p-values, each corresponding to a different list. Each of these values represents the probability of obtaining the query results we previously received if the average of its corresponding list is at most $\half$. The algorithm then queries for random indexes iteratively, where after each query it updates the p-values based on the values it learns. If all p-values are below $\alpha$ simultaneously, it decides the average of all lists is greater than $\half$. Otherwise, it queries for another random index.

\paragraph*{Reduction From the Problem of Approving Election Results}
Let $\mathcal{C}$ be the set of all possible ballots a single voter may cast, including the option to cast an invalid ballot. For example, in standard parliamentary elections (such as the elections for the Knesset), $\mathcal{C}$ includes all running parties, plus an invalid ballot. In ranked choice voting, $\mathcal{C}$ is the set of all permutations over all subsets of candidates, plus an invalid ballot. Let $B\in \mathcal{C}^n$ be the list of all ballots in an election with $n$ voters. For simplicity, we assume that the entries of $B$ are given in random order, and denote these ballots as $b_1,b_2,...,b_n$.

Given the reported winners of the elections, we wish to reduce the problem of finding whether these reported winners are the true winners, to the aforementioned problem of sets of half-average nulls. In the SHANGRLA framework, this reduction is done
by finding $\ell$ functions called {\em assorters}:
\begin{definition}
    Let $\mathcal{C}$ be the set of ballots a single voter may cast in some election system. A set of functions:
    $a_1,a_2,...,a_\ell\colon  C \rightarrow [0,\infty )$
    are \textbf{assorters} for the election system if they fulfill the condition:
    The reported winners are the true winners iff for every $ k\in[\ell]$ we have:
    $$
        \frac{1}{n}\sum_{i=1}^n{a_k(b_i)} > \half.
    $$
    These $\ell$ inequalities are referred to as the \textbf{assertions} of the audit.
\end{definition}

If we find such functions $a_1,...,a_\ell$, then an RLA could be performed by solving the SHAN problem on the following $\ell$ lists:
\begin{gather*}
    [a_1(b_1),a_1(b_2),...,a_1(b_n)]\\
    \vdots \\
    [a_\ell(b_1),a_\ell(b_2),...,a_\ell(b_n)],
\end{gather*}
where we query for an index by sampling the corresponding ballot and calculating $a_1,a_2,...,a_\ell$ over that ballot. Recall that when approving the reported election winners,  we wish to minimize the query complexity on inputs where the reported winners are correct. If the reported winners are not correct, a full recount would be in order anyway, so we would not mind it if the audit counts many (or potentially all) ballots to discover so. 

\paragraph*{Example - Plurality Elections}
Say we wish to audit a plurality election between two candidates, Alice and Bob, where the candidate who receives more votes wins. When we only have two running candidates and no invalid ballots, this reduces to a simple majority election.
If Alice reportedly won the elections, we could audit them using the SHANGRLA framework using a single assorter:
$$
    a(b) = 
    \begin{cases}
        1 & \text{if $b$ is for Alice} \\
        0 & \text{if $b$ is for Bob} \\
        \half & \text{if $b$ is invalid}
    \end{cases}
$$
and $\frac{1}{n} \sum_{b\in B}{a(b)} > \half$ iff Alice got more votes than Bob.

If we have more than 2 candidates, we could add one more similar assorter for every reportedly losing candidate $c$. This assorter has a mean of $\half$ or more iff the reportedly losing candidate $c$ receives less votes than Alice (the reported winner):
$$
    a_c(b) = 
    \begin{cases}
        1 & \text{if $b$ is for Alice} \\
        0 & \text{if $b$ is for $c$} \\
        \half & \text{if $b$ is invalid}
    \end{cases}
$$
Verifying that all such assorters have a mean greater than $\half$ using an algorithm for the SHAN problem is equivalent to verifying that Alice received more votes than all reportedly losing candidates, making Alice the true winner of the elections.

\subsubsection{Finding the Correct Assertions} \label{sec: finding assorters}
In the example above, finding the correct set of assertions and assorters is relatively simple. For other election systems, which use more complicated social choice functions, verifying the correctness of the election winners can sometimes be reduced to verifying a set of linear inequalities, but it is not  immediately clear how to reduce them to assertions of the form $\frac{1}{|B|}\sum_{b\in B}{a(b)}>\half$. For such cases, Blom et al.~\cite{blom2021assertion} suggests a generic solution, by reducing the problem of verifying that a set of linear inequalities that depend on the various vote tallies are all true to the problem of verifying that a set of assorters all have a mean greater than $\half$. This section explains this reduction.

Say we have $\ell$ inequalities that we wish to confirm, each of the form:
\begin{align} \label{eq: lin ineq form}
    \sum_{c\in \mathcal{C}}{\beta_c v^{true}(c)} > d,
\end{align}
where $v^{true}(c)$ is the number of cast ballots of of type $c$ according to the true results, and $d$ and $\beta_c$ (for each $c\in\mathcal{C}$) are constants. We wish to convert each inequality in the form of~\eqref{eq: lin ineq form} to an assertion in SHANGRLA form:
\begin{align} \label{eq: shangrla form}
    \frac{1}{|B|} \sum_{b\in B}{a(b)} > \half,
\end{align}
where $B$ is the list of all paper-backup ballots in the election and $a$ is a non-negative function. Meaning, given~\eqref{eq: lin ineq form}, we wish to find a function $a\colon C\rightarrow [0,\infty)$ such that~\eqref{eq: shangrla form} is equivalent to~\eqref{eq: lin ineq form}. As Blom et al.\ suggest, this is achieved by defining: 
\begin{align}
    a(b) := -\frac{\beta_b - z}{2\left(z-\frac{d}{|B|}\right)},
\end{align}
where $z:= min_{c\in\mathcal{C}}\left\{\beta_c\right\}$. The value $\beta_b$ here is the coefficient of the type of ballot $b$ is in~\eqref{eq: lin ineq form}. Using this definition for $a$, we have it that inequality~\eqref{eq: lin ineq form} is true iff~\eqref{eq: shangrla form} is true, and $a$ is a non-negative function, as required.

Note that this assorter is valid as long as $z-d/|B|<0$. Otherwise, it may return negative values. As explained by Blom et al., having $z-d/|B|\geq 0$ would indicate that~\eqref{eq: lin ineq form} is either always false or always true, for any distribution of votes. Thus, the assorters this method generates are non-negative in all non-trivial cases.

Given a set of inequalities as in~\eqref{eq: lin ineq form}, we can use this definition to create one SHANGRLA assertion (as in~\eqref{eq: shangrla form}) per inequality. The set of these SHANGRLA assertions are all true iff the set of the original inequalities are all true.

\subsubsection{The ALPHA Martingale Test}
\label{sec:alpha test}
This section explains the ALPHA martingale test~\cite{stark2022alpha} solution for the set of half-average nulls problem. Using the reduction described in~\Cref{sec:SHANGRLA}, this algorithm can be used to perform RLAs. For brevity, This description details how this algorithm operates on the problem of approving election results directly. This means that instead of writing $x^i_k$, as defined in the SHAN problem, we use $a_k(b_i)$, which is the the $i$th value in list number $k$ in the reduction described in~\Cref{sec:SHANGRLA}. The version described here relies on sampling ballots without replacement.

Before presenting the full algorithm, we provide a high level description of its operation- the ALPHA martingale test operates by keeping $\ell$ variables $T_1,...,T_\ell$, each representing the multiplicative inverse of a p-value for the hypothesis that a certain list has an average greater than $\half$. The test then queries sequentially for random paper-backup ballots, without replacement, where after each ballot it updates these $k$ variables. If at any point a statistic $T_k$ surpasses the threshold $\frac{1}{\alpha}$, it means that we have sufficient evidence that the mean of its corresponding assorter $a_k$ over all ballots is greater than $\half$. If after a certain query, all of $T_1,...,T_\ell$ have surpassed $\frac{1}{\alpha}$ at some point during the audit, then the reported winners of the elections are approved.

After each queried paper-backup ballots $b_i$, the algorithm updates the statistic $T_k$ for every $k\in[\ell]$. This update is performed by comparing $a_k(b_i)$ to the following values, which are set before $b_i$ is revealed:
\begin{enumerate}
    \item $\mu_k$: The mean value of $a_k$ over all  ballots that have yet to be audited, given that the mean of $a_k$ over all ballots is $\half$. Recall that if the mean of $a_k$ over all ballots is at most $\half$, then the reported winners of the elections are wrong, which is the case the algorithm wishes to detect. This means that if at some point during the audit, we sample a ballot $b$ with $a_k(b)\leq \mu_k$, it provides evidence that the reported winners of the elections are less likely to be correct, and vice-versa.
    
    \item $\eta_k$: A guess for what we would expect $a_k(b_i)$ to be based on the reported results and the ballots previously queried. This guess can be made in several ways while maintaining the algorithm's correctness. One reasonable way to do so is to set $\eta_k$ to be the mean of $a_k$ over ballots that have yet to be audited, assuming that the reported tally is completely accurate.   The audit becomes more efficient, meaning less ballots need to be examined, the more accurate this guess is.
    
    \item $u_k$: In the paper presenting ALPHA, $u_k$ was defined as the maximal value $a_k$ may return. In reality, the ALPHA martingale test is risk-limiting even for other choices of $u_k$, as long as the inequality $\mu_k<\eta_k<u_k$ is always maintained. For our purposes, $u_k$ can be thought of as a guess for whether the next sampled ballot would indicate that assertion $k$ is more or less likely to be true. If the next ballot to be sampled increases our confidence that the assertion is true, the audit is more efficient when $u_k$ is large, and vice-versa. 
\end{enumerate}
After each query, the test updates $T_k$ according to $a_k(b_i), \mu_k, \eta_k$ and $u_k$. If $T_k>\frac{1}{\alpha}$, it concludes that $\frac{1}{n}\sum_{b\in B}{a_k(b)}> \half$. Otherwise, it updates $\mu_k, \eta_k$ and $u_k$ in preparation for the next query.

If after querying for some ballot $b_i$ we have $a_k(b_i)\leq \mu_k$, then $T_k$ would shrink - indicating that it's now less likely that $\frac{1}{n}\sum_{b\in B}{a_k(b)}> \half$. Otherwise, if $a_k(b_i) > \mu_k$, then $T_k$ will increase. The magnitude with which $T_k$ increases depends on $a_k(b_i)$ and $\eta_k$. $T_k$ grows more significantly when $a_k(b_i)$ is large and when $a_k(b_i)$ is close to $\eta_k$. For this reason, we set $\eta_k$ to be the best guess we can make for the value $a_k$ would return on the next ballot we sample. 

The variable $u_k$, controls how stable $T_k$ is. Meaning, how substantially $T_k$ changes per ballot. Choosing a larger $u_k$ causes $T_k$ to be more stable, meaning that the magnitude of its change based on a single ballot is smaller. Choosing a smaller $u_k$ increases that magnitude and therefore raises the variance of the audit - it could cause it to finish earlier, since it allows $T_k$ to grow more substantially per ballot, but might slow it down or potentially cause it to read all ballots, if the order in which we sample ballots is ``unlucky".  

The algorithm presented here is a slightly altered version of the one presented in the original paper. The exact differences are discussed at the end of this section. 

\paragraph*{ALPHA Martingale Test Algorithm}
Let the inputs to the algorithm be ballots $B=(b_1,b_2,...,b_n)$, which are given to us in random order, and assorters $a_1,...,a_l:C\rightarrow [0,\infty)$ where $C$ is the set of all possible ballots a voter can cast. Recall that we assume that the reported winners of the elections are correct iff for all $k\in[\ell]$:
$$    
    \frac{1}{n}\sum_{i=1}^n{a_k(b_i)} > \half.
$$

The description below initializes $\eta_k$ to be the mean of $a_k$ over all ballots according to the reported results, and initializes $u_k$ to be the maximal value $a_k$ can return. Other initialization and update rules for $\eta_k$ and $u_k$ are also valid (the algorithm would still fulfill the~\hyperlink{The RLA Guarantee}{RLA guarantee}), as long as we always have $u_k>\eta_k>\mu_k$. The algorithm operates as follows: 

\begin{enumerate}
    \item \textbf{Initialization}
    \begin{enumerate}[label*=\arabic*.]
        \item Initialize $\mathcal{K} = [\ell]$. During the test, an index is removed from $\mathcal{K}$ whenever we have sufficient evidence that its corresponding assertion is correct.
        \item For each $k \in \mathcal{K}$ initialize: 
        \begin{itemize}
            \item $T_k:=1$.
            \item $\mu_k := \half$.
            \item $u_k:=\max_{b\in \mathcal{C}}\{a_k(b)\}$.
            \item $\eta_k:= a^{rep}_k(B)$, where $a^{rep}_k(B)$ is $\frac{1}{n}\sum^{n}_{i=1}a_k(b_i)$ given that the reported results are completely accurate.
        \end{itemize} 
    \end{enumerate}
    \item \textbf{Auditing Stage:} For each $i\in[n]$:
    \begin{enumerate}[label*=\arabic*.]
        \item \label{alpha step3} Sample the next paper backup-ballot $b_i$ and read it.
        \item For each $k\in \mathcal{K}$, update $T_k$:
        $$
            T_k \leftarrow T_k \left( \frac{a_k(b_i)}{\mu_k} \frac{\eta_k -\mu_k}{u_k - \mu_k}  + \frac{u_k - \eta_k}{u_k - \mu_k}\right)
        $$ \label{alpha t update}
        \item For each $k\in \mathcal{K}$, if $T_k > \frac{1}{\alpha}$, remove $k$ from $\mathcal{K}$. This means we have sufficient evidence that the assertion $\frac{1}{|B|}\sum_{b\in B}a_k(b) > \half$ is true.
        \item For each $k\in \mathcal{K}$ update $\mu_k, \eta_k$ and $u_k$:
        \begin{itemize}
            \item $\mu_k  \leftarrow \frac{\half n - \sum_{j=1}^{i}{a_k(b_j)}}{n - i}$
            \item $\eta_k \leftarrow max\left\{\mu_k + \epsilon, \frac{a_k^{rep}(B) - \sum_{j=1}^{i}{a_k(b_j)}}{n-i}\right\}$
            \item $u_k \leftarrow max\{u_k, \eta_k + \epsilon\}$ 
        \end{itemize}
        Where $a^{rep}_k(B)$ is $\frac{1}{n}\sum^{n}_{i=1}a_k(b_i)$ given that the reported results are completely accurate, and $\epsilon>0$ is some very small positive meant to ensure that $\mu_k < \eta_k < u_k$.
        \item \label{alpha step7} if $\mu_k<0$, the $k$th assertion is necessarily true, so remove $k$ from $\mathcal{K}$.
        \item If $\mathcal{K}=\emptyset$, approve the reported winners and finish the audit.
    \end{enumerate}
    \item \textbf{Output:} If the audit hasn't approved the reported winners yet, we recounted all ballots and the true winners are known.
\end{enumerate}
Note that after each iteration, we define $\mu_k$ to be the mean of $a_k$ over the remaining ballots, if the mean of $a_k$ over all ballots was $\half$. Conversely, $\eta_k$ is the mean of $a_k$ over the remaining ballots, if the reported tally is correct. 

\begin{theorem} \label{thm: alpha martingale test}
    For any election system that can be audited using the SHANGRLA framework and for any $0 \leq \alpha \leq 1$, if the reported winners of the elections are wrong, then the ALPHA martingale test will approves the results with probability of at most $\alpha$.
\end{theorem}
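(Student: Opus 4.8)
The plan is to isolate a single violated assertion and turn its statistic $T_k$ into a nonnegative supermartingale, then apply Ville's maximal inequality. Suppose the reported winners are wrong. By the defining ``iff'' property of the assorters $a_1,\dots,a_\ell$, there is at least one index $k^\ast\in[\ell]$ whose assertion fails, i.e.\ $\frac1n\sum_{i=1}^n a_{k^\ast}(b_i)\le\half$. The audit approves the reported winners only after every $T_k$ — in particular $T_{k^\ast}$ — has at some point exceeded $\frac1\alpha$, since that is exactly the event that causes $k^\ast$ to be removed from $\mathcal{K}$. So it suffices to bound
$$\Pr\!\left[\exists\, i\in\{0,1,\dots,n\}:\ T_{k^\ast,i}\ge \tfrac1\alpha\right]\le\alpha,$$
where $T_{k^\ast,i}$ is the value of the statistic after the $i$-th ballot is processed, $T_{k^\ast,0}=1$, and for definiteness we imagine continuing to apply the update rule to $T_{k^\ast}$ even past the point where $k^\ast$ would be dropped.

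Next I would set up the filtration $\mathcal{F}_i=\sigma(b_1,\dots,b_i)$, generated by the ballots in their (uniformly random) draw order. The quantities $\mu_{k^\ast},\eta_{k^\ast},u_{k^\ast}$ used to process $b_i$ are functions of $b_1,\dots,b_{i-1}$, hence $\mathcal{F}_{i-1}$-measurable, as is $T_{k^\ast,i-1}$. Writing the update as $T_{k^\ast,i}=T_{k^\ast,i-1}\cdot g_i\big(a_{k^\ast}(b_i)\big)$ with the affine map
$$g_i(x)=\frac{x}{\mu_{k^\ast}}\cdot\frac{\eta_{k^\ast}-\mu_{k^\ast}}{u_{k^\ast}-\mu_{k^\ast}}+\frac{u_{k^\ast}-\eta_{k^\ast}}{u_{k^\ast}-\mu_{k^\ast}},$$
one checks two elementary facts: $g_i(\mu_{k^\ast})=1$, and — because the update rule always maintains $0<\mu_{k^\ast}<\eta_{k^\ast}<u_{k^\ast}$ for the violated assertion (if $\mu_{k^\ast}\le 0$ then $\sum_{j\le i-1}a_{k^\ast}(b_j)\ge\half n$, forcing the assertion to be true, contradicting that it is violated) — the slope of $g_i$ is strictly positive and $g_i(0)\ge0$, so $g_i$ is nonnegative and nondecreasing on $[0,\infty)$, in particular on the range of $a_{k^\ast}$.

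The crux is the conditional-expectation bound. Under sampling without replacement, $\bbE[a_{k^\ast}(b_i)\mid\mathcal{F}_{i-1}]$ is the average of $a_{k^\ast}$ over the $n-i+1$ not-yet-drawn ballots, i.e.\ $\frac{\sum_{j=1}^n a_{k^\ast}(b_j)-\sum_{j=1}^{i-1}a_{k^\ast}(b_j)}{n-i+1}$; since $k^\ast$ is violated, $\sum_{j=1}^n a_{k^\ast}(b_j)\le\half n$, so this conditional mean is at most $\frac{\half n-\sum_{j=1}^{i-1}a_{k^\ast}(b_j)}{n-i+1}=\mu_{k^\ast}$ — which is precisely why $\mu_{k^\ast}$ is defined that way. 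Using that $g_i$ is affine with $\mathcal{F}_{i-1}$-measurable coefficients, nondecreasing, and nonnegative:
$$\bbE\!\left[T_{k^\ast,i}\mid\mathcal{F}_{i-1}\right]=T_{k^\ast,i-1}\,g_i\!\big(\bbE[a_{k^\ast}(b_i)\mid\mathcal{F}_{i-1}]\big)\le T_{k^\ast,i-1}\,g_i(\mu_{k^\ast})=T_{k^\ast,i-1}.$$
Hence $(T_{k^\ast,i})_{i=0}^n$ is a nonnegative supermartingale with $T_{k^\ast,0}=1$, and Ville's inequality (equivalently Doob's maximal inequality) gives $\Pr[\sup_{0\le i\le n}T_{k^\ast,i}\ge\frac1\alpha]\le\alpha$, completing the argument.

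I expect the main obstacle to be the careful handling of the conditional-expectation step and its degenerate cases rather than the supermartingale bookkeeping: justifying the without-replacement conditional mean cleanly, verifying that the flexible choices permitted for $\eta_{k^\ast},u_{k^\ast}$ together with the $\epsilon$-adjustments never break $0<\mu_{k^\ast}<\eta_{k^\ast}<u_{k^\ast}$, and disposing of the boundary case $\mu_{k^\ast}=0$ (where $g_i$ is undefined) — there one argues that all remaining values of $a_{k^\ast}$ are then forced to $0$, so $T_{k^\ast}$ does not move, or one appeals to the algorithm's convention of removing such an index. Everything else reduces to routine checks of affineness and monotonicity of $g_i$.
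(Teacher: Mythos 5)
Your proposal is correct and follows essentially the same route as the paper's proof: isolate a violated assertion, show its statistic $T$ is a non-negative supermartingale because the without-replacement conditional mean of the assorter over unsampled ballots is at most $\mu$, and conclude with Ville's inequality. Your treatment is in fact slightly more careful than the paper's (explicit filtration, the $\le\half$ rather than $<\half$ case, and the degenerate $\mu=0$ boundary), but the core argument is identical.
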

\begin{proof}
Fix $\alpha \geq 0$, a list of backup paper-ballots $B$ whose tally is the true results of the election and some wrongful reported tally regarding them which leads to the wrong winners. Let the set of assorters used to audit the ballots be $a_1, \ldots, a_\ell$ . Since we assume that the set of reported winners of the elections is wrong, there must be some $a_k$ with $\frac{1}{n}\sum_{i=1}^n{a_k(b_i)}<\half$. Assume w.l.o.g.\ that it is $a_1$.

\begin{assumption} \label{ass: assertion 1 is wrong}
    $\frac{1}{n}\sum_{i=1}^n{a_1(b_i)}<\half$
\end{assumption}

To prove that the algorithm fulfills the~\hyperlink{The RLA Guarantee}{RLA guarantee}, we need to show that the test approves the reported winners w.p.\ of at most $\alpha$. It suffices to show that the algorithm approves that the mean of $a_1$ over all ballots is greater than $\half$ w.p.\ of at most $\alpha$. Meaning, that the probability of 1 getting removed from the set $\mathcal{K}$ is at most $\alpha$.

The index 1 cannot be removed from $\mathcal{K}$ in step~\ref{alpha step7}, since that would mean that at some point during the audit we had:
$$
    \mu_k < 0 \Longrightarrow \half n - \sum_{j=1}^{i}{a_1(b_j)} < 0 \Longrightarrow \half < \frac{1}{n} \sum_{j=1}^{i}{a_1(b_j)} \leq  \frac{1}{n} \sum_{i=1}^{n}{a_1(b_j)},
$$
contradicting our assumption. This means that the algorithm approves the results only if at some point, $T_1>\frac{1}{\alpha}$. 

Let $b_1,...,b_n$ be random variables which represent the ballots that are sampled by the audit, in the order in which they are sampled. Each of these values is a random variable which depends on $B$ and on the randomness of the audit. Denote by $T^0_1, T^1_1, \ldots, T^n_1$ the values of $T_1$ after each sampled ballot, where $T^0_1$ is its initial value. Similarly, let $\mu^1_1, \mu^2_1,...,\mu^n_1$, and $\eta^1_1, \eta^2_1,...,\eta^n_1$ and $u^1_1, u^1_1,...,u^n_1$ be the values that $\mu_1$, $\eta_1$ and $u_1$ have, respectively, when sampling each ballot. By their definition, each $T_1^i$ is a random variable whose value is determined by $b_1, ..., b_i$, and each of $u^i_1, \mu^i_1$ and $\eta^i_1$ are determined only by  $b_1, ..., b_{i-1}$, and not by $b_i$.

With these definitions in mind, we use Ville's inequality~\cite{durrett2019probability} (also referred to as Doob's inequality), to show that:
$$
    \Pr(\exists i\in[n],\ T^i_1> \frac{1}{\alpha}) \leq \alpha,
$$
thereby proving that the algorithm fulfills the~\hyperlink{The RLA Guarantee}{RLA guarantee}.
\begin{center}

    \fbox{\begin{minipage}{40em}
      \begin{center} \label{Ville's inequality}
        \textbf{Ville's Inequality~\cite{durrett2019probability}}
      \end{center}
      If $X_1,X_2,...,X_n$ is a non-negative supermartingale, meaning that for any $i\in[n]$ we have $\Pr[X_i\geq 0]=1$ and
          $\bbE[X_i|X_{1},X_{2},...,X_{i-1}] \leq X_{i-1}$, then for any $\alpha>0$:
        $$
            \Pr\left[\max_{i\in[n]}\{X_i\}>\frac{1}{\alpha} \right] \leq \alpha \cdot \bbE[X_1].
        $$
    \end{minipage}}
\end{center}

To use this inequality, we need to show that $T^1_1,...,T^n_1$ is a non-negative supermartingale, which we do in the following two claims:
\begin{claim} \label{claim: t non negative}
    $T^i_1$ is non-negative for every $i\in [n]$.  
\end{claim}
\begin{proof}
    Fix $i\in[n]$ and observe the update rule of $T_1$ in step~\ref{alpha t update}. Since we have $0 \leq \mu_1^i < \eta_1^i < u_1^i$, and since $a_1$ is a non-negative function, we have:
    $$
        T^i_1 = T^{i-1}_1 \left( \underbrace{\frac{a_1(b_i)}{\mu^i_1}}_{\geq 0} \underbrace{\frac{\eta^i_1 -\mu^i_1}{u^i_1 - \mu^i_1}}_{>0}  + \underbrace{\frac{u^i_1 - \eta^i_1}{u^i_1 - \mu^i_1}}_{>0}\right).
    $$
    And since $T^0_1=1$, by induction, $T^i_1$ is non-negative, concluding the proof of this claim.
\end{proof}

\begin{claim} \label{claim: t non increasing}
    For any $i\in[n]$, we have $\bbE[T_1^i\,|\,T_1^1,...,T_1^{i-1}] \leq T_1^{i-1}$.
\end{claim}    
\begin{proof}
    First, note that since we are under the assumption that $\frac{1}{n}\sum_{j=1}^n{a_1(b_j)} < \half$, we have:
    $$
        \bbE[a_1(b_i)] \leq \mu^i_1,
    $$
    where the expectation is over the choice of the $i$th ballot that is audited, $b_i$.

    Now, observe that fixing the first $i-1$ ballots that were audited $b_1,...,b_{i-1}$ fixes $T_1^1,...,T^i_1$, and vice-versa, since $T_1^1,...,T^i_1$ is deterministically determined by the ballots that are audited. Thus, we have it that:
    \begin{align} \label{eq: t and b}
        \bbE[T_1^i\,|\,T_1^1,...,T_1^{i-1}] &=\bbE[T_1^i\,|\,b_1,...,b_{i-1}]. \\
        \intertext{Once $b_1,...,b_{i-1}$ are fixed, the values $\mu^i_1,\eta^i_1$ and $u^i_1$ are also fixed, and can be calculated by simulating the audit on the first $i-1$ ballots. Continuing from~\eqref{eq: t and b}, by this the update rule of $T_1$ in step~\ref{alpha t update}, we have:}
         &=T^{i-1}_1 \left( \frac{\bbE[a_1(b_i)]}{\mu^i_1} \frac{\eta^i_1 -\mu^i_1}{u^i_1 - \mu^i_1}  + \frac{u^i_1 - \eta^i_1}{u^i_1 - \mu^i_1}\right) , \nonumber \\
         \intertext{and since $\bbE[a_1(b_i)] \leq \mu^i_1$:}
        &\leq T^{i-1}_1 \left( \frac{\eta^i_1 -\mu^i_1}{u^i_1 - \mu^i_1}  + \frac{u^i_1 - \eta^i_1}{u^i_1 - \mu^i_1}\right) \nonumber \\[1.5ex]
        &= T^{i-1}_1 \left( \frac{u^i_1 -\mu^i_1}{u^i_1 - \mu^i_1}\right) \nonumber \\[1.5ex]
        &=T^{i-1}_1, \nonumber
    \end{align}
    proving the claim.
    \end{proof}
    By these two claims, $T^1_1,..., T^n_1$ is a non-negative supermartingale. This concludes the proof, since Ville's inequality states that for any $\alpha \in[0,1]$ we have:
    \begin{equation*}
        \Pr\left[ \max_{i\in[n]}\{T^i_1\} > \frac{1}{\alpha} \right] \leq \alpha \cdot \bbE[T^0_1] = \alpha
    \end{equation*}
    and therefore the probability that the algorithm wrongfully approves the reported election winners is at most $\alpha$.
\end{proof}
    
\paragraph*{Changes from Original Algorithm and Proof}
In the original ALPHA martingale test, $u_k$ is defined as the maximal value that $a_k$ may return. In the definition above, $u_k$ is seen as a variable which controls the magnitude with which $T_k$ changes. The proof presented by Stark~\cite{stark2022alpha} doesn't actually require that $u_k = max_{b\in \mathcal{C}}{a_k(b)}$, but only requires to have $u_k>\eta_k>\mu_k$ in every iteration. 

Additionally, in the original statement of the SHAN problem, we assume to know an upper bound on the values in each list. In the statement in~\Cref{sec:SHANGRLA} of this work, this assumption is omitted. 

For most audits, this distinction does not matter, since the assorters frequently return the maximal value in their image. However, for batch-level RLAs, this distinction allows for more efficient audits, as will be explained in later sections. If it was not for this alternate definition of $u_k$, then the efficiency of the Batchcomp algorithm presented in~\Cref{sec:batch RLA} would be significantly reduced.

\subsection{Road Map}
\label{sec:Road Map}
\Cref{sec:batch RLA} formally defines the batch-level RLA model, and presents a new, general method for performing batch-level RLAs. This method is usable for every social choice function which could be reduced to SHANGRLA assertions as described in~\Cref{sec:SHANGRLA}.

While this method is generic and could be used to convert many existing RLAs to batch-level RLAs,~\Cref{sec:Knesset RLA} focuses on its application for the Israeli Knesset elections. In the Israeli Knesset elections, 120 seats are allocated to different parties using party-list proportional representation, according D'Hondt method (also known as the Jefferson method) with a few added caveats. The seats of each party are then allocated to specific party-members according to a ranked list which is submitted by the parties ahead of the elections. Simulated results of the Batchcomp method compared to the ALPHA-batch method are shown in~\Cref{sec:batch RLA simulations}.

Finally,~\Cref{sec:census RLA} presents a new type of RLA which verifies that a country's population census results in correct allocation of political power to different regions within that country, as is done e.g.\ in the US.~\Cref{sec: census RLA sim} shows simulated results for the application of this method on the census and house of representatives of Cyprus.

\subsection{Acknowledgments}
\label{sec:Acknowledgments}
I would like to express my deepest gratitude to my supervisor, Moni Naor, for introducing me to the field and providing guidance, feedback and support throughout this process.

I would additionally like to thank my parents and friends, primarily Daniel Shwartz and Nicole Kezlik, for listening to my rumblings and providing some essential feedback, professional or not. 

\newpage
\section{Batch Risk-Limiting Audits}
\label{sec:batch RLA}

\subsection{Preliminaries and Notation}
\label{sec:batch RLA prelims}

\subsubsection{The Batch-level RLA Model}
\label{sec:Batch RLA Model}

In the batch-level RLA model, ballots are partitioned into batches, denoted as $B_1,B_2,...,B_d$. The set of all ballots in the elections, which is the union of these batches, is denoted as $B$. As mentioned previously, we make no assumption on how the ballots were partitioned into batches, but we do assume that their size is known. Our goal, just as before, is to perform an RLA for the reported election winners. However, instead of sampling single ballots, we can now only sample a complete batch, and tally all ballots in it. We assume that we know the reported tally of each batch. Meaning, instead of only knowing the reported tally of all of the votes, we also know the individual reported tally of each batch and can use it during the audit. 

Such a model could be useful for election systems where each polling place tallies its own votes, and the reported winners of the elections (pre-audit) are determined according to the sum of the tallies. If a governing body wishes to audit these results and verify that the reported winners, as calculated by the tally each polling place performed, are accurate, it can use a batch-level RLA to do so. Note that this audit doesn't ensure that all batches were counted accurately. It only verifies that it is unlikely that there is a counting mistake which {\em changes  the winners of the elections}.

\subsection{The Batchcomp RLA}
\label{sec:Batchcomp}
This section describes a generic and efficient way of performing batch-level RLAs, when the results of the elections can be verified using SHANGRLA assertions, as described in~\Cref{sec:SHANGRLA}. This algorithm is original to this work. If an election system has a ballot-level RLA which uses the SHANGRLA framework, this method can be used to audit in the batch-level RLA model. The inspiration for the Batchcomp method comes from another SHANGRLA-based batch RLA suggested in Section 4 of the paper introducing the ALPHA martingale test~\cite{stark2022alpha}, which we refer to as {\em ALPHA-batch}.

The ALPHA-batch method is performed by examining the mean of every assorter over each sampled batch according to its paper-backup ballots. It does not use the reported vote tally of the batches beyond the total number of ballots they contain. The Batchcomp method attempts to improve on ALPHA's efficiency by auditing something slightly different - instead of auditing the mean value of an assorter $a$ over the paper-backup ballots (true results) in a sampled batch, it audits the discrepancy between the mean value $a$ has over a batch according to its reported tally, and the mean value it has over the same batch according to its paper-backup ballots. The values returned by the ALPHA-batch assorters can change drastically from batch to batch, depending on their vote distribution according to the true results. The values the Batchcomp assorters return depend primarily on the accuracy of the reported tally; if two batches with different vote distributions were both counted accurately in the reported results, a Batchcomp assorter will return the same value when applied on each of them. This fact is shown in \Cref{batchcomp assorters advantage}, when discussing the advantages of the Batchcomp assorters.

Recall that before sampling and reading a paper-backup ballots, the ALPHA martingale test guesses the value that each assorter would return on this ballot (this guess is $\eta_k$, for each assorter $a_k$). As explained by Stark~\cite{stark2022alpha}, the audit is more efficient when these guesses are accurate. If each assorter returns a similar value for all batches, as is the case with Batchcomp, then the audit can make guesses which are more accurate. This is the root cause for Batchcomp outperforming ALPHA-batch in the simulations shown in~\Cref{sec:batch RLA simulations}.

\subsubsection{Preliminaries and Notation}
\label{sec:Batchcomp prelims}
    As mentioned previously, the Batchcomp method is applicable for any election system that can be reduced to assertions according to the SHANGRLA framework, as described in~\Cref{sec:SHANGRLA}. Through the rest of~\Cref{sec:batch RLA}, we describe how the Batchcomp method applies to some generic elections where the problem of verifying that the reported winners are correct can be done using SHANGRLA. Fix some elections system, a set of ballots $B$ and a partition of these ballots into batches $B_1,...,B_d$. By assuming that we can verify that the reported winners of the elections are correct using the SHANGRLA framework, we are making the following assumption:
    \begin{assumption}
        Assume we have $\ell$ assorters $a_1,..,a_\ell$ such that the reported winners are true iff for all $k \in [\ell]$:
    $$
         \frac{1}{n}\sum_{b\in B}{a_k(b)} > \half .
    $$
    \end{assumption}
    
   Throughout the following sections, we sometimes abuse notation and apply assorters over entire batches. When doing so, $a_k(B_i)$ is defined as the mean of $a_k$ over all ballots in batch $B_i$:
   \begin{align} \label{eq: assorter on batch}
       a_k(B_i)=\frac{1}{|B_i|} \sum_{b\in B_i}{a_k(b)}.
   \end{align}
   
   Before proceeding, note that for each batch has a reported tally, which we know before the audit begins, and a true tally, which we may only learn during the audit. Therefore, each assorter has a reported and true mean value over each batch, which can be calculated from the reported and true tally, respectively. We denote the reported mean of an assorter $a_k$ over a batch $B_i$ as $a_k^{rep}(B_i)$, and its true mean over that batch as $a_k^{true}(B_i)$.
   
   Using this notation, we now define a new assorter $A_k$ for each original assorter $a_k$. Unlike the original assorter $a_k$, this new assorter can only be applied on batches, and not over single ballots. To differentiate it from regular assorters, we refer to these new assorters as {\em batch-assorters}. During a Batchcomp RLA, the audit uses these new batch-assorters instead of the original ones. 

   \begin{definition} \label{def: Batchcomp assorter}
       Let there be some election system with assorters $a_1,...,a_{\ell}$. For each assorter $a_k$, we define the {\em Batchcomp-assorter} $A_k:C^*\rightarrow [0,\infty)$ as:
   $$
       A_k(B_i) := \half + \frac{M_k + a^{true}_k(B_i) - a^{rep}_k(B_i)}{2(w_k - M_k)}.
   $$
   Where $M_k$ is the reported margin of assorter $a_k$ across all batches:
   $$
       M_k := a^{rep}_k(B) - \half,
   $$
   and $w_k$ is the maximal reported value of $a_k$ across all batches:
   $$
       w_k := \max_{j}\{a^{rep}_k(B_j)\}.
   $$
   \end{definition}
    The denominator in the definition of $A_k$ was chosen such that the minimal value $A_k$ may return is 0, as proven in the following claim.
   \begin{claim}
       For any $k\in[\ell]$, $A_k$ is non-negative.
   \end{claim}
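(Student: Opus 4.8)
The plan is to show that the fraction in the definition of $A_k$ is bounded below by $-\tfrac{1}{2}$, which immediately gives $A_k(B_i)\geq 0$. The first step is to pin down the sign of the denominator $2(w_k-M_k)$. Since $a^{rep}_k(B)=\frac{1}{n}\sum_{b\in B}a_k(b)$ computed from the reported tallies and $B=\bigcup_j B_j$ with $n=\sum_j|B_j|$, the overall reported mean is the size-weighted average $a^{rep}_k(B)=\frac{1}{n}\sum_j |B_j|\,a^{rep}_k(B_j)$ of the batch-level reported means, hence $a^{rep}_k(B)\leq \max_j a^{rep}_k(B_j)=w_k$. Combined with $M_k=a^{rep}_k(B)-\half$, this yields $w_k-M_k\geq \half>0$, so the denominator is strictly positive (in particular $A_k$ is well defined).

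Given a positive denominator, $A_k(B_i)\geq 0$ is equivalent to
$$
M_k + a^{true}_k(B_i) - a^{rep}_k(B_i) \;\geq\; -(w_k-M_k),
$$
which, after cancelling $M_k$ from both sides, simplifies to $a^{true}_k(B_i) \geq a^{rep}_k(B_i)-w_k$. I would then verify this last inequality by bounding each side separately: the left-hand side satisfies $a^{true}_k(B_i)\geq 0$ because $a_k$ maps into $[0,\infty)$ and $a^{true}_k(B_i)$ is an average of such values (per~\eqref{eq: assorter on batch}); the right-hand side satisfies $a^{rep}_k(B_i)-w_k\leq 0$ directly from the definition $w_k=\max_j a^{rep}_k(B_j)\geq a^{rep}_k(B_i)$. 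Chaining $a^{true}_k(B_i)\geq 0\geq a^{rep}_k(B_i)-w_k$ closes the argument.

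There is no real obstacle here — the claim is essentially bookkeeping that the denominator was chosen precisely to make the worst case ($a^{true}_k(B_i)=0$ and $a^{rep}_k(B_i)=w_k$) land exactly at $A_k(B_i)=0$. The only point that deserves a sentence of care is justifying $w_k\geq a^{rep}_k(B)$ (needed for a positive, nonzero denominator), i.e.\ that a weighted average cannot exceed its maximum summand; everything else is substitution and the non-negativity of assorters.
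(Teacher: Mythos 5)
Your proof is correct and follows essentially the same route as the paper's: bound $a^{true}_k(B_i)\geq 0$ by non-negativity of assorters and $a^{rep}_k(B_i)\leq w_k$ by the definition of $w_k$, which forces the fraction to be at least $-\half$. The only addition is your explicit check that $w_k-M_k\geq\half>0$ (the denominator's sign), which the paper leaves implicit here and only notes later; that is a welcome but minor refinement, not a different argument.
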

   \begin{proof}
       Fix an assorter $a_k$ and its Batchcomp counterpart $A_k$. To check minimal value $A_k$ may return, we examine the minimum of $a^{true}$ and the maximum of $a^{rep}$. Since assorters are non-negative, for any batch $B_i$ we have $a^{true}(B_i)\geq 0$, and since we know the reported tallies of the batches before the audit begins, we can calculate the maximum of $a^{rep}_1$ across all batches, $w_k$. Thus, for any batch $B_i$:
       $$
           A_k(B_i)=\half + \frac{M_k + \overbrace{a^{true}_k(B_i)}^{\geq 0} - \overbrace{a^{rep}_k(B_i)}^{\leq w_k}}{2(w_k - M_k)} \geq \half + \frac{M_k- w_k}{2(w_k - M_k)}=0.
       $$
       Concluding this proof
   \end{proof}

    \paragraph*{Advantages of the Batchcomp Assorters} \label{batchcomp assorters advantage}
   
   These batch-assorters have a couple of useful properties for batch-level RLAs. First, they can be used in-place of the original assorters during a batch-level audit. This is because each of them has a normalized mean greater than $\half$ on all batches (normalized according to the size of the batches) iff the mean of its corresponding regular assorter over all ballots is at least $\half$. A proof of this fact is shown in~\Cref{claim: Batchcomp 1}. 
   
   Additionally, these batch-assorters are useful since the values they return only depend on the accuracy of the reported tallies of the batches, and not on the vote distribution within each batch.
   If the reported tallies were calculated properly, we expect only small discrepancies between the reported and true tallies of all batches. Therefore, auditing these discrepancies would reduce the audit's dependence on the order in which batches are sampled. To understand why these batch-assorters only depend on the tally's discrepancy,  
    we can examine the case in which the reported tally of each batch is equal to its true tally. In this case, any batch $B_i$ necessarily has  $a_k^{true}(B_i)=a_k^{rep}(B_i)$, meaning that for every $i\in[d]$:
    $$
        A_k(B_i) = \half + \frac{M_k}{2(w_k - M_k)}.
    $$
    
    Therefore, if the reported tallies of all batches are accurate, then this batch-assorter returns the same value over all batches. As mentioned in~\Cref{sec:Batchcomp}, this makes the audit agnostic to the order in which we sample the batches and also improves its efficiency. 

    \begin{conclusion}
        For any assorter $a_k$ and its Batchcomp conversion to a batch-assorter $A_k$, $A_k$ operates over the batch-level discrepancy between the reported and true results, and has $A_k(B)>\half$ iff $a_k(B)>\half$.
    \end{conclusion}

\paragraph*{Auditing The Batch-Assertions}

To run a batch-level RLA, we use a very similar method to the ALPHA martingale test described in~\Cref{sec:alpha test}, except we sample batches instead of single ballots. In our method, we use our defined batch-assorters $A_1,...,A_\ell$ instead of the original assorters $a_1,...,a_\ell$. Additionally, we need to consider the fact that different batches have different sizes. For this reason, whenever the audit samples a new batch, each batch is chosen with probability that is proportional to its size. 

Running the ALPHA martingale test also requires setting $u_k$ - a variable which ``guesses" whether the next sampled batch $B_i$ will have $a_k(B_i)< \mu_k$ or $a_k(B_i) > \mu_k$. We must always have $u_k>\eta_k$, and any such choice for $u_k$ yields a valid RLA.
Here, since the algorithm uses the batch-assorters $A_1,...,A_\ell$, we denote these variables as $U_1,...,U_\ell$. If we expect to have $a_k(B_i) > \mu_k$ w.h.p.\, then the algorithm is more efficient when $U_k$ is as small. Here, since accurate reported tallies lead to always having $A_k(B_i)>\mu_k$, we are encouraged to set $u_k$ to be as small as possible. This leads us to set these variables to be, for every $k\in[\ell]$:
$$
 U_k = \half + \frac{M_k + \delta}{2(w_k-M_k)},
$$
where $\delta$ is a very small positive. We must have $\delta > 0$, but other than that any choice of $\delta$ is valid. A smaller value for $\delta$ leads to a more efficient audit when all batches have accurate reported tallies, but decreases efficiency when there are large / malicious errors in the reported tally.~\Cref{sec:choosing delta} further examines how $\delta$ should be chosen.

The algorithm operates by sequentially sampling batches of paper-backup ballots, reading them to get their true tally, and calculating the value of each batch-assorter $A_1,...,A_\ell$ over the sampled batches. For each sampled batch $B_i$ and for each batch-assorters $A_k$, $A_k(B_i)$ is compared to 3 variables which are set before the backup ballots in $B_i$ are read - $\mu_k, \eta_k$ and $U_k$. $\mu_k$ is the algorithm's guess for $A_k(B_i)$ given that the reported winners of the elections are wrong. $\eta_k$ is its guess given that the reported winners are correct. $U_k$ essentially guesses whether $A_k(B_i)\leq mu_k$ or not. By comparing $A_k(B_i)$ to these values, the algorithm updates a p-value which represents the risk-limit with which the $k$th assertion can be approved. The algorithm either approves the $k$th assertion if this p-value is below the risk-limit $\alpha$, or decides to sample another batch. For a more extensive explanation regarding the variable's in this algorithm, see $\Cref{sec:alpha test}$.

The full Batchcomp algorithm operates as follows:

\subsubsection{The Batchcomp Algorithm Description}
\label{sec:Batchcomp algorithm description}
    \begin{enumerate}
    \item \textbf{Initialization:}
    \begin{enumerate}[label*=\arabic*.]
        \item Initialize $\mathcal{K} = [\ell]$, which holds the indexes of assertions we have yet to approve.
        \item Initialize $\mathcal{B}^1=(B_1,B_2,...,B_d)$ and $\mathcal{B}^0=\emptyset$. As the algorithm progresses, $\mathcal{B}^0$ holds the batches which were already audited and $\mathcal{B}^1$ the batches that have yet to be audited. 
        \item For each $k\in \mathcal{K}$ initialize:
        \begin{itemize}
        \item $T_k := 1$.
        \item $\mu_k := \half$.
        \item $\eta_k := \half + \frac{M_k}{2(w_k-M_k)}$.
        \item  $U_k := \half + \frac{M_k + \delta}{2(w_k-M_k)}$.~\Cref{sec:choosing delta} examines how to choose $\delta$, but technically any $\delta > 0$ works.
        \end{itemize} 
    \end{enumerate}
    \item \textbf{Auditing Stage:} As long as $\mathcal{B}^1\neq \emptyset$, perform: 
        \begin{enumerate}[label*=\arabic*.]
            \item\label{step4} Sample a batch from $\mathcal{B}^1$ and denote it as $B_i$. Each batch $B_j$ in $\mathcal{B}^1$ is sampled with probability proportional to its size: $\frac{|B_j|}{\sum_{B_t\in \mathcal{B}^1}{|B_t|}}$.
            \item Remove $B_i$ from $\mathcal{B}^1$ and add it to $\mathcal{B}^0$.
            \item \label{Batchcomp t update} For each $k\in K$, update $T_k$:
            $$
                T_k \leftarrow T_k \left(\frac{A_k(B_i)}{\mu_k} \frac{\eta_k - \mu_k}{U_k - \mu_k} + \frac{U_k - \eta_k}{U_k - \mu_k}\right)
            $$
            \item\label{batchcomp approve step} For each $k\in \mathcal{K}$, if $T_k > \frac{1}{\alpha}$, we have sufficient evidence that the $k$th assertion is true, so set $\mathcal{K} = \mathcal{K}\setminus\{k\}$.
            \item \label{batchcomp variable update step}For each $k\in \mathcal{K}$ update $\mu_k, , \eta_k$ and $u_k$:
            \begin{itemize}
                \item $\mu_k  \leftarrow \frac{\half n - \sum_{B_j\in \mathcal{B}^0}{|B_j|A_k(B_j)}}{n - \sum_{B_j\in \mathcal{B}^0}{|B_j|}}$
                \item $\eta_k \leftarrow max\left\{\half + \frac{M_k}{2(w_k-M_k)}, \mu_k + \epsilon\right\}$
                \item $U_k  \leftarrow max\{U_k, \eta_k + \epsilon\}$
            \end{itemize}
            Where $\epsilon$ is some very small positive meant to ensure that $\mu_k < \eta_k < U_k$. We assume these variables are updated according to the order of their listing above.
            \item \label{batchcomp step 2.6} If $\mu_k < 0$, then the $k$th assertion is true, so remove $k$ from $\mathcal{K}$.  
            \item If $\mathcal{K}=\emptyset$, approve the reported winners.
        \end{enumerate}
     \item \textbf{Output:} If the audit hasn't approved yet, it recounted all batches, and the true winners of the elections are known. 
\end{enumerate}

Note that we changed the update rule of $\eta_k$, the variable that represents our guess for what $A_k$ would return over the next batch we sample, compared to the one presented in the ALPHA martingale set. In ALPHA, $\eta_k$ was set to be the mean value of $a_k$ over all ballots that were not audited yet, given that its mean over all ballots is $a^{rep}(B)$. Here, we set it to be the value that $A_k(B_i)$ would return over a batch with an accurate reported vote tally. Using the old update rule here would still be valid, but could lead to an undesired situation where a counting error in one batch leads us to guess that there are counting errors in the opposite direction in other batches. Our new update rule is based on the principle that we do not expect the next audited batch to have counting errors that are skewed in a specific direction.

\label{sec: Batchcomp proof}
\begin{theorem} \label{thm: Batchcomp theorem}
    For any election system that can be audited using the SHANGRLA framework, for any $\alpha \geq 0$ and for any partition of the ballots into batches, if the reported winners of the elections are wrong, then the Batchcomp RLA approves them with probability of at most $\alpha$.
\end{theorem}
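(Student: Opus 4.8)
The plan is to reduce this to the proof of \Cref{thm: alpha martingale test}: the Batchcomp algorithm is structurally the ALPHA martingale test run on the batch-assorters $A_1,\ldots,A_\ell$, with ``ballots'' replaced by ``batches'' drawn \emph{without replacement} with probability proportional to size. So I would fix $\alpha$, batches $B_1,\ldots,B_d$ with true tallies $n=\sum_j|B_j|$, and a wrongful reported tally leading to wrong winners. By the SHANGRLA assumption there is some $k$ with $\frac1n\sum_{b\in B}a_k(b)\le\half$; assume w.l.o.g.\ it is $k=1$. By \Cref{claim: Batchcomp 1} (equivalently, by the direct computation $\frac1n\sum_{j=1}^d|B_j|A_k(B_j)=\half+\frac{a_k(B)-\half}{2(w_k-M_k)}$, which is valid since $w_k\ge a_k^{rep}(B)=M_k+\half>M_k$), this is equivalent to the size-weighted mean of $A_1$ over all batches being at most $\half$, i.e.\ $\sum_{j=1}^d|B_j|A_1(B_j)\le\half n$.

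Next I would argue, exactly as in the ALPHA proof, that index $1$ can never be removed from $\mathcal{K}$ in step~\ref{batchcomp step 2.6}: during the loop $\mathcal{B}^1\neq\emptyset$, so the denominator of $\mu_1$ is positive, and $\mu_1<0$ would give $\half n<\sum_{B_j\in\mathcal{B}^0}|B_j|A_1(B_j)\le\sum_{j=1}^d|B_j|A_1(B_j)$ (the last step since $A_1\ge0$), contradicting the displayed bound. Hence the algorithm approves only if $T_1$ exceeds $\frac1\alpha$ at some iteration, so it suffices to bound $\Pr[\exists i,\ T_1^i>\frac1\alpha]$ by $\alpha$.

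For this I would introduce $T_1^0,T_1^1,\ldots,T_1^d$ (the values of $T_1$ after each sampled batch) together with the associated $\mu_1^i,\eta_1^i,U_1^i$, and verify that $T_1^0,\ldots,T_1^d$ is a non-negative supermartingale with respect to the filtration generated by the sampled batches. Non-negativity is immediate from the update rule in step~\ref{Batchcomp t update} together with $A_1\ge0$ and the invariant $0\le\mu_1^i<\eta_1^i<U_1^i$ (guaranteed by the $\epsilon$-corrections in step~\ref{batchcomp variable update step} and by $\mu_1^i\ge0$, just shown). For the supermartingale property, the crucial identity is
\[
    \bbE[A_1(B_i)\mid B_1,\ldots,B_{i-1}]
    =\frac{\sum_{B_j\in\mathcal{B}^1}|B_j|\,A_1(B_j)}{\sum_{B_j\in\mathcal{B}^1}|B_j|}
    \le\frac{\half n-\sum_{B_j\in\mathcal{B}^0}|B_j|\,A_1(B_j)}{\,n-\sum_{B_j\in\mathcal{B}^0}|B_j|\,}
    =\mu_1^i ,
\]
where the first equality is the size-proportional sampling rule of step~\ref{step4} and the inequality follows from $\sum_{j=1}^d|B_j|A_1(B_j)\le\half n$ by splitting the sum over $\mathcal{B}^0$ and $\mathcal{B}^1$. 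Substituting this bound into the update factor, which is affine and increasing in $A_1(B_i)$, and using $0\le\mu_1^i<\eta_1^i<U_1^i$, yields $\bbE[T_1^i\mid B_1,\ldots,B_{i-1}]\le T_1^{i-1}$ exactly as in \Cref{claim: t non increasing}; passing to the coarser filtration generated by $T_1^1,\ldots,T_1^{i-1}$ preserves the inequality. Ville's inequality with $\bbE[T_1^0]=1$ then gives $\Pr[\max_i T_1^i>\frac1\alpha]\le\alpha$, which is the \hyperlink{The RLA Guarantee}{RLA guarantee}.

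The routine parts — the non-negativity check and the affine manipulation of the update factor — are copied essentially verbatim from the ALPHA proof, and the modified $\eta_k$ update rule of Batchcomp is harmless since that argument only ever uses $\mu_k<\eta_k<U_k$ and never the particular value of $\eta_k$. The one genuinely new ingredient, and the step I would be most careful about, is the conditional-expectation identity for size-proportional sampling without replacement together with the bookkeeping showing that the running $\mu_1^i$ is exactly the size-weighted mean of $A_1$ over the not-yet-sampled batches; this, combined with \Cref{claim: Batchcomp 1} to translate the batch-assorter mean back to the original assertion, is what makes the reduction go through.
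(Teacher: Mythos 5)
Your proposal is correct and follows essentially the same route as the paper's proof: translating the failed assertion to the batch-assorter mean via \Cref{claim: Batchcomp 1}, ruling out removal at step~\ref{batchcomp step 2.6}, establishing that the conditional expectation of $A_1$ under size-proportional sampling without replacement equals the size-weighted mean over the unaudited batches (the content of \Cref{claim: Batchcomp 2}, which you state directly in terms of weighted sums rather than as $A_1$ applied to the union of remaining batches), and concluding with the non-negative supermartingale argument and Ville's inequality as in \Cref{claim: Batchcomp 3}. The differences are purely presentational, so no gap to report.
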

\begin{proof} 
By assuming that we can we can audit an elections system using the SHANGRLA framework, we are essentially making the following assumption:
\begin{assumption}
    We have $\ell$ assorters $a_1,..,a_\ell$ such that the reported winners of the elections are true iff for all $k\in[\ell]$:
    $$
       \frac{1}{n}\sum_{b\in B}{a_k(b)} > \half.
    $$
\end{assumption}
The proof of this theorem relies on two claims. First we show that if there exists $k\in [\ell]$ s.t.\ $a_k(B)<\half$, then we also have $A_k(B)<\half$ (\Cref{claim: Batchcomp 1}). Afterwards, we show that if there exists some $k\in [\ell]$ s.t.\ $A_k(B)<\half$, then the algorithm approves the reported winners w.p.\ of at most $\alpha$ (\Cref{claim: Batchcomp 3}). The combination of these claims means that if the reported winners of the elections are not correct, the algorithm approves them w.p.\ of at most $\alpha$, which is the~\hyperlink{The RLA Guarantee}{RLA guarantee}. 

\begin{claim} \label{claim: Batchcomp 1}
    If there exists $k\in [\ell]$ s.t.\ $a_k(B)<\half$, then we also have $A_k(B)<\half$.
\end{claim}
\begin{proof}
    Assume such a $k$ exists. By the definition of $A_k$ and $M_k$ (\Cref{def: Batchcomp assorter}), we therefore have:
    \begin{align*}
       A_k(B) &= \half + \frac{M_k + a^{true}_k(B) - a^{rep}_k(B)}{2(w_k - M_k)} \\[1.5ex]
       &= \half + \frac{a_k^{rep}(B) - \half + a^{true}_k(B) - a^{rep}_k(B)}{2(w_k - M_k)} \\[1.5ex]
       &= \half + \frac{a^{true}_k(B) - \half }{2(w_k - M_k)} \\
       \intertext{and since $a^{true}_k(B) < \half$:}
       &<
       \half + \frac{\half - \half }{2(w_k - M_k)} \\[1.5ex]
       &= \half,
    \end{align*}
    proving the claim.
\end{proof}

The next claim formally proves the following statement: at any stage during the audit, the following two values are equal, for any Batchcomp assorter $A_k$:
\begin{itemize}
    \item The result, in expectation, of sampling a new batch $B_i$ and calculating $A_k(B_i)$.
    \item The result of applying $A_k$ over a batch which includes all previously unaudited ballots.
\end{itemize}
Calculating the second value here might at first appear problematic, as $A_k$ cannot be applied on any arbitrary collection of ballots; the domain of $A_k$ comprises of batches of ballots which have reported tallies. However, note that all unaudited ballots are exactly the union of all previously unaudited batches. Meaning, we can calculate the reported tally of all of the unaudited ballots by summing the reported tallies of all of the unaudited batches. More generally, this means we can apply a $A_k$ over any union of batches, by summing the reported tallies of these batches to get the reported tally of the union. Given a set of batches $Z\subseteq \{B_1,B_2,...,B_d\}$ where $n_z:=\sum_{B_j\in Z}{|B_j|}$ denotes the total number of ballots in $Z$, the value of $A_k(\cup_{B_i\in Z}{B_i})$ can be calculated as follows:
\begin{align} 
    A(\cup_{B_i\in Z}{B_i}) = \half + \frac{M_k + a^{true}_k(\cup_{B_i\in Z}{B_i}) - a^{rep}_k(\cup_{B_i\in Z}{B_i})}{2(w_k - M_k)} \label{eq: assorter on union}
\end{align}
where:
\begin{align} 
    a^{true}_k(\cup_{B_i\in Z}{B_i}) & = \frac{1}{n_z}\sum_{B_i\in Z}{\sum_{b\in B_i}{a^{true}(b)}}, \label{eq: assorter true union}\\[1.5ex]
    a^{rep}_k(\cup_{B_i\in Z}{B_i}) & = \frac{1}{n_z}\sum_{B_i\in Z}{\sum_{b\in B_i}{a^{rep}(b)}}. \label{eq: assorter rep union}
\end{align}
Note that $a^{rep}(b)$ over a single ballot $b$ is not well defined, as we do not have reported results over single ballots; we only have them over entire batches. However, here we only use sums of $a^{rep}(b)$ over all ballots in a batch, and we do know the values of these sums.

We can now proceed to the formally stating and proving this claim:

\begin{claim} \label{claim: Batchcomp 2}
        For any assorter $a$ and its conversion to a batch-comp assorter $A$, and for any $Z\subseteq\{B_1,...,B_d\}$, we have $\bbE_{B_i\sim Z}[A(B_i)] = A(\cup_{B_i\in Z}{B_i})$, where $B_i\sim Z$ means that $B_i$ is sampled from $Z$ w.p.\ proportional to $|B_i|$.
\end{claim}
\begin{proof}
As before, denote the total number of ballots in $Z$ as $n_z$. The equality now follows from the definition of the batch-assorter $A$:
        \begin{align*}
            \bbE_{B_i\sim Z}[A(B_i)] &= \sum_{B_i\in Z}{\frac{|B_i|}{n_z}A(B_i)}, \\
            \intertext{inserting the definition of $A$:}
            &= \frac{1}{n_z}\sum_{B_i\in Z}{|B_i| \left( \half + \frac{M+a^{true}(B_i)-a^{rep}(B_i)}{2(w-M)}\right)} \\[1.5ex]
            &= \half + \frac{M}{2(w-M)} + \frac{1}{2(w-M)n_z}\sum_{B_i\in Z}{|B_i| \left(a^{true}(B_i)-a^{rep}(B_i)\right)}, \\
            \intertext{using the definition of $a$ when applied on batches (see~\eqref{eq: assorter on batch}):}
            &= \half + \frac{M}{2(w-M)} + \frac{1}{2(w-M)n_z}\sum_{B_i\in Z}{|B_i| \left(\frac{1}{|B_i|}\sum_{b\in B_i}{a^{true}(b)} - \frac{1}{|B_i|}\sum_{b\in B_i}{a^{rep}(b)}\right)}, \intertext{slightly re-arranging the sums yields:}
            &= \half + \frac{M}{2(w-M)} + \frac{1}{2(w-M)}\frac{1}{n_z}\sum_{B_i\in Z}{\sum_{b\in B_i}{(a^{true}(b) - a^{rep}(b))}}, \\
            \intertext{and using~\eqref{eq: assorter true union} and~\eqref{eq: assorter rep union}:}
            &= \half + \frac{M}{2(w-M)} + \frac{1}{2(w-M)}(a^{true}(\cup_{B_i\in Z}{B_i}) - a^{rep}(\cup_{B_i\in Z}{B_i})) \\[1.5ex]
            &= \half + \frac{M + a^{true}(\cup_{B_i\in Z}{B_i}) - a^{rep}(\cup_{B_i\in Z}{B_i})}{2(w-M)}, \\
            \intertext{finally, by~\eqref{eq: assorter on union}:}
            &= A(\cup_{B_i\in Z}{B_i}),
        \end{align*}
        concluding the proof of this claim.
    \end{proof}
    
    \begin{claim} \label{claim: Batchcomp 3}
        If there exists $k\in [\ell]$ s.t.\ $A_k(B)<\half$, then the algorithm will approve the reported winners w.p.\ of at most $\alpha$.
    \end{claim}
    \begin{proof}
        Assume such a $k$ exists and w.l.o.g.\ let it be $k=1$. The index $1$ cannot be removed from $\mathcal{K}$ in step~\ref{batchcomp step 2.6}, because then:
        $$
        \mu_k < 0 \Longrightarrow \half n - \sum_{B_j\in \mathcal{B}^0}{A_k(B_j)|B_j|} < 0 \Longrightarrow \half < \sum_{B_j\in \mathcal{B}^0}{A_k(B_j)|B_j|} \leq  A_1(B),
        $$  
        contradicting our assumption. This means assertion 1 ($A_1(B)>\half$) can only be approved in step~\ref{batchcomp approve step}. We now show that this happens w.p.\ of at most $\alpha$.
        
        As we do in the proof of~\Cref{thm: alpha martingale test}, denote by $T^0_1, T^1_1, ..., T^d_1$ the values of $T_1$ after each sampled batch. Similarly, let $\mu^1_1, \mu^2_1,...,\mu^d_1$, $\eta^1_1, \eta^2_1,...,\eta^d_1$ and $U^1_1, U^2_1,...,U^d_1$ be the values that $\mu_1$, $\eta_1$ and $U_1$ have when sampling each batch. To prove this claim, it suffices to prove $T^0_1,T^1_1,...,T^d_1$ is a non-negative supermartingale. If we can do so, then the algorithm approves the results w.p.\ of at most $\alpha$ by Ville's inequality, as explained while proving~\Cref{thm: alpha martingale test}.
        
        First, we show $A_1(B_i)$ is non-negative for any $i\in [t]$. Towards this goal, recall the update rule of $T^i_1$ in step~\ref{Batchcomp t update}:
        $$
            T_1^i = T^{i-1}_1 \left( \frac{A_1(b_i)}{\mu^i_1} \frac{\eta^i_1 -\mu^i_1}{u^i_1 - \mu^i_1}  + \frac{U^i_1 - \eta^i_1}{U^i_1 - \mu^i_1}\right).
        $$
        To show that $T^i_1$ is non-negative for any $i$, it suffices to prove that $A_1(B_i)\geq 0$, since we always have $0 \leq \mu_1^i \leq \eta_1^i \leq U_1^i$. We now do so:
        \begin{align*}
            A_1(B_i) = \half + \frac{M_1 + \overbrace{a^{true}_1(B_i)}^{\geq 0} - \overbrace{a^{rep}_1(B_i)}^{\leq w_1}}{2(w_1 - M_1)}
                \geq \half + \frac{M_1 + 0 - w_1}{2(w_1 - M_1)}
                = \half - \half
                = 0.
        \end{align*}
        \begin{conclusion}
            $T^i_1$ is non-negative.
        \end{conclusion}  
        
        What remains is to show that for any $i\in [d]$ we have $\bbE[T^i_1\,|\,T^{0}_1,...,T^{i-1}_1]\leq T^{i-1}_1$. Towards this purpose, fix some $i\in [d]$. For simplicity, we use $\mathcal{B}^0$ and $\mathcal{B}^1$ to denote the sets of batches $\mathcal{B}^0$ and $\mathcal{B}^1$ were when sampling the $i$th batch to audit. 
        
        In every iteration of the algorithm, in step~\ref{batchcomp variable update step}, $\mu_1$ is defined to be the value that $A_1(\cup_{B_j\in \mathcal{B}^1}{B_j})$ would have, given that $A_1(B)=\half$. This means that since $A_1(B)\leq \half$, we have $A_1(\cup_{B_j\in \mathcal{B}^1}{B_j})\leq \mu^i_k$. By~\Cref{claim: Batchcomp 2}, we therefore have $\bbE_{B_j \sim \mathcal{B}^1}[A_1(B_j)]\leq \mu^i_1$, where $B_j\sim \mathcal{B}^1$ means that we sample a batch from $\mathcal{B}^1$ w.p.\ that is proportional to that batch's size. 
        By the same reasoning as~\eqref{eq: t and b} in the proof of~\Cref{thm: alpha martingale test}:
        \begin{align*}
            \bbE[T^i_1\,|\,T^{i-1}_1,...,T^{0}_1]
            &= \bbE[T^i_1\,|\,B_1,...,B_{i-1}], \\
            \intertext{now, by the update rule of $T_1$ from step~\ref{Batchcomp t update}:}
            &= T^{i-1}_1 \left( \frac{\bbE_{B_j \sim \mathcal{B}^1}[A_1(B_j)]}{\mu^i_1} \frac{\eta^i_1 -\mu^i_1}{U^i_1 - \mu^i_1}  + \frac{U^i_1 - \eta^i_1}{U^i_1 - \mu^i_1}\right), \\
            \intertext{and since $\bbE_{B_j \sim \mathcal{B}^1}[A_1(B_j)]\leq \mu^i_1$:}
            &\leq T^{i-1}_1 \left( \frac{\mu^i_1}{\mu^i_1} \frac{\eta^i_1 -\mu^i_1}{U^i_1 - \mu^i_1}  + \frac{U^i_1 - \eta^i_1}{U^i_1 - \mu^i_1}\right) \\[1.5ex]
            &= T^{i-1}_1 \left( \frac{\eta^i_1 -\mu^i_1}{U^i_1 - \mu^i_1}  + \frac{U^i_1 - \eta^i_1}{U^i_1 - \mu^i_1}\right) \\[1.5ex]
            &= T^{i-1}_1 \left( \frac{U^i_1-\mu^i_1}{U^i_1 - \mu^i_1}\right) \\
            &= T^{i-1}_1.
        \end{align*}
        This concludes the proof of this claim, since we have shown that $T^{1}_1,..., T^{d}_1$ is a non-negative supermartingale, meaning that the audit approves the assertion ($\sum_{b\in B}{a_1(b)}>\half$) w.p.\ of at most $\alpha$.
    \end{proof}
    As explained previously, the combination of these 3 claims concludes the proof of this theorem.
\end{proof}

\subsubsection{Choosing \texorpdfstring{$\delta$}{Lg}}
\label{sec:choosing delta}
As explained in~\Cref{sec:Batchcomp prelims}, for every assorter $a_k$ and its Batchcomp counterpart $A_k$ we initialize:
   $$
     U_k = \half + \frac{M_k + \delta}{2(w_k-M_k)},
   $$

where $M_k=a^{rep}(B)-\half$ is the reported margin of $a_k$ and $\delta>0$. Different choices for $\delta$ all produce valid RLAs (any choice maintains the~\hyperlink{The RLA Guarantee}{RLA guarantee}), but under certain conditions, certain values of $\delta$ yield more efficient audits. This section attempts to give intuition regarding the ideal choice of $\delta$. Generally, the more we expect the reported vote tallies of the different batches to be accurate, the smaller $\delta$ should be. We show this by comparing $\mu_k$ to the expected value of a Batchcomp assorter on the next batch to be sampled.

\begin{claim}
    During a Batchcomp RLA, if the next sampled batch $B_j$ satisfies $A_k(B_j) \geq \mu_k$ for some batch-assorter $A_k$, then choosing a smaller $U_k$ increases the audit's efficiency, and vice-versa; if $A_k(B_j) < \mu_k$, then setting a larger $U_k$ increases the audit's efficiency.  
\end{claim}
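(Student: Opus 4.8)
The plan is to reduce the claim to a one‑variable monotonicity statement about the multiplicative factor by which $T_k$ is updated in step~\ref{Batchcomp t update}. Fix the batch‑assorter index $k$ and freeze the state of the audit at the moment $B_j$ is about to be sampled, so that $\mu_k$, $\eta_k$ and the candidate value $U_k$ are already determined with $0 \le \mu_k < \eta_k < U_k$; write $A := A_k(B_j)$, $\mu := \mu_k$, $\eta := \eta_k$ for brevity. The quantity that governs how much progress the audit makes on assertion $k$ when it reads $B_j$ is the factor
$$
g(U) := \frac{A}{\mu}\,\frac{\eta - \mu}{U - \mu} + \frac{U - \eta}{U - \mu},
$$
and the entire claim is about how $g(U)$ moves as we vary $U = U_k$.

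First I would simplify $g$. Rewriting $g(U) = \frac{1}{U-\mu}\left(\frac{A(\eta-\mu)}{\mu} + U - \eta\right)$ and pulling out the $(U-\mu)$ contribution from the numerator gives the clean form
$$
g(U) = 1 + \frac{(\eta - \mu)(A - \mu)}{\mu\,(U - \mu)},
$$
equivalently $g'(U) = -\frac{(\eta - \mu)(A - \mu)}{\mu\,(U - \mu)^2}$. Since $\eta - \mu > 0$, $\mu > 0$ and $U - \mu > 0$, the sign of $g'(U)$ is exactly opposite to the sign of $A - \mu$, and the sign of $g(U) - 1$ is exactly the sign of $A - \mu$.

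Now I would read off the two cases. If $A_k(B_j) \ge \mu_k$, then $g \ge 1$ (reading $B_j$ does not shrink $T_k$) and $g$ is non‑increasing in $U$; hence a smaller $U_k$ yields a weakly larger multiplicative boost to $T_k$, so the threshold $\frac{1}{\alpha}$ of step~\ref{batchcomp approve step} is crossed after weakly fewer batches, i.e.\ the audit is more efficient. If instead $A_k(B_j) < \mu_k$, then $g < 1$ (reading $B_j$ shrinks $T_k$) and $g$ is increasing in $U$; hence a larger $U_k$ makes the shrinkage milder, again improving efficiency. This is precisely the asserted dichotomy.

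There is no real analytic obstacle here — it is a single‑variable algebra/derivative computation — so the only point needing care is the meaning of ``efficiency'': the argument compares the per‑batch multiplicative update to $T_k$, and one should note that $U_k$ also enters later iterations and its own update rule $U_k \leftarrow \max\{U_k, \eta_k + \epsilon\}$, so the statement is best read as the local, per‑step effect, which is exactly the heuristic that motivates the choice of $\delta$ in the initialization of $U_k$. I would therefore phrase the claim in that per‑step sense and let the monotonicity of $g$ carry the whole proof.
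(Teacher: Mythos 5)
Your proposal is correct and follows essentially the same route as the paper: both arguments reduce the claim to the sign of the derivative of the per-batch update factor with respect to $U_k$, which works out to $-\frac{(\eta_k-\mu_k)(A_k(B_j)-\mu_k)}{\mu_k (U_k-\mu_k)^2}$ up to the positive factor $T_k$, and read off the two cases from the sign of $A_k(B_j)-\mu_k$. Your algebraic simplification of the factor to $1 + \frac{(\eta_k-\mu_k)(A_k(B_j)-\mu_k)}{\mu_k (U_k-\mu_k)}$ and your remark that the statement should be read as a local, per-step effect are harmless refinements of the same argument.
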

\begin{proof}
    Examine some Batchcomp assorter $A_k$. First, note that auditing the assorter is more efficient, meaning it requires examining fewer ballots, the more significantly $T_k$ grows per batch. This is because the audit approves assertion $k$ when $T_k > \frac{1}{\alpha}$. Therefore, it suffices to show that if $A_k(B_j) \geq \mu_k$, then $T_k$ grows more significantly when $U_k$ is small, and vice-versa.
    
    Towards this purpose, denote the next audited batch as $B_i$. To prove this claim, we take the derivative by $U_k$ of the update rule of $T_k$ in step~\ref{Batchcomp t update} of the Batchcomp algorithm:
    $$
        T_k \leftarrow T_k\left(\frac{A_k(B_i)}{\mu_k}\frac{\eta_k-\mu_k}{U_k-\mu_k}+\frac{U_k - \eta_k}{U_k - \mu_k}\right).
    $$
    Taking its derivative by $U_k$ results in:
    \begin{align*}
        & T_k\left(-\frac{A_k(B_i)}{\mu_k}\frac{\eta_k-\mu_k}{(U_k-\mu_k)^2} + \frac{1}{U_k-\mu_k} - \frac{U_k - \eta_k}{(U_k-\mu_k)^2}\right) \\[1.5ex]
        =& \frac{T_k}{(U_k - \mu_k)^2} \left(-\frac{A_k(B_i)}{\mu_k}(\eta_k-\mu_k) + U_k - \mu_k - U_k + \eta_k\right) \\[1.5ex]
        =& \frac{T_k}{(U_k - \mu_k)^2} \left(-\frac{A_k(B_i)}{\mu_k}(\eta_k-\mu_k) - \mu_k + \eta_k\right) \\[1.5ex]
        =&\underbrace{T_k\frac{\eta_k - \mu_k}{(U_k-\mu_k)^2}}_{>0}\left(1 - \frac{A_k(B_i)}{\mu_k}\right).
    \end{align*}
    
    Where the term on the left (above the underbrace) is positive since $T_k$ is positive, and since we always have $U_k>\eta_k>\mu_k > 0$. We can observe that if $A_k(B_j)> \mu_k$, this derivative is negative, meaning that choosing a smaller value for $U_k$ causes $T_k$ to increase more significantly. If $A_k(B_j) < \mu_k$, then the opposite is true. This concludes the proof of this claim.
\end{proof}

By this claim, if we expect to have $A_k(B_j)> \mu_k$ for all batch-assorters and batches, we should choose a smaller $\delta$, and vice versa. When using the Batchcomp assorter, we have:
$$
    A_k(B_i) = \half + \frac{M_k + a^{true}_k(B_i) - a^{rep}_k(B_i)}{2(w_k - M_k)}
$$
And $w_k>M_k > 0$ by the definition of $M_k$. Therefore, as long as the discrepancies between the reported and true vote counts are small, we expect to consistently have $A_k(B_i)\geq \mu_k$, meaning we should choose a smaller $\delta$. To get $A_k(B_i)<\mu_k$, we would need to have $a^{true}_k(B_i) - a^{rep}_k(B_i) > M_k$, meaning that the discrepancy in vote counts, as it relates to the assorter $a$, is greater than its reported margin. If the margin isn't extremely small, and the errors in the vote count are uncorrelated and rare, this is very unlikely to happen. This suggests that if the counting mistakes aren't malicious, we should choose a very small $\delta$.

If we choose a very small $\delta$ and the counting mistakes are malicious, the audit might become inefficient, but it will still fulfill the~\hyperlink{The RLA Guarantee}{RLA guarantee}. This could encourage choosing a very small value for $\delta$, since it would only make the audit inefficient if it's likely that the vote counting was malicious. If there were correlated mistakes in the vote counting (such as errors that are skewed against a certain party), we would not mind a more exhaustive audit. Meanwhile, if the counting mistakes are ``honest", meaning the chance of any ballot to be misinterpreted is equal, we would like the audit to examine as few batches as possible.

\begin{conclusion}
    A Batchcomp RLA is more efficient when $\delta>0$ is very small, as long as the vote tallying is not done maliciously. 
\end{conclusion}

On the extreme case, we could technically choose $\delta \rightarrow 0^+$. This essentially means that $\eta_k \approx U_k$ for every $k\in[\ell]$ and the update rule of $T_k$ becomes:
$$
    T_k \leftarrow T_k\left(\frac{A_k(B_i)}{\mu_k}\frac{\eta_k-\mu_k}{U_k-\mu_k}+\frac{U_k - \eta_k}{U_k - \mu_k}\right) \approx T_k \frac{A_k(B_i)}{\mu_k}.
$$
This inspires a simplified batch-level RLA, which is very efficient as long as the counting-errors are random: we could ditch $\eta_k$ and $U_k$ entirely, and just update each $T_k$ by $T_k\leftarrow T_k\frac{A(B_i)}{\mu_k}$. If the counting mistakes are random, this method appears to perform just as well as the full batch-comp RLA, according to simulations (for brevity, the full plots are not included here). This is true even when auditing election results with assertions which have very tight margins, meaning that even small random mistakes might lead to some cases where $A_k(B_i)< \mu_k$ for some $k$. However, note that this method could completely fail if we ever get $A_k(B_i)=0$. Having a batch-assorter and batch with $A_k(B_i)=0$ would mean that:
\begin{align*}
    A(B_i)=0 
    & \Longrightarrow \half + \frac{M_k + a^{true}_k(B_i) - a^{rep}_k(B_i)}{2(w_k - M_k)} = 0 \\[1.5ex]
    & \Longrightarrow \half = \frac{a^{rep}_k(B_i) - a^{true}_k(B_i) - M_k}{2(w_k - M_k)} \\[1.5ex]
    & \Longrightarrow w_k-M_k = a^{rep}_k(B_i) - a^{true}_k(B_i) - M_k \\[1.5ex]
    & \Longrightarrow w_k = a^{rep}_k(B_i) - a^{true}_k(B_i),
\end{align*}
which indicates that $a_k^{rep}(B_i)=w_k=\max_{j}\{a^{rep}_k(B_j)\}$ and $a^{true}_k(B_i)=0$. This means that reportedly, this batch is the ``the best one" for assorter $a$, while in reality its as bad as it can get. Such a scenario is very unlikely to occur unless the vote counting is malicious, and in that case, we wouldn't mind a full manual recount anyway.

\begin{conclusion}
    As long as the counting errors are not malicious, the update rule of $T_k$ in the Batchcomp RLA could be changed to simply $T_k \leftarrow T_k \frac{A_k(B_j)}{\mu_k}$. This simplifies the audit but adds a small probability of an unnecessary full manual recount of the paper-backup ballots.
\end{conclusion}

\newpage
\section{Israeli Knesset Elections RLA}
\label{sec:Knesset RLA}
This section describes how to perform an RLA to verify results of the Israeli Knesset elections. This method can be used in Israel currently to verify the initial hand-count of the votes, which is not performed centrally - each polling place independently tallies its own ballots. This method can also become useful if, in the future, the vote tallying will be done by some electronic means, such as an optical reader. In such cases, this method could confirm that the winners outputted by the electronic vote tabulation system are highly likely to be correct. 

In the context of the Knesset elections, we wish to verify that the correct Knesset members were elected. Since the Knesset elections use a closed party-list, meaning that each party submits a ranked list of its candidates ahead of the elections, we only need to verify that each party receives the correct number of seats. Again, we do not mind if the vote tallies are not completely accurate, as long as the correct number of seats was given to each party.

Before moving to explain the how the Knesset elections work, we define some notation. Let $P$ be the set of all parties running in the elections, and let $S$ be the number of available seats. For every party $p\in P$, let $v^{true}(p)$ denote the true number of votes $p$ received, according to the paper-backups ballots. Similarly, denote the true number of invalid votes as $v^{true}(invalid)$ and the true number of valid votes as $v^{true}(valid)$.

\subsection{Knesset Elections Method}
\label{sec: knesset elections method}
The Knesset is the Israeli parliament and its sole legislative authority. It consists of $S:=120$ members who are elected according to closed party-list proportional representation. Before each election cycle, each party submits a ranked list of its candidate. On polling day, each voter votes for a single party, and parties receive seats in proportion to the share of the votes they received. The seats each party wins are given to the top-ranked candidates in the party's list. 
    
Allocating Knesset seats to the various parties is done as follows~\cite{knessetElectionMethod}:
    
\subsubsection*{Electoral Threshold}
    
In the Knesset elections, only parties which receive at least 3.25\% of the valid votes are eligible to win seats. We denote this threshold as $t:=0.0325$.
    
\subsubsection*{Seat allocation}
\label{sec: knesset seat allocation}
    
The allocation of seats is done according to the D'Hondt method, a highest average method, and can be formulated in multiple ways. The description here, which was suggested previously by Gallagher~\cite{gallagher1991proportionality}, lends itself more naturally to the SHANGRLA framework. Meaning, it simplifies the process of developing SHANGRLA assertions which are all true iff the reported winners of the elections are correct. 
      
To find how many seats each party deserves, we do the following:
    \begin{enumerate}
        \item Imagine a table with a row for each party which is above the threshold (meaning $v^{true}(p) \geq tv^{true}(valid)$), and $S$ columns. In the cell of party $p$ and column $s$, we write ${v^{true}(p)}/{s}$. All cells are initially uncolored.
        \item Color (or mark) the $S$ cells with the largest values in the table.
        \item The number of colored cells a party has in its row is the number of seats it should receive.
    \end{enumerate} 
    Note that the values in each row are monotonically decreasing, so each row would be fully colored up to a certain column, and not colored for the rest of it.
    
    For example, in an election with 3 parties and 10 seats, the results would be as follows:
    \begin{center}
        \includegraphics[scale=0.65]{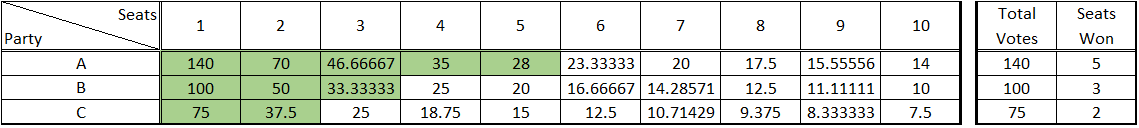}    
    \end{center}
    
\subsubsection*{Apparentment (Also Known as Electoral Alliances or Heskemei Odafim)}
 Two parties may sign prior to the the electoral day an apparentment agreement, which may allow one of them to gain an extra seat. If two parties sign an apparentment agreement, and only if both are above the threshold, they essentially unite to a single allied party during the seat allocation stage. Then, the number of seats their alliance received is split between them according to the same seat allocation method. Meaning, for each apparentment between two parties, we write another table with only these two parties, and allocate the number of seats they won together between them, using the same method described above. 
    
If one of the parties in the apparentment is below the electoral threshold while using only its own votes, the apparentment is ignored. Each party may only sign a single apparentment agreement.

\subsection{Designing Assorters}
\label{sec:knesset assorters}
This section presents assorters that can be used to perform an RLA for the Knesset elections, using the SHANGRLA framework. We begin by presenting 3 conditions which all hold true if and only if the reported winners of the election are correct. We then proceed to develop assorters for each of these conditions, such that the assorters all have a mean greater than $\half$ if and only if these conditions all hold true. 
    
    \begin{theorem} \label{thm: Knesset}
        Let $s^{rep}(p)$ and $s^{true}(p)$ be the reported and true number of seats that a party $p$ won in a Knesset elections, respectively. We have it that $s^{rep}(p)=s^{true}(p)$ for every party $p\in P$, if and only if these 3 conditions all hold true:
        \begin{enumerate}
            \item Every party who is reportedly above the electoral threshold, is truly above the electoral  threshold.
            \item Every party who is reportedly below the electoral threshold, is truly below the electoral  threshold.
            \item For every two parties $p_1, p_2$ who are reportedly above the electoral threshold, the condition $\left(s^{rep}(p_1) \geq s^{true}(p_1)\right) \vee \left(s^{rep}(p_2) \leq s^{true}(p_2)\right)$ is true.
        \end{enumerate}
    \end{theorem}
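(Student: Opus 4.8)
The plan is to prove the two implications separately, reducing each to two structural facts about the seat-allocation map $p\mapsto s(p)$ (D'Hondt together with the apparentment rule): (P1) it always distributes exactly $S=120$ seats, so $\sum_p s^{rep}(p)=S=\sum_p s^{true}(p)$; and (P2) every party above the threshold receives at least one seat. Fact~(P1) is immediate, since the highest-averages procedure colours exactly $S$ cells and the apparentment step only re-partitions an alliance's seats internally. Fact~(P2) comes from the inequality $1/t=1/0.0325<120=S$: if $v(p)\ge t\,v(valid)$ then the table has at most $\floor{1/t}\le 30$ cells of value $\ge v(p)$ (since $\sum_{p'}\floor{v(p')/v(p)}\le v(valid)/v(p)\le 1/t$), so the cell $(p,1)$ lies among the $S$ largest; in the allied case one applies the same estimate once to the alliance and once to its internal two-row table. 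Write $D(p):=s^{rep}(p)-s^{true}(p)$, and let $A^{rep}$ and $A^{true}$ be the sets of parties above the threshold under the reported and the true tallies.

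For the ``only if'' direction, assume $s^{rep}(p)=s^{true}(p)$ for every $p$, i.e.\ $D\equiv 0$. Condition~3 holds trivially, since its first disjunct $s^{rep}(p_1)\ge s^{true}(p_1)$ is then an equality. If condition~1 failed, some party $p$ would be reportedly above but truly below the threshold, so $s^{true}(p)=0$ while (P2) forces $s^{rep}(p)\ge 1$, contradicting $D(p)=0$. Condition~2 is the mirror image.

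For the ``if'' direction, assume the three conditions. Conditions~1 and~2 together say exactly that $A^{rep}=A^{true}$; call this common set $A$. For $p\notin A$ both seat counts vanish, so $D(p)=0$. Restricted to $A$, condition~3 is the negation of the statement ``there are $p_1,p_2\in A$ with $D(p_1)<0$ and $D(p_2)>0$'' (the two witnesses being necessarily distinct), which is equivalent to: $D\ge 0$ everywhere on $A$, or $D\le 0$ everywhere on $A$. Combined with $\sum_{p\in A}D(p)=0$ (from (P1), since both allocations place their $S$ seats within $A$), either alternative forces $D\equiv 0$ on $A$. Hence $D\equiv 0$, i.e.\ $s^{rep}(p)=s^{true}(p)$ for every party.

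The conceptual core is the single observation that condition~3 says ``all seat discrepancies share a common sign'', which the conservation law (P1) then pins to zero; once that is seen the argument is short. I expect the real care to go into (P2) --- in particular checking it survives the apparentment rule, where one must rule out a small alliance sweeping all of its seats to the larger partner --- and into making the highest-averages colouring well-defined (a deterministic tie-break, or an argument that exact ties cannot arise) so that (P1) is literally exact.
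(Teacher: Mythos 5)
Your argument is correct, and its skeleton largely coincides with the paper's: both directions rest on seat conservation and on the link between threshold status and receiving seats, and your direct ``if'' argument (conditions 1--2 force $A^{rep}=A^{true}$, condition 3 forces all discrepancies $D(p)$ on that common set to share a sign, and $\sum_{p\in A}D(p)=0$ then kills them) is essentially the contrapositive of the paper's: there the allocations are assumed to differ, an over-seated party $p_r$ and an under-seated party $p_t$ are extracted (implicitly using your (P1)), and a case split on their threshold status exhibits a violated condition. Where you genuinely add something is the ``only if'' direction: the paper dismisses it as holding ``trivially,'' but conditions 1 and 2 are statements about vote shares, not seats, so equality of seat counts only implies them via your fact (P2) that every above-threshold party wins at least one seat --- which requires $1/t < S$ (true for the Knesset's $t=0.0325$, $S=120$) and which you prove by the counting bound $\sum_{p'} \lfloor v(p')/v(p)\rfloor \le 1/t < S$. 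Making (P2) explicit, and flagging that it must be rechecked under apparentments and tie-breaking (issues the paper's proof does not touch; apparentments are handled only informally in a later subsection), is a more careful treatment than the paper's. Conversely, note that the paper's contrapositive direction needs only the trivial converse of (P2) --- a party holding a seat is above the threshold by construction of the table --- so your (P1)/(P2) decomposition usefully isolates exactly which structural fact each direction consumes.
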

    \begin{proof}
        Fix some reported and true tallies for the elections, and calculate the number of seats each party reportedly and truly won according to these tallies. If the reported and true number of seats each party won are equal, then the 3 conditions above hold true trivially. 
        
        Otherwise, assume there is a discrepancy between the reported and true seat allocation. Under this assumption, there is at least one party who won more seats according to the reported result compared to the true results, which we denote as $p_{r}$, and at least one party who won less seats according to the reported results compared to the true results, which we denote as $p_{t}$. We now show that at least one of the three conditions above are violated.

        If $p_{r}$ is not truly above the electoral threshold, then Condition 1 is violated, as it receives seats according to the reported tally, which indicates that it is reportedly above the threshold. Similarly, if $p_{t}$ is below the threshold according to the reported tally, then Condition 2 is violated. Otherwise, both parties are reportedly and truly above the threshold.

        If both parties are reportedly above the electoral threshold, then $p_{t}$ reportedly won less seats than it truly deserves, meaning that $s^{rep}(p_t) < s^{true}(p_t)$. Similarly, we have $s^{rep}(p_r)>s^{true}(p_r)$. This violates Condition 3 and concludes our proof.
\end{proof}

    Next, we present SHANGRLA assertions which confirm that each of these 3 conditions are true. These assertions and their corresponding assorters can be used to perform an RLA for Knesset elections using the Alpha martingale test, or to perform a batch-level RLA using the Batchcomp method.

    Throughout this section, denote by $v^{true}(p)$ and $v^{rep}(p)$ the true and reported number of votes a party $p$ received, respectively. Similarly, denote the reported and true number of seats a party $p$ won by $s^{rep}(p)$ and $s^{true}(p)$.
    
    \subsubsection{Above Threshold Assertion}
    \label{sec: above assorter}
    The role of this assertion is to check that Condition 1 holds, i.e.\  that a party who reportedly received more votes than the electoral threshold, is indeed above the threshold. Verifying this is equivalent to ensuring that every party who is reportedly above the threshold has in fact received at least a $t$-share of the valid votes.  Stark~\cite{stark2020sets} has previously suggested a SHANGRLA assertion for this condition exactly - verifying whether a candidate or party won a certain share of the valid votes (super-majority). For the Knesset elections, this assertion is used once per each party who is reportedly above the threshold.

    For every party $p$ who reportedly is above  the electoral threshold, we add a single assertion to the set of assertions we audit: $\frac{1}{n}\sum_{b\in B}{a^{above}(b)} > \half$, using the following assorter:

    \begin{definition}
        An above threshold assorter, which verifies that a party $p$ is truly above the electoral threshold, is defined as:
        \begin{equation*}
        a_p^{above}(b) := \begin{cases}
            \frac{1}{2t} & \text{if $b$ is for party $p$} \\
            \half & \text{if $b$ is invalid}\\
            0 & \text{otherwise}
        \end{cases}
    \end{equation*}    
    \end{definition}
    \label{def: threshold assorter}
    The mean of this assorter over all ballots is:
    $$
        \frac{1}{n}\sum_{b\in B}{a^{above}(b)}= \frac{1}{n}\left(\half v^{true}(\text{invalid}) + \frac{1}{2t}v^{true}(p)\right) = \frac{1}{2n}\left(v^{true}(\text{invalid}) + \frac{1}{t}v^{true}(p)\right).
    $$
    And this is equal or greater than $\half$ iff:
    $$
        v^{true}(\text{invalid}) + \frac{1}{t}v^{true}(p) \geq n 
        \Longleftrightarrow
        v^{true}(p)\geq t(n-v^{true}(invalid))   
    $$
    as needed.
    \subsubsection{Below Threshold Assertion}
    \label{sec: below assorter}

    In order to check Condition 2, that every party who is reportedly below the threshold is truly below it, we need to confirm that every such party received less than $t$ of the valid votes. This is equivalent to verifying that all other parties received at least $1-t$ of the valid votes. Therefore, we can use a similar assorter to the one from~\Cref{def: Batchcomp assorter}. For every party $p$ who is reportedly below the electoral threshold, we add the assertion $\frac{1}{n}\sum_{b\in B}{a^{below}(b)}>\half$ to the set of assorters we audit, where the assorter $a^{below}$ is defined as:
    \begin{definition}
        A below threshold assorter, which verifies that a party $p$ is truly below the electoral threshold, is defined: 
        \begin{equation*}
        a^{below}_p(b) := \begin{cases}
            0 & \text{if $b$ is for party $p$} \\
            \half & \text{if $b$ is invalid} \\
            \frac{1}{2(1-t)} & \text{otherwise}
        \end{cases}
    \end{equation*}
    \end{definition}
    
    \subsubsection{Move-Seat Assertion}
    \label{sec: move-seat assorter}
    The role of this assertion is to verify that Condition 3 holds, i.e.\  that for every two parties $p_1, p_2$ who are reportedly above the electoral  threshold, $\left(s^{rep}(p_1) \geq s^{true}(p_1)\right) \vee \left(s^{rep}(p_2) \leq s^{true}(p_2)\right)$ is true. An assertion for this condition was previously suggested by Blom et al., section 5.2~\cite{blom2021assertion}, but is developed here independently. We begin by reducing the problem of verifying Condition 3 to the problem of confirming that some linear inequality is true (\Cref{claim: move-seat assorter}). From this inequality, we develop an assorter which verifies that Condition 3 is true using the method described in~\Cref{sec: finding assorters}.

    \begin{claim} \label{claim: move-seat assorter}
        For any two different parties $p_1, p_2$ who are reportedly above the threshold, we have: 
        $$
            \left(\frac{v^{true}(p_1)}{s^{rep}(p_1)+1} < \frac{v^{true}(p_2)}{s^{rep}(p_2)} \right) \Longrightarrow \left(\left(s^{rep}(p_1) \geq s^{true}(p_1)\right) \vee \left(s^{rep}(p_2) \leq s^{true}(p_2)\right)\right).
        $$
        Meaning that confirming that the inequality on the left is true also confirms that Condition 3 is true. 
    \end{claim}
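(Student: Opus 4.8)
The plan is to prove the contrapositive. Suppose the right-hand side fails, i.e. $s^{rep}(p_1) < s^{true}(p_1)$ and $s^{rep}(p_2) > s^{true}(p_2)$; I will deduce $\frac{v^{true}(p_1)}{s^{rep}(p_1)+1} \ge \frac{v^{true}(p_2)}{s^{rep}(p_2)}$, which contradicts the strict inequality on the left. Two preliminary remarks set things up: $s^{true}(p_1) > s^{rep}(p_1) \ge 0$ forces $s^{true}(p_1) \ge 1$, so $p_1$ is truly above the threshold and hence appears as a row of the true D'Hondt table from the seat-allocation procedure of \Cref{sec: knesset elections method}; and $s^{rep}(p_2) > s^{true}(p_2) \ge 0$ gives $s^{rep}(p_2) \ge 1$, so $v^{true}(p_2)/s^{rep}(p_2)$ is well defined and column $s^{rep}(p_2)$ exists (using $s^{rep}(p_2)\le S$).

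The core of the argument is to read off the coloring of two specific cells of the true table. Within the row of a party the entries $v^{true}(p)/1 > v^{true}(p)/2 > \cdots$ are strictly decreasing, so the colored cells of a row form an initial segment: a party winning $s^{true}(p)$ seats has exactly columns $1,\dots,s^{true}(p)$ colored. Hence, since $s^{rep}(p_1)+1 \le s^{true}(p_1)$, the cell of $p_1$ in column $s^{rep}(p_1)+1$, of value $v^{true}(p_1)/(s^{rep}(p_1)+1)$, is colored; and since $s^{rep}(p_2) > s^{true}(p_2)$, the cell of $p_2$ in column $s^{rep}(p_2)$, of value $v^{true}(p_2)/s^{rep}(p_2)$, is uncolored (here one needs $p_2$ to lie in the table — see below). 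By the defining property of the D'Hondt coloring, namely that the $S$ colored cells are those carrying the $S$ largest values, every colored cell has value at least that of every uncolored cell (a strictly larger uncolored cell would have displaced a colored one; boundary ties are broken by some fixed rule and only make this inequality non-strict, which is all we need). Chaining the two cells gives $v^{true}(p_1)/(s^{rep}(p_1)+1) \ge v^{true}(p_2)/s^{rep}(p_2)$, finishing the contrapositive.

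I expect the only real obstacle to be the electoral threshold: the true D'Hondt table contains only parties that are truly above it, so the step "$p_2$'s column-$s^{rep}(p_2)$ cell is uncolored" tacitly needs $p_2$ to appear in that table. For $p_1$ this is automatic, but a party reportedly above the threshold may be truly below it, in which case $s^{true}(p_2)=0$ and there is no $p_2$-row at all. The clean way to handle this is to state (or apply) the claim for $p_1,p_2$ that are above the threshold; within the full Knesset audit this costs nothing, since a party reportedly above but truly below the threshold already fails its above-threshold assertion from \Cref{sec: above assorter}, so the audit rejects regardless. Accordingly I would present the claim with this extra hypothesis, or add a remark to that effect, and otherwise the proof reduces to the short "colored $\ge$ uncolored" observation above together with the trivial bookkeeping that the relevant cells exist ($s^{rep}(p_1)+1 \le s^{true}(p_1) \le S$).
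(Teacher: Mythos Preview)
Your proof is correct and follows essentially the same route as the paper: argue the contrapositive via the true D'Hondt table, observe that under the negated right-hand side the cell $[p_1, s^{rep}(p_1)+1]$ is colored while $[p_2, s^{rep}(p_2)]$ is not, and conclude from the fact that colored cells carry the $S$ largest values. Your discussion of the threshold edge case (whether $p_2$ is truly above threshold and hence has a row in the true table) is actually more careful than the paper's own proof, which tacitly assumes both parties appear in the true table; your proposed resolution via the above-threshold assertion of \Cref{sec: above assorter} is exactly how the full audit absorbs this case.
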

    \begin{proof}
       
Fix some reported and true election results, and two different parties who are reportedly above the threshold $p_1,p_2$. Say that we allocate seats according to the true tally, using the colored table method  described in~\Cref{sec: knesset elections method}. Note that in this described seat-allocation table, each party $p$ has exactly its first $s^{true}(p)$ cells colored. 
        
        Examine the case where the condition on the right side of the claim is false, meaning that the condition $\left(s^{rep}(p_1) < s^{true}(p_1)\right) \wedge \left(s^{rep}(p_2) > s^{true}(p_2)\right)$ is true. If this condition is true, then the cell at index $[p_1, s^{rep}(p_1)+1]$ of the table is colored, as $p_1$ wins more than $s^{rep}(p_1)$ according to the true results, while the cell at $[p_2, s^{rep}(p_2)]$ is not, as $p_2$ wins less than $s^{rep}(p_2)$ according to the true results. Therefore, to show that $\left(s^{rep}(p_1) < s^{true}(p_1)\right) \wedge \left(s^{rep}(p_2) > s^{true}(p_2)\right)$ is false, it suffices to show that if the cell at index $[p_2, s^{rep}(p_2)]$ is not colored, then the cell at $[p_1, s^{rep}(p_1)+1]$ is not colored either.

        Recall that the colored cells in the table are the ones which hold the $S$ largest values. Thus, to show that if the cell at index $[p_2, s^{rep}(p_2)]$ is not colored, then the cell at $[p_1, s^{rep}(p_1)+1]$ is not colored, it suffices to show that the value at $[p_1, s^{rep}(p_1)+1]$ is smaller than the value at $[p_2, s^{rep}(p_2)]$. Meaning, to show that Condition 3 is true regarding $p_1, p_2$, it suffices confirm that:
         \begin{align} \label{eq:condition 3 eq 1} 
            \frac{v^{true}(p_1)}{s^{rep}(p_1)+1} < \frac{v^{true}(p_2)}{s^{rep}(p_2)}.
        \end{align}
        The smaller term here is the value at index  $[p_1, s^{rep}(p_1)+1]$, while the larger term is the value at index $[p_2, s^{rep}(p_2)]$. This completes the proof of this claim.        
    \end{proof} 
    By this claim, to verify that Condition 3 holds for two parties $p_1, p_2$, it is sufficient to verify that~\eqref{eq:condition 3 eq 1} holds. At first glance, it may appear that Condition 3 can be true while~\eqref{eq:condition 3 eq 1} is not. This can be problematic, as there may be elections where Condition 3 is true regarding some two parties, without~\eqref{eq:condition 3 eq 1} being true. Meaning, if we develop an assertion for verifying~\eqref{eq:condition 3 eq 1}, we may encounter election results where Condition 3 is true but this assertion is false. An audit which uses this assertion might unnecessarily require a full manual recount, despite the reported winners being correct. 
    
    A more careful examination, however, shows that such a scenario is not possible. If the reported winners of the elections are correct, then for any two parties who are above the electoral threshold $p_1, p_2$, party $p_2$ truly wins $s^{rep}(p_2)$ seats while party $p_1$ truly wins less than $s^{rep}(p_2)+1$ seats. Thus, the value at cell $[p_2, s^{rep}(p_2)]$ in the imaginary table from~\Cref{claim: move-seat assorter} is colored, while the cell at $[p_1, s^{rep}(p_1)+1]$ is not, meaning that~\eqref{eq:condition 3 eq 1} is true.

    \begin{conclusion}
        If the reported winners of the election match the true results, then~\eqref{eq:condition 3 eq 1} is true for any two parties $p_1,p_2$ who are reportedly above the electoral threshold. 
    \end{conclusion}
    Thus, to verify Condition 3, we can develop assertions that verify~\eqref{eq:condition 3 eq 1}. Using such assorters, we are guaranteed that if the reported winners of the elections are correct, these assertions will all be true. Meaning, the assertions we audit are both sufficient and necessary conditions for the winners of the elections to be correct. 
    
    We now move to developing the assorters of these assertions. Fix two parties $p_1,p_2\in P$ who are reportedly above the threshold. We wish to find a non-negative function $a_{p_1,p_2}$ such that~\eqref{eq:condition 3 eq 1} is equivalent to a SHANGRLA assertion of the form:
    \begin{align} \label{eq:condition 3 goal}
        \frac{1}{|B|}\sum_{b \in B}{a_{p_1,p_2}(b)} > \half.
    \end{align}
    This is achieved by using the method described in~\Cref{sec: finding assorters}, which converts linear inequalities to SHANGRLA assertions. To use this method, we re-arrange~\eqref{eq:condition 3 eq 1} as a linear inequality over the tallies of the various parties:
    \begin{align}
        \label{eq:condition 3 eq 2} 
            \frac{1}{s^{rep}(p_2)} v^{true}(p_2) - \frac{1}{s^{rep}(p_1)+1} v^{true}(p_1) > 0 .
    \end{align}
    We can now apply the method from~\Cref{sec: finding assorters} to find a non-negative function $a_{p_1,p_2}$ such that~\eqref{eq:condition 3 goal} and~\eqref{eq:condition 3 eq 2} are equivalent. This results in the following definition for $a_{p_1,p_2}$:
    \begin{definition} \label{def: move seat assorter}
        An assertion which verifies that Condition 3 is true for two parties $p_1,p_2$ who are reportedly above the threshold, is $\frac{1}{|B|}\sum_{b \in B}{a^{move}_{p_1,p_2}(b)} > \half$ where:
    
    \begin{equation*}
        a^{move}_{p_1,p_2}(b) := 
        \begin{cases}
        \half + \frac{s^{rep}(p_1)+1}{2s^{rep}(p_2)} & \text{if $b$ is for $p_2$} \\
        0 & \text{if $b$ is for $p_1$} \\
        \half & \text{otherwise}
        \end{cases}
    \end{equation*}
    \end{definition}
    And we need add two instances of this assertion to the audit for every two parties who are reportedly above the threshold, one using $a^{move}_{p_1,p_2}$ and one using $a^{move}_{p_2,p_1}$.

\subsubsection{Handling Apparentments}

The assertions above ignore the existence of apparentments. To handle them, we can simply treat each two allied parties who are reportedly above the electoral threshold as a united party when adding move-seat assertions. Afterwards, we also need to verify that the seat allocation between every two allied parties is correct. To do so, we can add two move-seat assertions (one in each direction) for every two allied parties who are reportedly above the electoral threshold.

\subsection{Simulations Based on Recent Elections}
\label{sec:batch RLA simulations}

We describe the results of simulating the execution of a  batch-comparison RLA over three different elections for the Knesset, all conducted between 2019 and 2021. The election results used in this section are the true election results, as reported by the Israeli Central Elections Committee\footnote{See \url{https://votes22.bechirot.gov.il/}, \url{https://votes23.bechirot.gov.il/}, \url{https://votes24.bechirot.gov.il/}.}. The partition of ballots to batches is also done according to the real election results, and each batch contains ballots from a single polling place. A typical batch contains between 250 and 550 ballots, with the average across the three election cycles being 386. The audit uses assertions as described in~\Cref{sec:knesset assorters}, and converts their assorters to Batchcomp assorters as described in~\Cref{sec:Batchcomp prelims} Finally, the Batchcomp method described in~\Cref{sec:Batchcomp algorithm description} is used to perform the RLA over these Batchcomp assorters.

We begin by showing the simulated performance of the Batchcomp algorithm on each of the 3 election cycles, {\em assuming all vote tallies are accurate}. For each cycle, we compare Batchcomp with the ALPHA-batch algorithm described in section 4.2 of~\cite{stark2022alpha} using the same SHANGRLA assertions from~\Cref{sec:knesset assorters}.
Afterwards, we move to examine the Batchcomp algorithm's efficiency when there is a small, independent probability (0.01) that each ballot is misread during the reported count. 

\subsubsection{Technical Details}

The following results detail the execution of the this suggested batch-level RLA with a risk-limit of $\alpha = 0.05$ and with $\delta$ set to $10^{-10}$. The latter was determined after some experimentation - lower choices for $\delta$ do not improve efficiency when the reported results are accurate, while higher values reduce the audit's efficiency.

For all plots and tables, the number of audited ballots by each method is averaged across 10 simulations. An examination of these simulations shows that the number of ballots required to approve each assertion has very low standard deviation. The mean standard deviation, across all assertions in all elections, is 1,898, while the maximal standard deviation across all assertions is 6,041.

The code used for these simulations was written in Python, and is available in \url{https://github.com/TGKar/Batch-and-Census-RLA}.

\subsubsection{Results with Accurate Vote Tabulations}
    \label{sec:sim accurate tallies}

The upcoming plots present the number of ballots required to approve each assertion during the audit, both by the ALPHA-batch method and by our Batchcomp method. Each point in these plots represents a single assertion, where its value on the x axis is its margin in log-scale (minimal number of ballots that would need to be altered for the assertion to become false), and its value on the y axis is the number of ballots that the audit examined before approving the assertion. Each point in the plot is colored by the type of assertion it represents - either an above threshold assertion, a below threshold assertion, or a move-seat assertion.

For each election cycle, the top plot shows these results when using the ALPHA-batch RLA, the middle plot when using Batchcomp, and the bottom plot shows the difference in ballots required per assertion between ALPHA-batch and Batchcomp.

\paragraph*{Using the 22nd Knesset Election Results (2019)}

\begin{center}
    \hspace*{-0.2in}
    \includegraphics[scale=0.42]{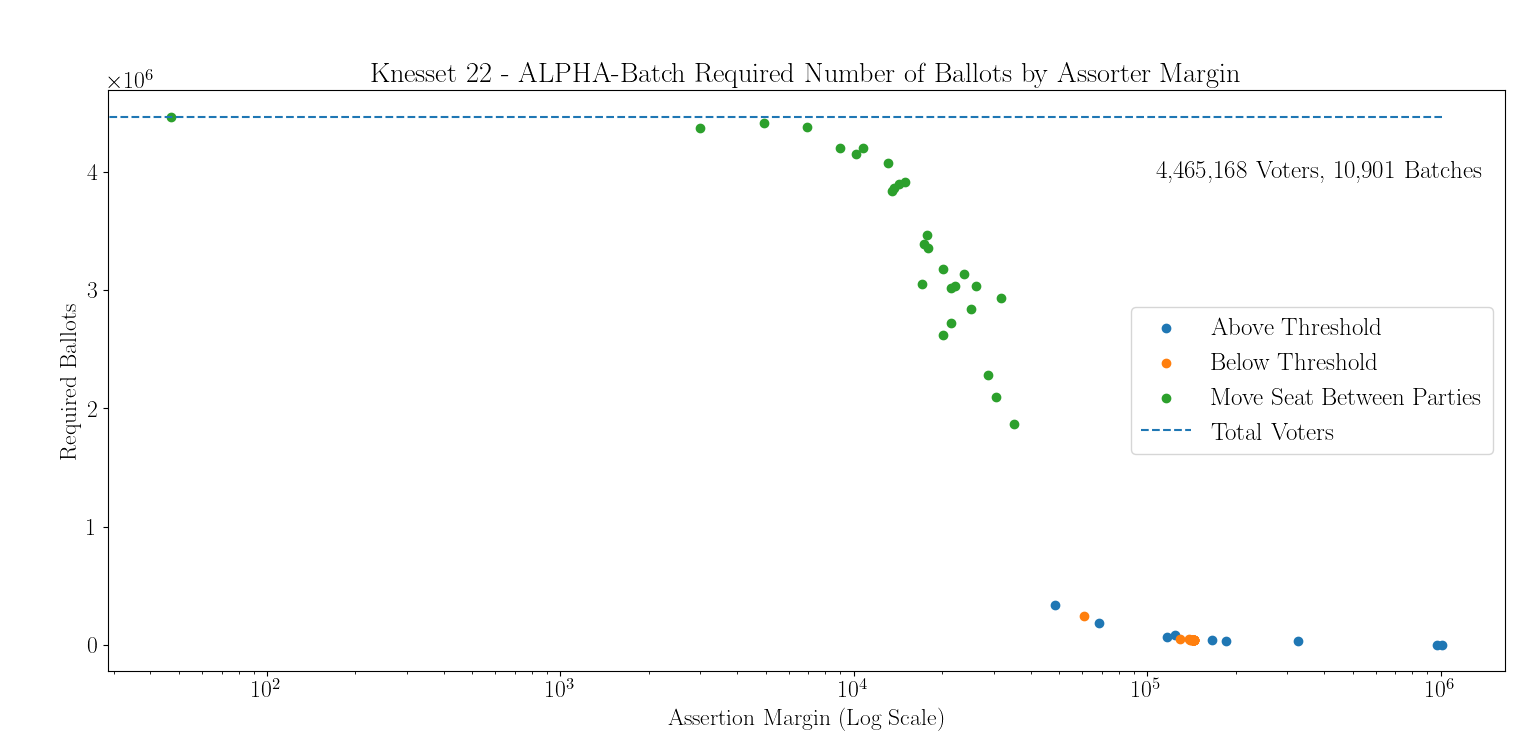} \\[2ex]
    \hspace*{-0.2in}
    \includegraphics[scale=0.42]{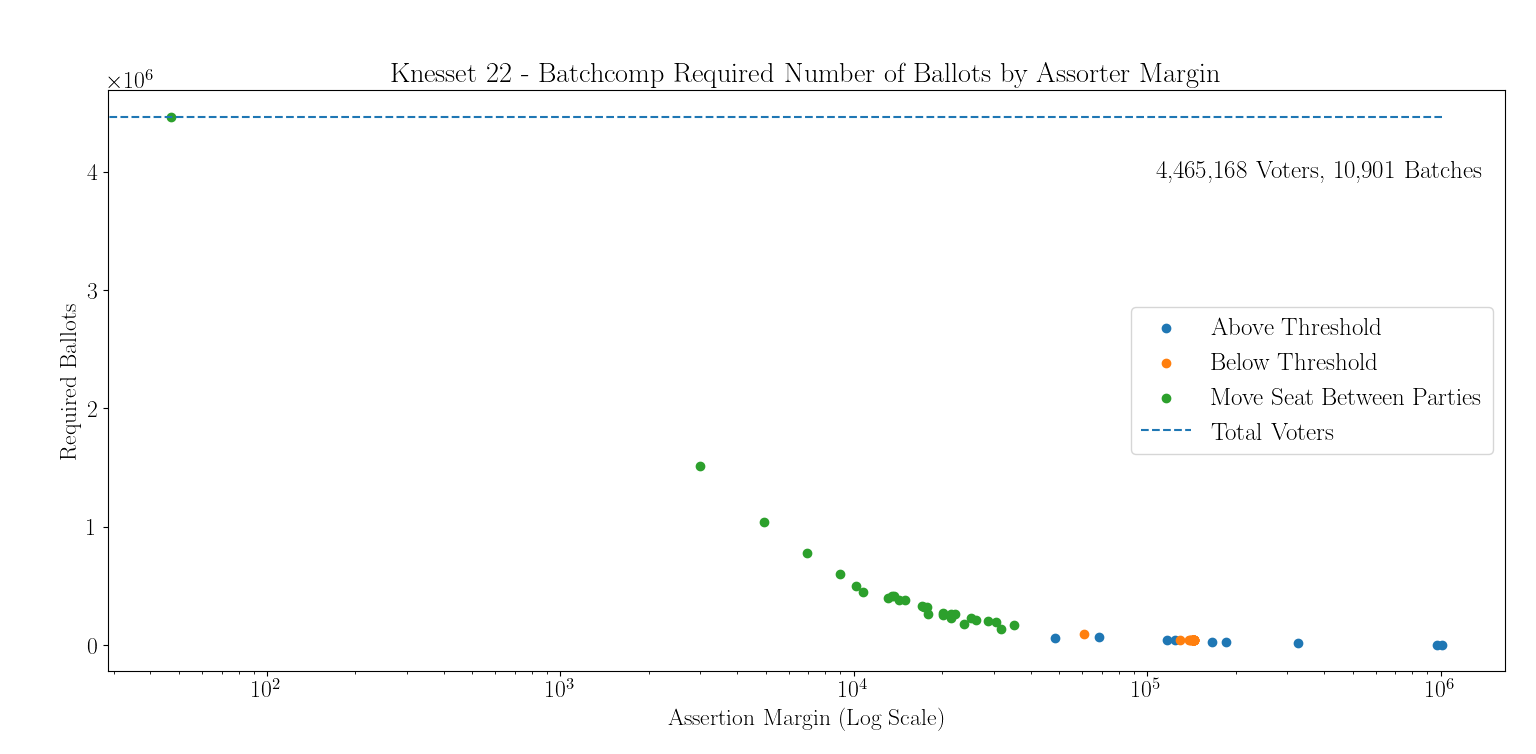} \\[2ex]
    \hspace*{-0.2in}
    \includegraphics[scale=0.42]{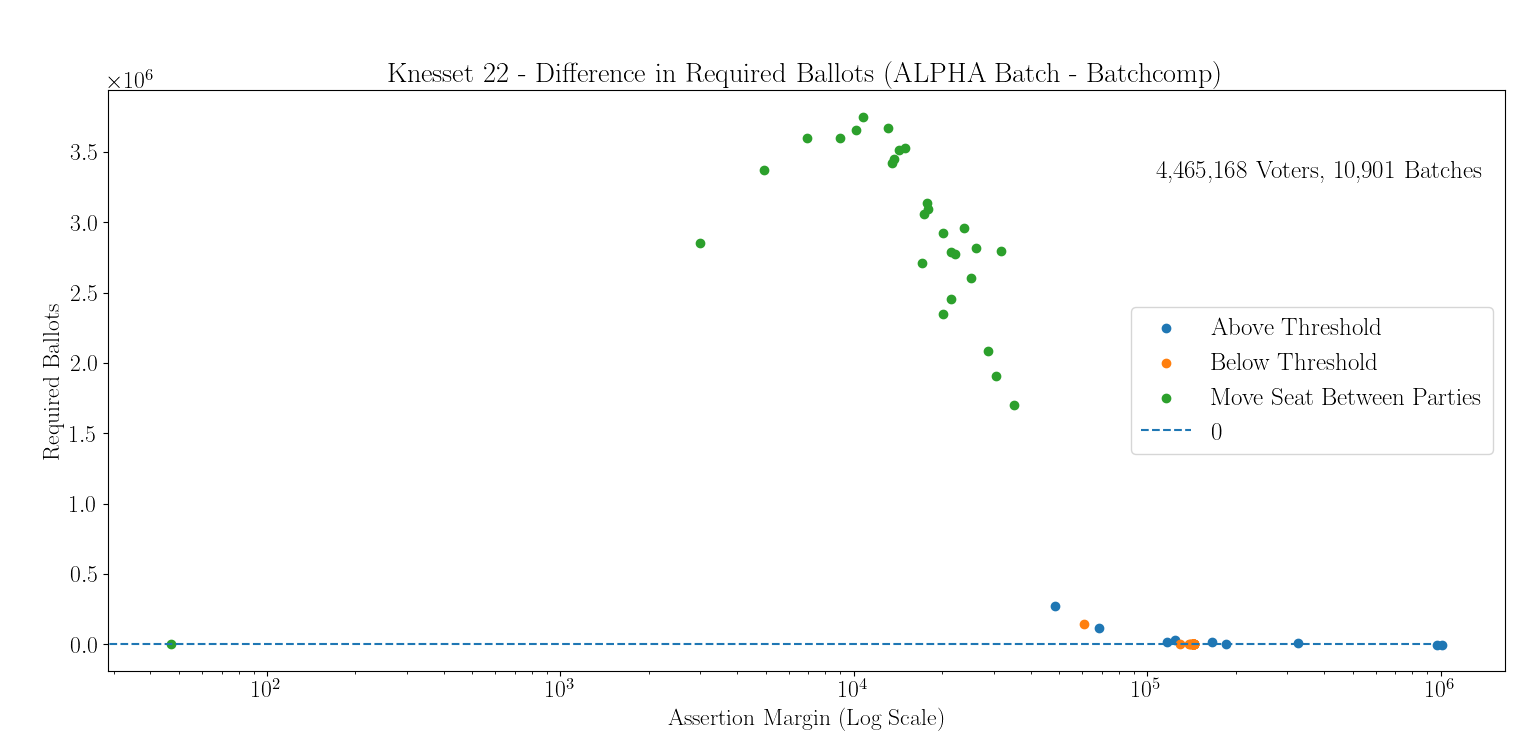}
\end{center}

Approving the reported winners for this election cycle required auditing  virtually all ballots by both methods, due to a single assertion which had a very small margin (47).   Without this assertion, the Batchcomp audit would be done after auditing 34\% of the ballots, while ALPHA-batch would still require 98\%.

The three assertions which required the most ballots to be approved by the Batchcomp algorithm are:
    \begin{center}
        \begin{tabular}{|c || c | c | c |} 
         \hline
          & Margin & Batchcomp & ALPHA \\
         Assertion & (\% of votes) & (\% of votes) & (\% of votes)
         \\[0.5ex] 
         \hline\hline
         Don't move a seat from & 47 & 4,465,090 & 4,465,139 \\
         Likud \& Yamina to UTJ \& Shas & $(0.001\%)$ & $(\approx 100\%)$ & $(\approx 100\%)$\\
         \hline
         Don't move a seat from & 2,996 & 1,513,454 & 4,367,793 \\
         Blue and White to Yisrael Beiteinu & $(0.07\%)$ & $(34\%)$ & $(98\%)$\\
         \hline
         Don't move a seat from & 4,919 & 1,036,336 & 4,409,273 \\
         Blue and White \& Yisrael Beiteinu & $(0.11\%)$ & $(23\%)$ & $(99\%)$ \\
         to UTJ \& Shas & & & \\
         \hline
    \end{tabular}
    \end{center}
    
\paragraph*{Using the 23rd Knesset Election  Results (2020)}
    \begin{center}
    \hspace*{-0.2in}
    \includegraphics[scale=0.42]{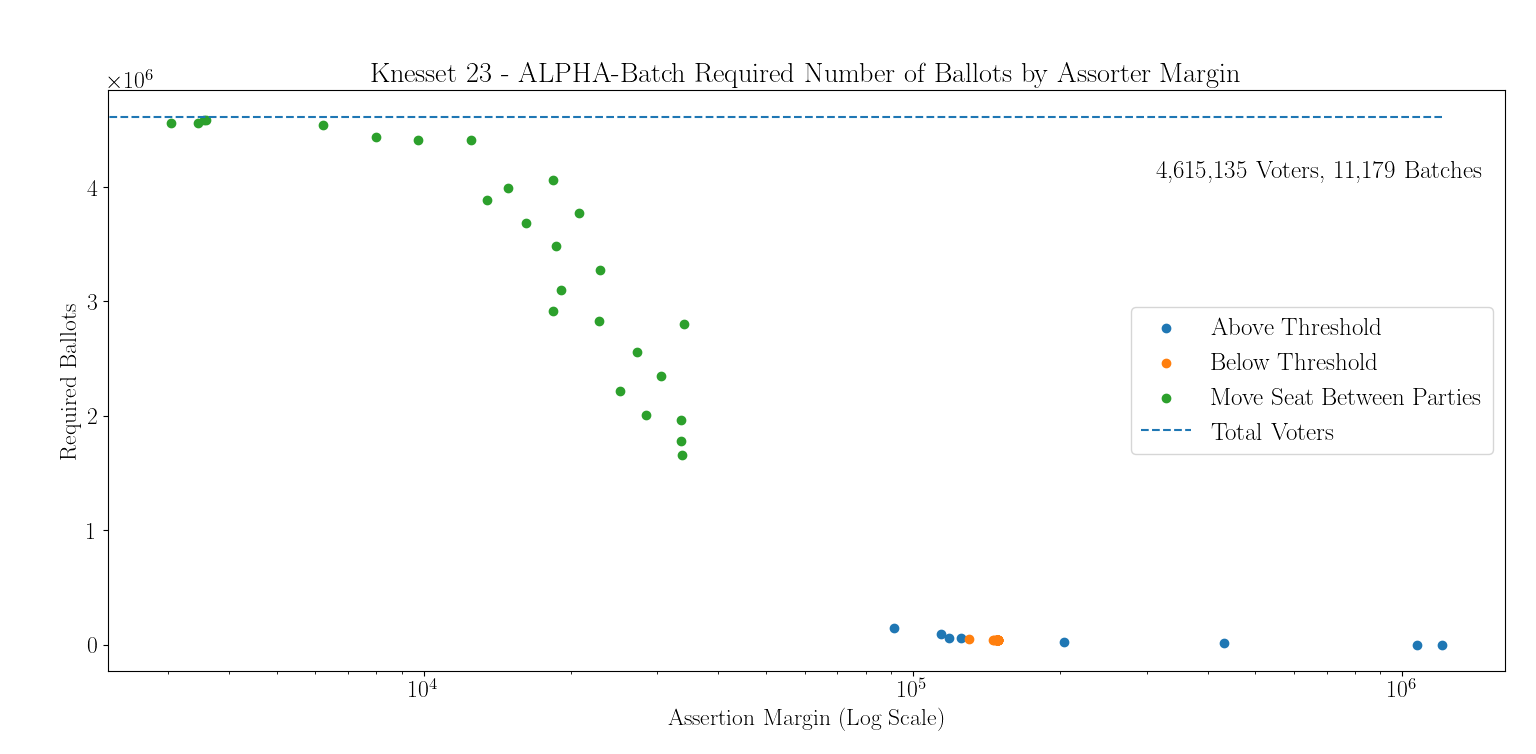} \\[2ex]
    \hspace*{-0.2in}
    \includegraphics[scale=0.42]{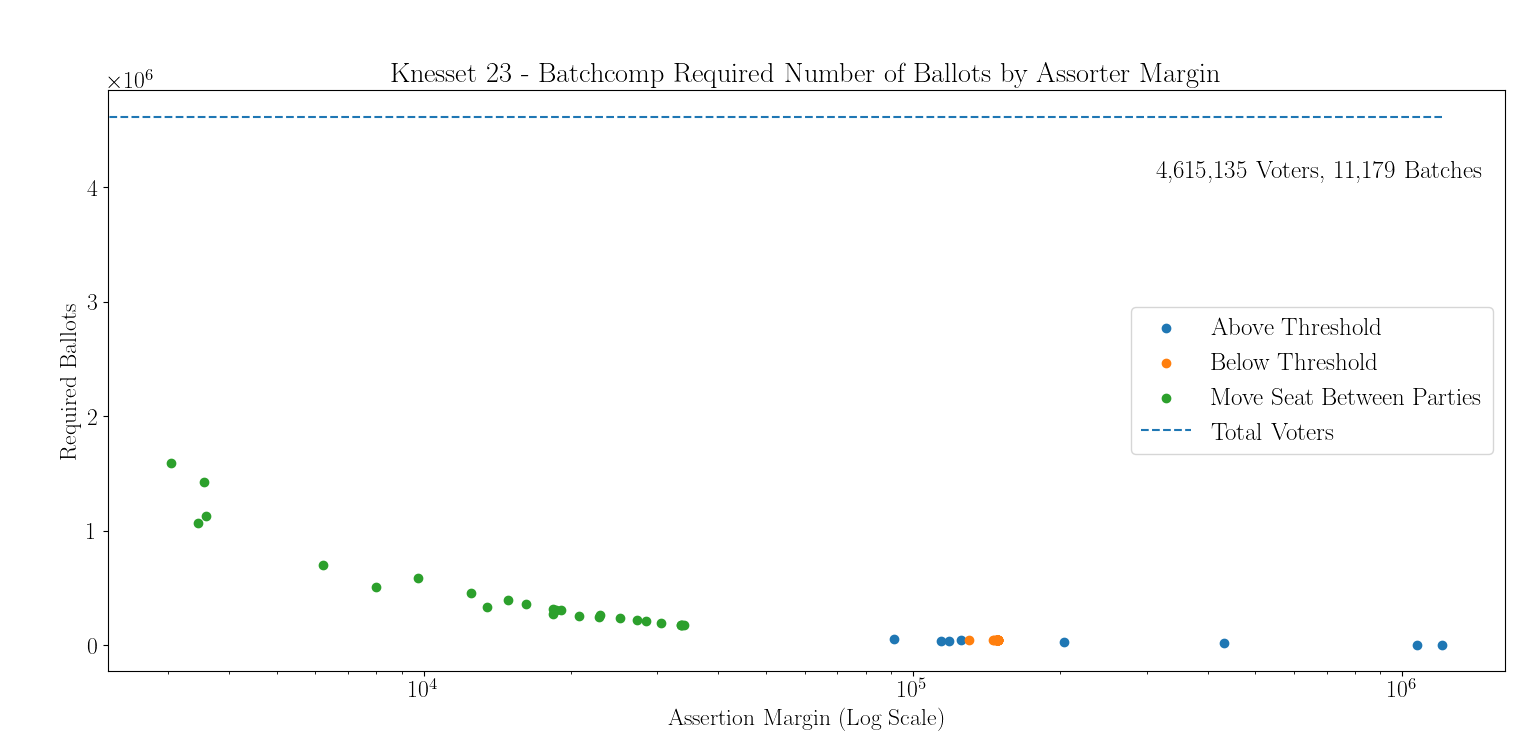} \\[2ex]
    \hspace*{-0.2in}
    \includegraphics[scale=0.42]{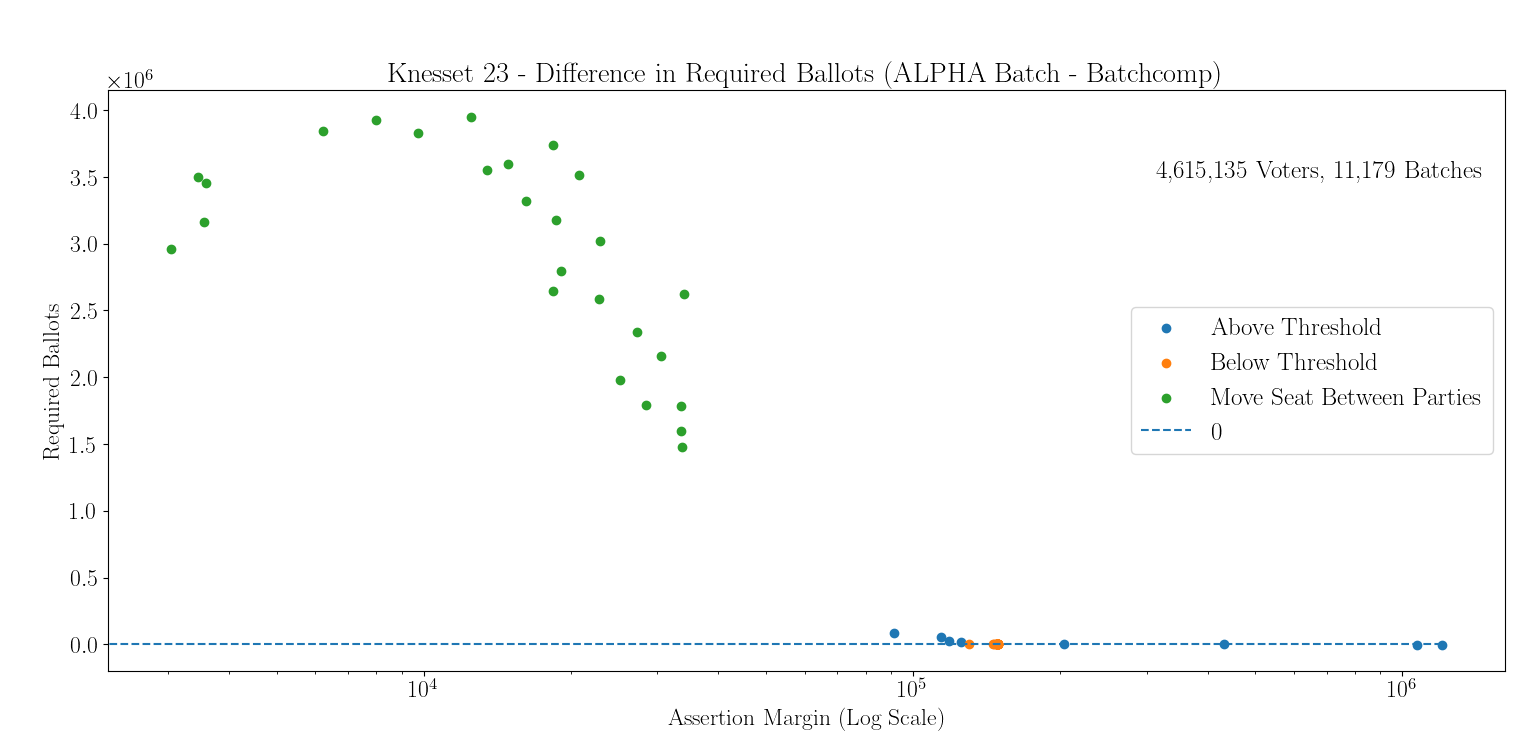}
\end{center}
Approving the reported winners for this election cycle required auditing 35\% of ballots by Batchcomp, while requiring 99\% by ALPHA-batch. 

The three assertions which required the most ballots to be approved by the Batchcomp algorithm are:

\begin{center}
    \begin{tabular}{|c || c | c | c |} 
     \hline
      & Margin & Batchcomp & ALPHA \\
     Assertion & (\% of votes) & (\% of votes) & (\% of votes)
     \\[0.5ex] 
     \hline\hline
     Don't move seat from & 3,042 & 1,593,006 & 4,556,963 \\
      Emet \& Blue and White to Likud \& Yamina  & $(0.07\%)$ & $(35\%)$ & $(99\%)$\\
     \hline
     Don't move a seat from & 3,545 & 1,421,340 & 4,583,377 \\
     Emet \& Blue and White to UTJ \& Shas & $(0.08\%)$ & $(31\%)$ & (99\%) \\
     \hline
     Don't move a seat from  & 3,591 & 1,126,277 & 4,580,758 \\
     Yisrael Beiteinu to UTJ \& Shas & $(0.08\%)$ & $(24\%)$ & (99\%) \\
     \hline
    \end{tabular}
\end{center}

\paragraph*{Using the 24th Knesset Election Results (2021)}
    \begin{center}
    \hspace*{-0.2in}
    \includegraphics[scale=0.42]{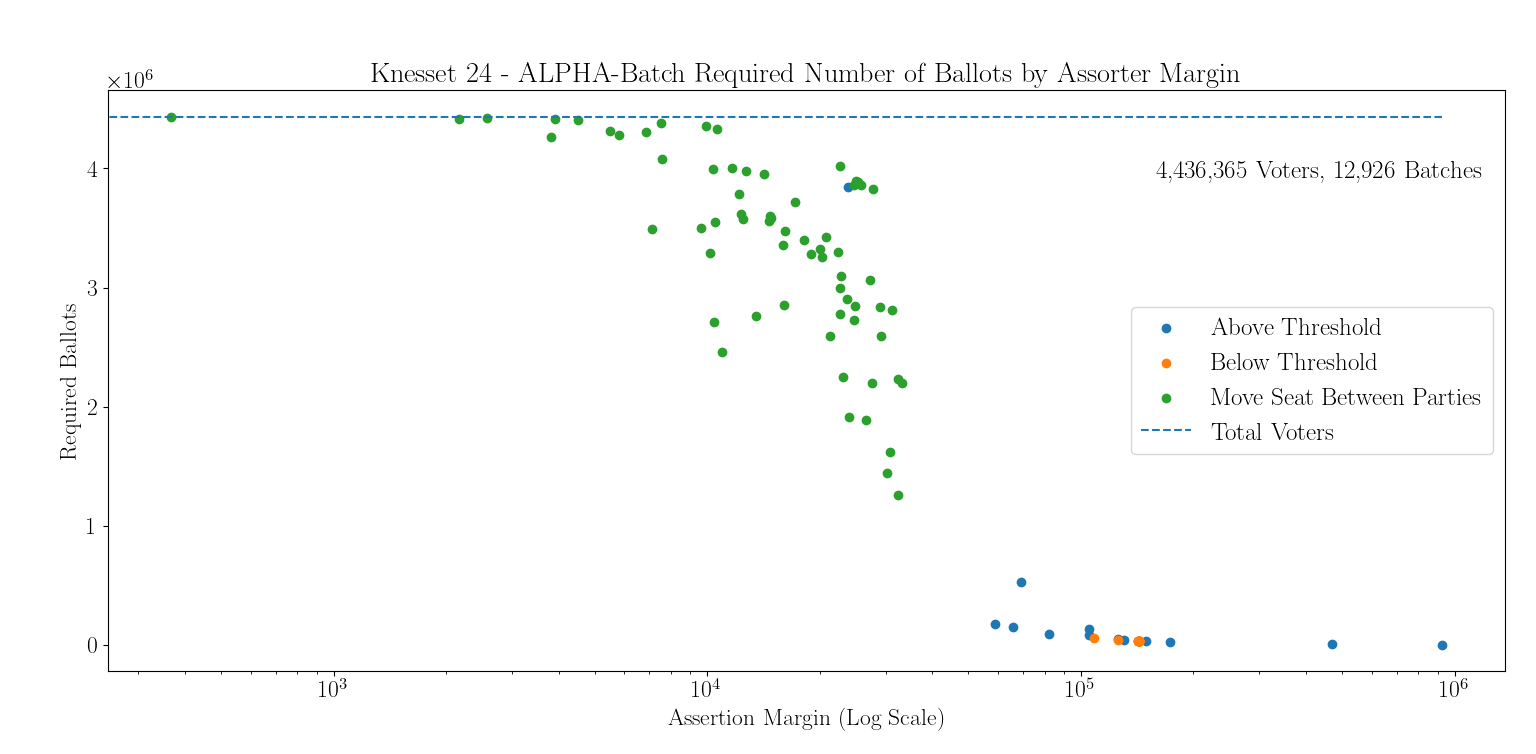} \\[2ex]
    \hspace*{-0.2in}
    \includegraphics[scale=0.42]{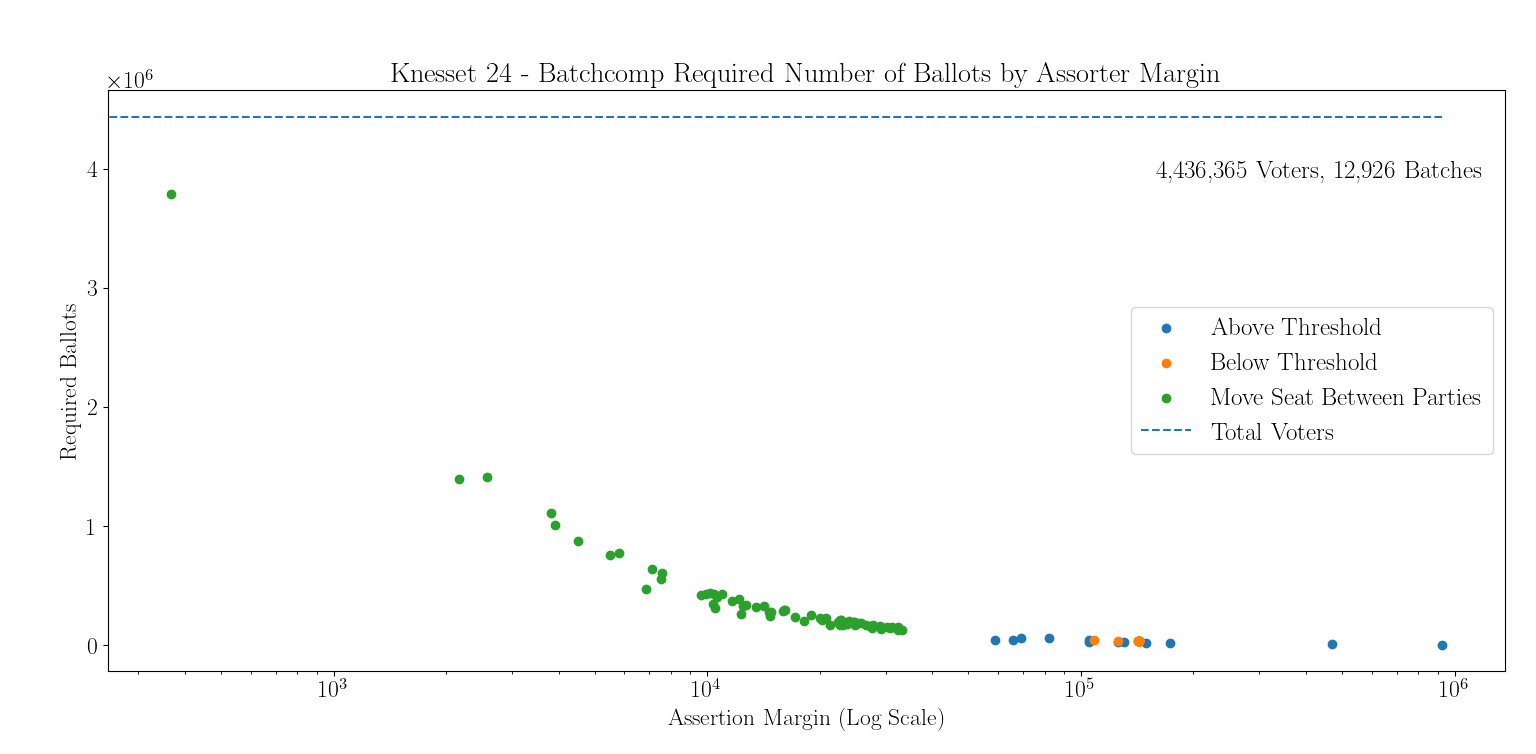} \\[2ex]
    \hspace*{-0.2in}
    \includegraphics[scale=0.42]{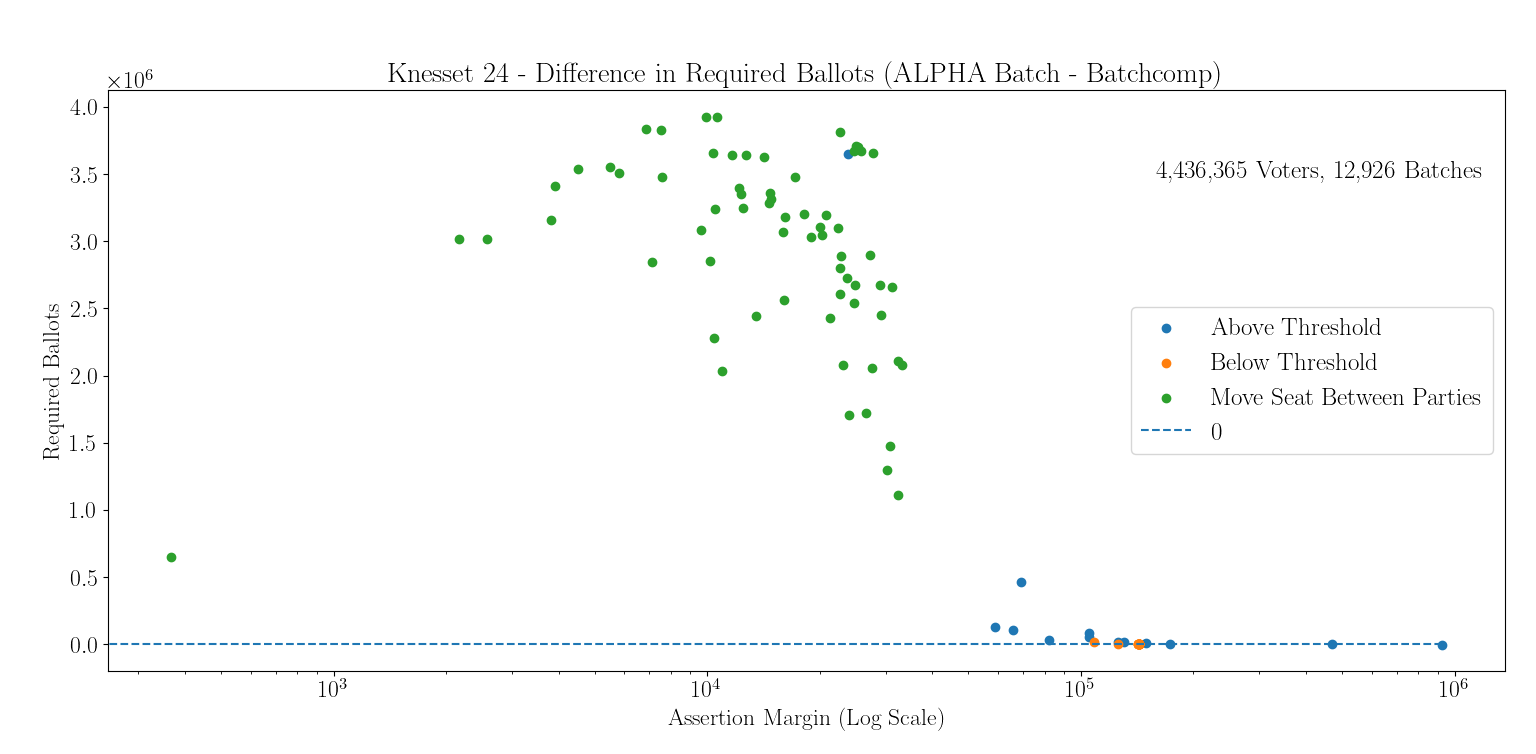}
\end{center}
Approving the reported winners for this election cycle required auditing 85\% of ballots by Batchcomp, while requiring virtually all ballots by ALPHA-batch. If it wasn't for a single assertion which had a very small margin (367 ballots), the Batchcomp audit would be done after auditing \~32\% of the ballots, while ALPHA-batch would still require reading nearly all  ballots.

The three assertions which required the most ballots to be approved by the Batchcomp algorithm are:
\begin{center}
    \begin{tabular}{|c || c | c | c |} 
     \hline
      & Margin & Batchcomp & ALPHA \\
     Assertion & (\% of total votes) & (\% of votes) & (\% of votes)
     \\[0.5ex] 
     \hline\hline
     Don't move a seat from & 367 & 3,782,124  & 4,435,111 \\ 
     Meretz to Labor & $(0.008\%)$ & $(85\%)$ & $(\approx 100\%)$\\
     \hline
     Don't move a seat from & 2,567 & 1,410,184 & 4,424,102 \\
     The Joint List to Likud \& Religious Zionist  & $(0.06\%)$ & $(32\%)$ & $(\approx 100\%)$\\
    \hline
     Don't move a seat from & 2,162 & 1,392,993 & 4,412,059\\
     New Hope to Yamina & $(0.05\%)$ & $(31\%)$ & $(99\%)$\\ \hline
    \end{tabular}
\end{center}

\subsubsection{Results with Small Tabulation Inaccuracies}
    \label{sec:sim inaccurate tallies}
    
In addition to checking the Batchcomp method's efficiency under ``perfect" conditions, we examine its tolerance to small counting errors. For this purpose, we compare the number of ballots it requires to approve each assertion under two conditions:
\begin{enumerate}
    \item When each ballot has a probability of $0.01$ to be misread in the reported tally. If a ballot is misread, it either becomes invalid (w.p.\ 0.1) or is counted towards a party drawn uniformly at random.
    \item When the reported vote tallies of all batches are completely accurate, as examined previously in~\Cref{sec:sim accurate tallies}.
\end{enumerate}
We present the plot described in~\Cref{sec:sim accurate tallies} for each condition, as well as one additional plot which shows the difference in ballots required between the two conditions. Note that the assertion margins presented in these plots (the x axis) are calculated according to the reported results and not the true ones, since the true margin changes in each repetition of the simulation.

The choice of $0.01$ probability for miscounting each ballot is inspired by historical data. Unless critical errors occur, both manual and electronic vote tabulations miscount less than 1\% of ballots~\cite{blom2020random, ansolabehere2018learning}. 

The 22nd Knesset elections require very small counting errors to change their results. For this reason, it's extremely unlikely for the seat-allocation to remain identical if approximately 1\% of the votes are miscounted. Since we are only interested in Batchcomp's performance when the reported and true winners match, this section only examines the 23rd and 24th Knesset elections.

\paragraph*{Using the 23rd Knesset Results (2020)}
\begin{center}
    \hspace*{-0.2in}
    \includegraphics[scale=0.42]{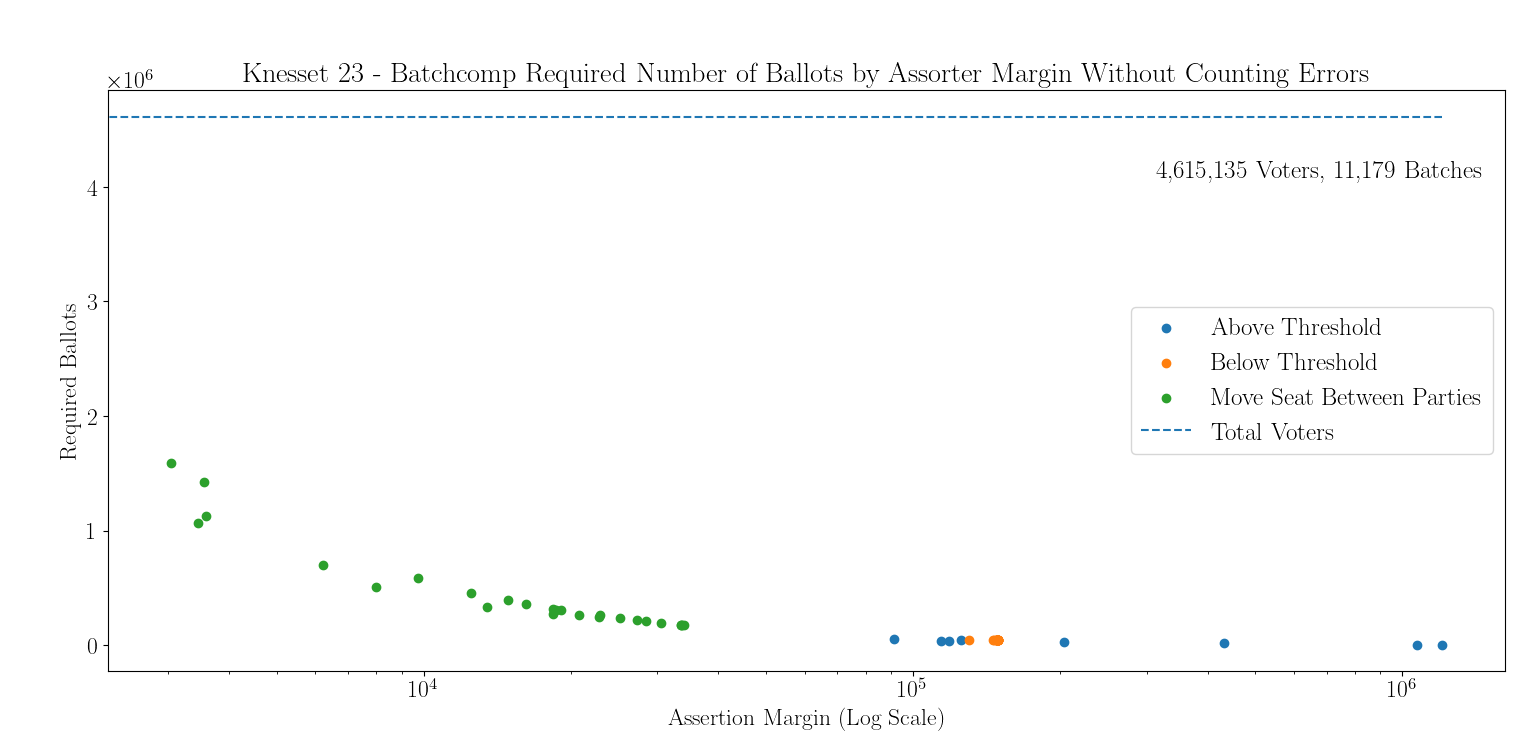} \\[0.2In]
    \hspace*{-0.2in}
    \includegraphics[scale=0.42]{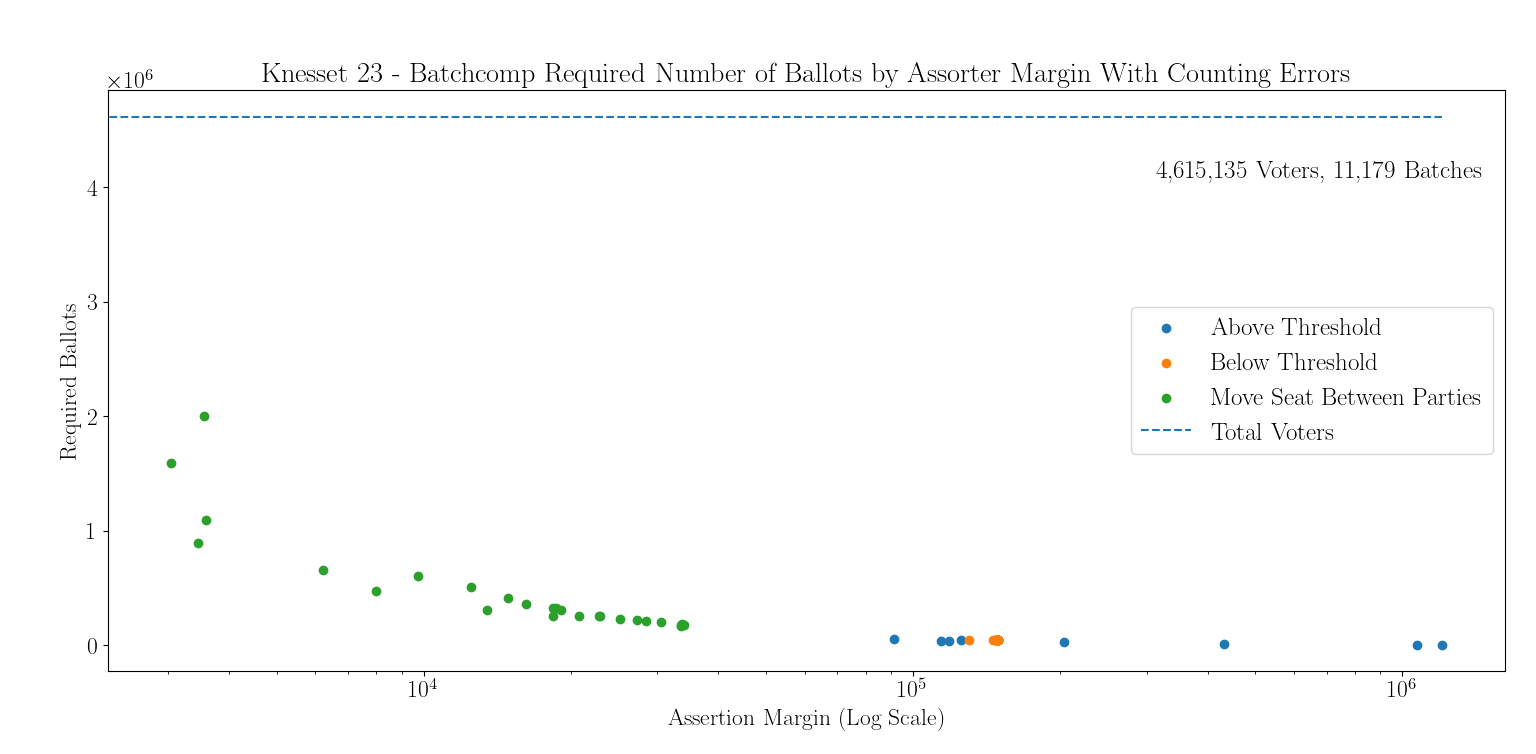} \\[0.2In]
    \hspace*{-0.2in}
    \includegraphics[scale=0.42]{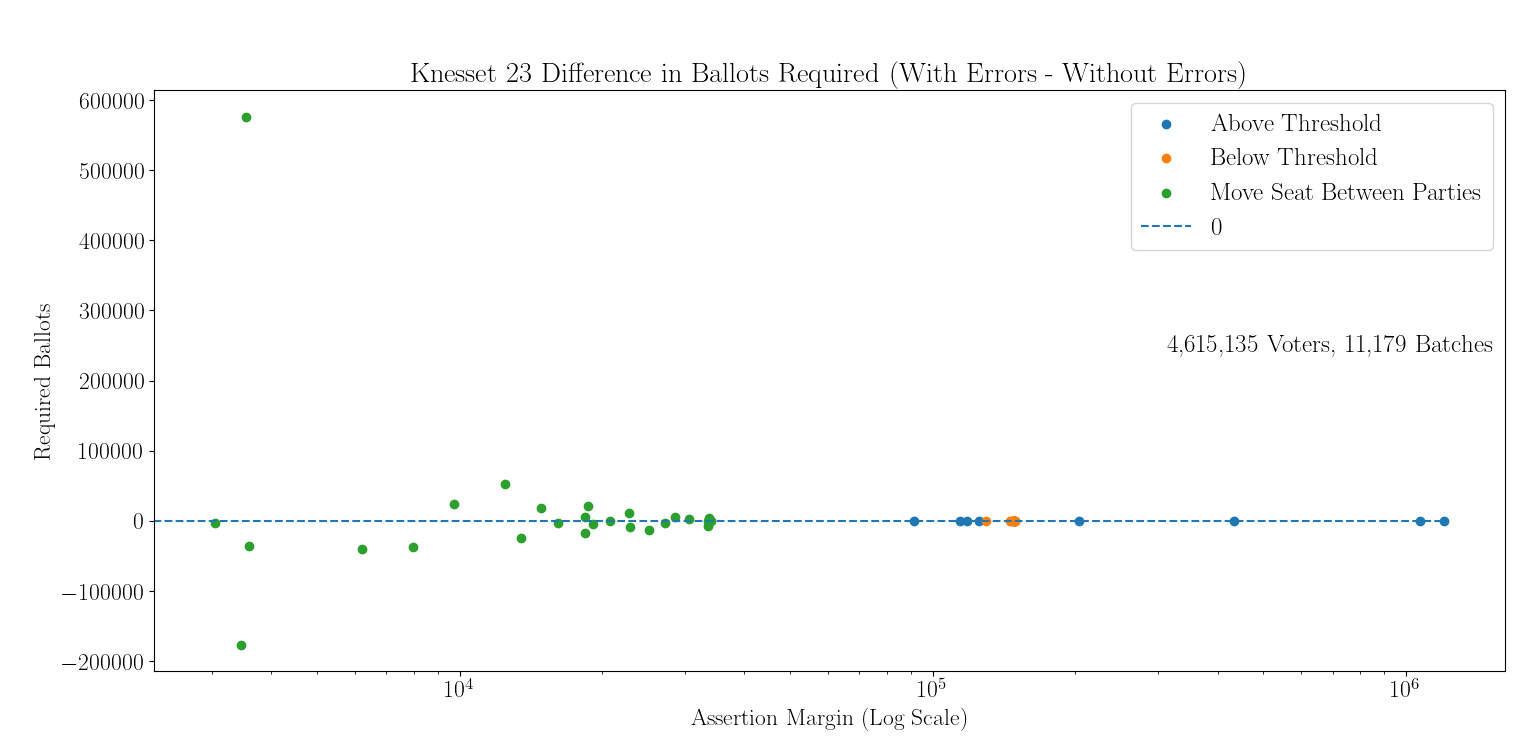}
\end{center}
\paragraph*{Using the 24th Knesset Results (2021)}
\begin{center}
    \includegraphics[scale=0.42]{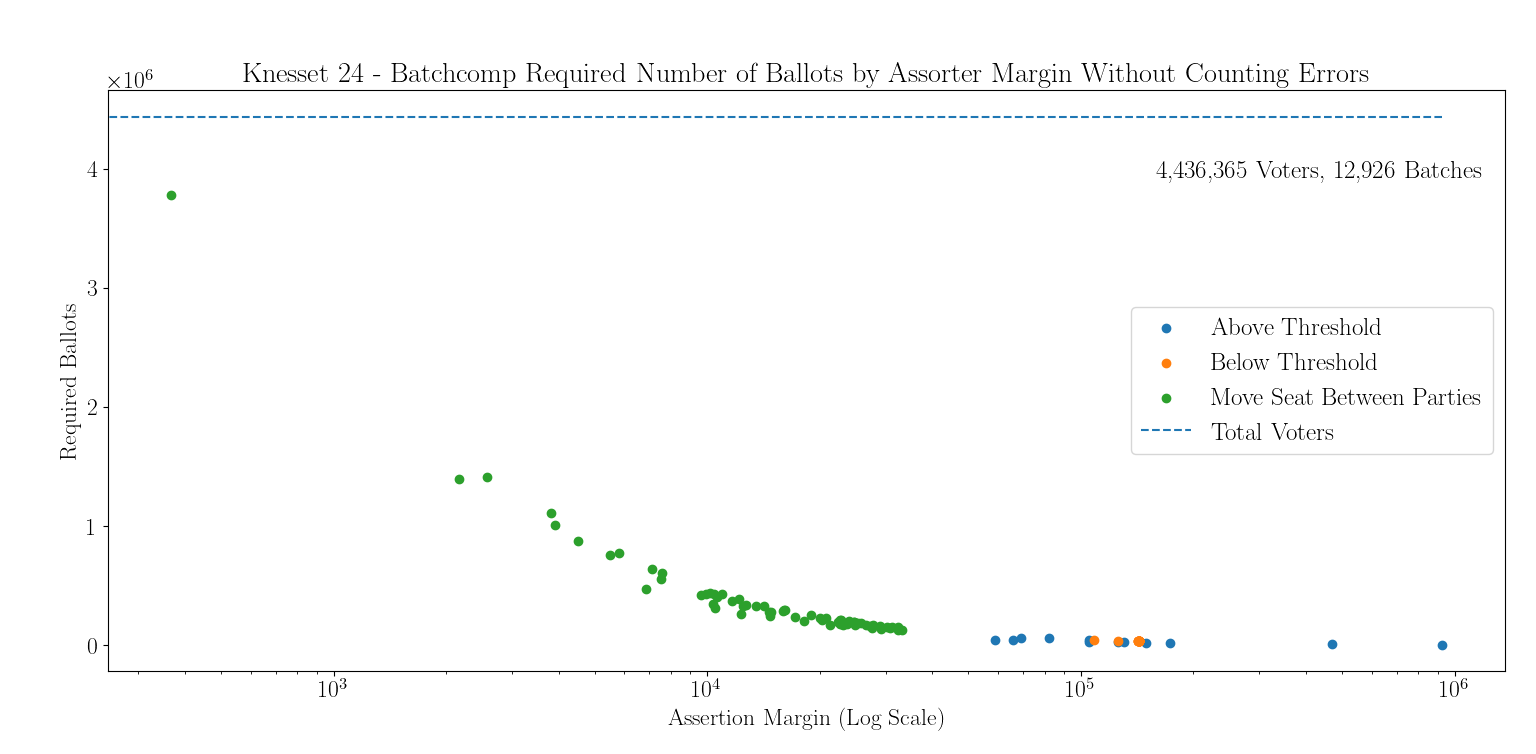} \\[0.2In]
    \hspace*{-0.2in}
    \includegraphics[scale=0.42]{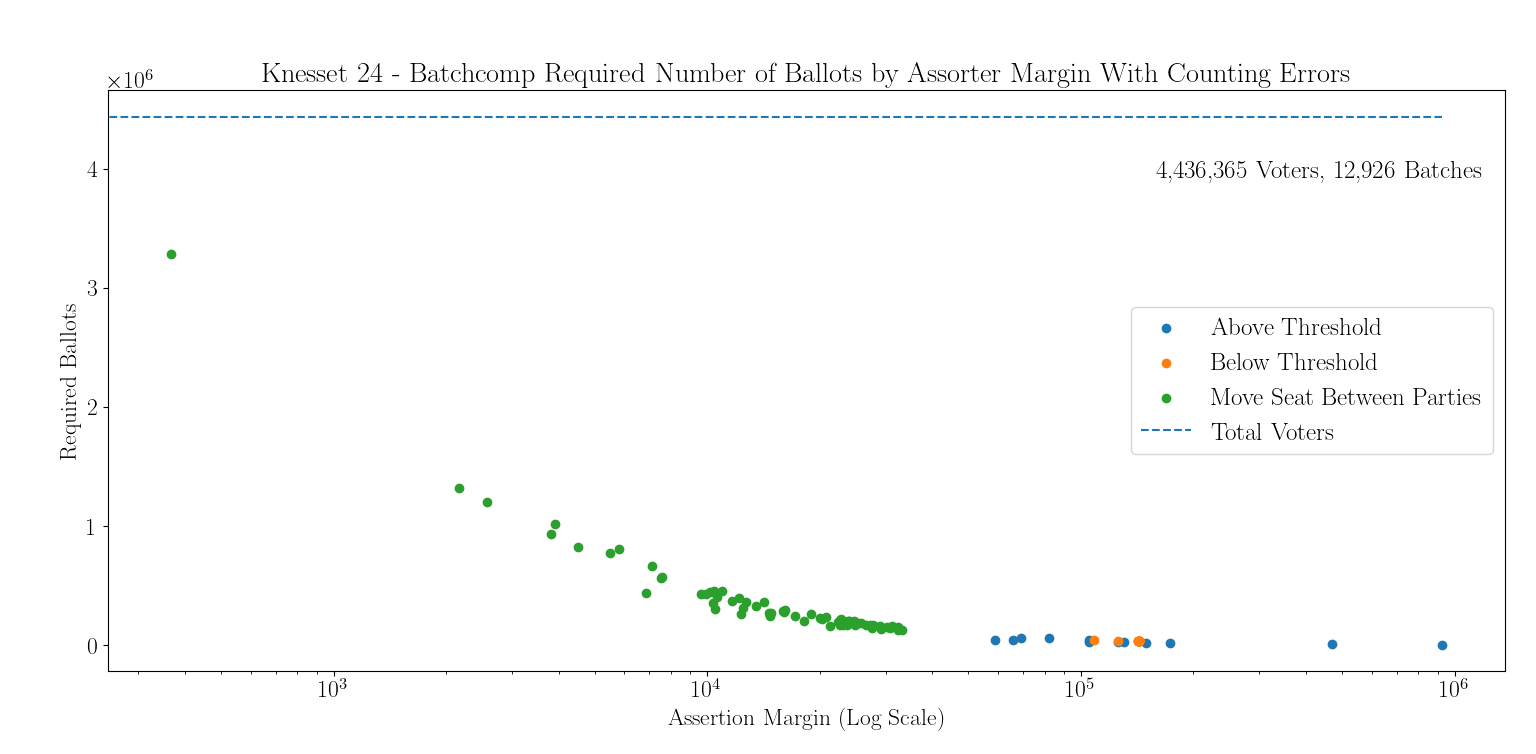} \\[0.2In]
    \hspace*{-0.2in}
    \includegraphics[scale=0.42]{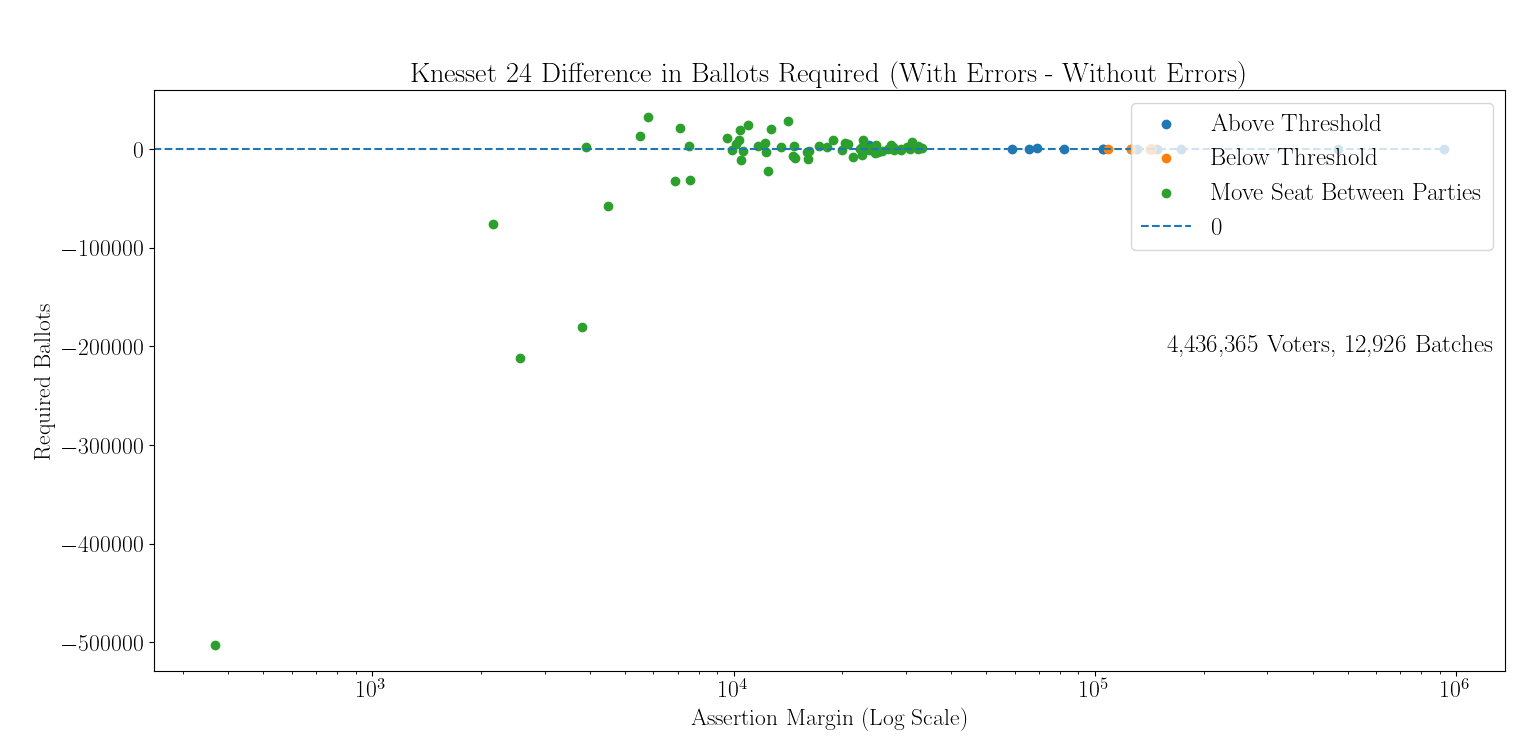} 
    \end{center}
Examining both election cycles shows that most assertions are not significantly effected by the existence of small counting errors. Meaning, the  number of ballots that are required to approve them remains similar. However, we can observe that random counting errors do effect some small margined assertions. Typically, assertions of the same type and of similar margins behave similarly. Here, somewhat surprisingly, some of these tight move-seat assertions become much easier to audit, while others become much more difficult. 

\paragraph*{Why Similar Assertions Exhibit Different Error Tolerance}

A full exploration of this phenomenon is beyond the scope of this work, but we attempt to provide a brief and mostly intuitive explanation for it. First, note that randomly miscounted votes are distributed evenly to all parties, while being disproportionately taken from parties who are above the electoral threshold, as they receive the vast majority of votes. Therefore, a party who is above the threshold will typically lose more votes than it gains due to random miscounts.    
    
Next, examine a move-seat assertion which confirms that compared to the reported seat-allocation, some party A doesn't deserve extra seats at the expense of party B. The counting errors which could cause this assertion to be false are either overcounting the votes for party A, or undercounting the votes for party B. Let $m_{red}$ be the minimal number of votes we would need to reduce from party A (compared to its reported tally) to make the assertion false. Similarly, let $m_{add}$ be the number of votes we would need to add to party B to make the assertion false.

We could now partition all move-seat assertions into two categories: (I) assertion for which $m_{red}<m_{add}$ and (II) assertions for which $m_{red}\geq m_{add}$. An examination of move-seat assertions from these simulations shows that assertions from category (II) are helped by random counting errors, meaning the errors reduce the number of ballots required to confirm these assertions. Meanwhile, assertions from category (I) are harder to confirm when random counting errors exist.

Since we've established that random errors typically reduce the tallies of parties who passed the threshold, assertions for which $m_{red}<m_{add}$ are harder to confirm when the parties of the assertion both lose votes, and vice-versa. This is because the margins of assertions from category (I), according to the true results, decreases by the existence of random errors, while the margin of assertions from category (II) increases.

\subsubsection{Simulation Conclusions} \label{sec: rla sim conclusions}

\paragraph*{Knesset Elections are Difficult to Audit}

Most Knesset elections have very tight margins, which make them difficult to audit in a risk-limiting manner. If the Election winners win with a margin of below 0.001\% of the total ballots, it's unlikely that any RLA method could approve them without close to a full manual recount.

To remedy this, if RLAs are implemented for such elections, the auditing body could decide in advance that some tight-margined assertions are not worthy of auditing. In the 24th Knesset elections, for example, the tightest assertion (don't move a seat from Meretz to Avoda) relates to two parties who are in an electoral alliance, indicating that they are ideologically aligned. Auditing this assertion nearly triples the length of the audit, despite it being one of the least critical assertions for this election.

In addition to this, it is possible to adjust any move-seat assertion such that it verifies that no more than a single seat should be moved between two parties, compared to the reported results. For any two parties, $p_1, p_2$ and their move-seat assertion $a_{p_1,p_2}^{move}$, this is achieved by defining $a_{p_1,p_2}^{move}$ (see~\Cref{def: move seat assorter}) as if $p_1$ won one extra seat at the expense of $p_2$. Meaning, we define:
\begin{equation*}
    a^{move}_{p_1,p_2}(b) := 
    \begin{cases}
    \half + \frac{s^{rep}(p_1)+2}{2(s^{rep}(p_2)-1)} & \text{if $b$ is for $p_2$} \\
    0 & \text{if $b$ is for $p_1$} \\
    \half & \text{otherwise}
    \end{cases}
\end{equation*}
If a move-seat assertion has a very small margin, we can switch its assorter to this one, thereby shortening the audit at the expense of a weaker guarantee.

\paragraph*{Batchcomp consistently beats ALPHA-batch in These Settings}

While auditing the entire Knesset elections proves to be rather difficult, examining the number of ballots required to approve the various assertions shows that Batchcomp significantly outperforms ALPHA-batch. Generally, assertions that had very small or fairly large margins required a similar number of ballots by both algorithms, while assertions with margins of between 0.01\% and 2\% were significantly easier to audit using Batchcomp. Some assertions which ALPHA-batch could not approve without a nearly full manual recount were approved by Batchcomp while examining less than 20\% of the paper-backup ballots. 

\paragraph*{Resilience to Random Errors}

The Batchcomp RLA appears to be resilient to small random errors in the reported tallies of the various batches. While not shown within this work, this observation is also true regarding ALPHA-batch, as it makes no use of the reported tallies of specific batches.

\paragraph*{What Actually Effects The Audit's Efficiency}

The efficiency of the audit (number of ballots that are required to approve correct winners) seems to be effected only by the assertion which has the tightest margin. The last assertion to be approved remains consistent when repeating the audit's simulation multiple times. The number of assertions and the margin of any other assertion, other than the one with the minimal margin, appear to have no effect on the efficiency of the RLA. 

This is due to the fact that when the reported tallies of the batches are accurate, every Batchcomp assorter has the same value across all batches. Therefore, the progression of the p-value of each assertion ($\frac{1}{T_k}$, as defined in~\Cref{sec:Batchcomp algorithm description}) during the audit remains similar regardless of the order in which we audit the batches. 

This observation persists when there are small random counting errors, but may change the specific assertion which is the most difficult to audit. This is since, as explained in~\Cref{sec:sim inaccurate tallies}, two assertions with similar margins may be effected differently by random counting errors. 

In some cases, the problem of approving a reported election result can be reduced to SHANGRLA assertions in multiple ways, and the auditing body has to choose the specific set of assertions to use during the audit. If the audit is conducted using the Batchcomp method, this observation teaches that we should choose the set of assertions where the tightest margined assertion has the maximal margin. Ideally, if our reduction yields assertions that are both sufficient and necessary for the reported winners of the elections to be correct, then the margin of the tightest assertion is exactly the margin of the entire elections. If this is the case, then any sufficient and necessary set of assertions would perform similarly. Trying to minimize the number of assertions does not, by itself, effect the efficiency of the audit, though it might reduce the computation time required per ballot or batch.

\begin{conclusion}
    Any reduction into SHANGRLA assertions which are both sufficient and necessary yields a similarly efficient audit, regardless of the number of assertions.
\end{conclusion}

\subsection{Existing Recounting Methods}

To the best of my knowledge, recounts in the Israeli Knesset elections are currently performed without an evidence based approach or a clear statistical guarantee. Currently, votes are tabulated manually in each polling place, and the vote tally of each location is reported to the Central Election Committee. This committee reviews the tallies of each polling place, and may re-count the paper-backup ballots if necessary. There appears to be no systematic method for recounting ballots. From published cases where recounts were conducted, it appears that recounts are performed either after complaints of election fraud at a specific polling place, if inconsistent election records are discovered, or sometimes at randomly selected locations~\cite{decisionsGuidelines}.

One particularly interesting partial-recount of ballots happened in 2019, following the elections for the 21st Knesset. In these elections, the New Right party received 3.22\% of the valid votes, falling 1,454 ballots (0.03\% of the valid votes) short of the electoral threshold and thus receiving no seats in parliament. Following these results, the New Right party asked for a vote recount, which eventually resulted in recounting the ballots at 66 polling places where the party claimed for irregularities, amounting to approximately 26,000 votes. The recount resulted in the New Right party losing 3 votes, after which their leaders accepted the original election results~\cite{yeminHadash}.

RLAs offer an evidence based solution for this problem- running the audit with a single assertion, which attempts to confirm that the New Right party is truly below the electoral threshold (see~\Cref{sec: below assorter}). It comes with the statistical guarantee that if the examined party did pass the threshold, the probability of the audit approving the election results is bounded by a pre-set parameter. 

However, note that in this instance, the number of ballots that would need to be examined in such an audit is expected to be rather large. Using Batchcomp with a risk limit of 0.05 would require reading approximately 2,500,000 ballots to approve the results according to simulations, given that the initial vote tabulation was accurate.
\newpage
\section{Census Risk Limiting Audits}
\label{sec:census RLA}

This section presents a risk-limiting audit method for a population census. It applies to nations which allocate political power to their constituencies or federal-states in proportion to their population according to a certain class of methods (highest averages), and who conduct a post-enumeration survey as recommended by UN guidelines~\cite{un2010pes}. According to these guidelines, a PES is performed by randomly sampling a small number of households, re-running the census over this chosen sample, and then comparing the results to the original census. For consistency, throughout this section, we assume that this allocated political power is manifested as the number of representatives a region receives in parliament, and refer to these regions as the nation's federal-states. The goal of our audit is to provide a clear statistical guarantee regarding the correctness of this census' resulting allocation of representatives.

To achieve such a guarantee, we first need to define what allocation is considered correct. When auditing parliamentary elections, we define the true number of seats a party should win as the number it deserves according to the tally of the paper-backup ballots. When auditing a census, one might wish to similarly define the results of the PES as the true results. Such a definition could be problematic, however, as the PES only runs over a small sample of households.

For this reason, we view the true results of the census as the results the PES would have if it was to run over all households. This means that technically, a census RLA assumes that the PES surveyed all households. During the actual census audit, however, it only asks for the information the PES collected on a small, randomly chosen sample of households, which is exactly the data that the PES actually has. 

The census RLA is performed by sequentially sampling households and processing the census and PES information regarding them.
Since the PES only runs over a small sample of households, the audit is limited in its length. For this reason, it could be unreasonable to set a risk-limit (probability of approving wrong results) before the audit begins, as we do in election RLAs. If we were to do so, then
the audit may fail to approve a correct representative allocation even when using the entire PES sample, resulting in an inconclusive outcome. This issue does not exist when auditing elections, as the audit can keep sampling and reading ballots until it either approves the reported winners, or until reading all ballots and learning the true results. 

The observation above leads us to slightly change the statistical guarantee that a census RLA provides: instead of setting the risk-limit and then running the audit, the census RLA runs over the entire PES and then returns the risk-limit with which it can approve the census representative allocation. This results in the following statistical guarantee:

\begin{center}
\fbox{\begin{minipage}{40em}
  \begin{center}
  \textbf{\hypertarget{census rla guarantee}{The Census RLA guarantee}:}
  
  For any $0 < \alpha \leq 1$, if running the PES over all households would lead to a different allocation of representatives than the census, then the probability that a census RLA returns a value $\alpha'$ such that $\alpha' \leq \alpha$ is at most $\alpha$.
    \end{center}
\end{minipage}}
\end{center}

Using this guarantee, a governing body could examine the risk-limit returned by a census RLA in order to decide whether the allocation of representatives to states is reliable enough. If it is not, they may decide to conduct a second round of re-surveying, and to continue the audit on these newly re-surveyed households. Alternatively, if the audit's outputted risk-limit is too high, it may also decide to re-run the census altogether. Our suggested census RLA method can also examine which state specifically is more or less likely to have a correct number of representatives.  This will be discussed further at the end of~\Cref{sec: crla conclusions}.

As a summary of this section, a census-RLA is an altered version of an election-RLA, where we wish to verify that the census results in a correct allocation of representatives to the federal states. The critical differences between an election-RLA and a census-RLA are summarized in the following table:
\begin{center}
        {\renewcommand{\arraystretch}{1.1}
        \begin{tabular}{| @{\hskip 0.1in}c@{\hskip 0.1in} || @{\hskip 0.1in}c@{\hskip 0.1in} | @{\hskip 0.1in}c@{\hskip 0.1in} |} 
         \hline
          Category & Election RLA & Census RLA \\ 
         \hline \hline 
         Goal & \begin{tabular}{c}
            Approve \\ election winner's
         \end{tabular} & \begin{tabular}{c}
            Approve census' allocation \\ of representatives
         \end{tabular}
         \\ \hline 
         Applicable & Ballots have & Allocation is proportional  \\
         When &  physical backups & to population and a PES exists
         \\ \hline
         Audited Unit &  Ballot &  Household 
         \\ \hline
         Reported Result &  Electronic vote count &  Census resident count 
         \\ \hline
         True Result &  Paper-backup ballots manual count &  PES resident count
         \\ \hline
         Risk-Limit & Pre-set & Outputted by audit 
         \\ \hline
        Audit's Length & 
        Factor of risk-limit &
         Factor of PES size
         \\ \hline
        \end{tabular}}
    \end{center}

\subsection{Preliminaries and Notation}

\subsubsection{Post Enumeration Survey}
\label{sec: pes}
A post enumeration survey is a process which measures the accuracy of a population census by conducting an  independent population survey over a small portion of randomly chosen households. Our census-RLA method assumes that the PES is done according to the guidelines published by the Department of Economic and Social Affairs of the United Nations~\cite{un2010pes}. According to these guidelines, the PES begins by choosing a partial sample of the households in a nation, such that each household has an equal probability of being included in this sample. In some instances, the pool of households from which this sample is taken includes all households that were surveyed in the original census. In other cases, such as in the US census, this pool of potential households is constructed independently of the original census. Our suggested census RLA method applies for both options.

After sampling the households to be included in the PES, a new and independent survey contacts each household and asks them the exact same questions as the original census. Since the PES is conducted a few months or years after the census, the household members report their answers as they were on the date of the original census. For our purposes, the only information of interest is the number of residents living at each  household\footnote{Some countries may allocate representatives to federal-states according to the number of a specific sector of the population that they hold (e.g.\ eligible voters or citizens). In this work, we assume it's simply the number of residents, but our methods apply in the same manner otherwise.}. If a sampled household did not respond during the PES, it reports that it holds no residents.

We denote the information given by the census as follows:
\begin{itemize}
    \item $H^{cen}$: A list of households that were surveyed.
    \item $g^{cen}(h)$: The number of residents a household $h\in H^{cen}$ has according to the census.
\end{itemize}
And denote the information given by the PES as:
\begin{itemize}
    \item $H^{PES}$: A list of households from which the sample used by the PES was chosen. This may or may not be identical to $H^{cen}$. 
    \item $\tilde{H}^{PES}$: The households which were surveyed by the PES. Must be a subset of $H^{PES}$.
    \item $g^{PES}(h)$: The number of residents a household $h\in {\tilde{H}^{PES}}$ holds according to the PES.
\end{itemize}

\subsubsection{The Census RLA Model}
\label{sec: census RLA model}
In our model a nation allocates $R$ representatives to its federal-states, whose set we denote as $\mathcal{S}$, in proportion to their population as measured by a country-wide census. We assume that after the census is finished, a PES is conducted as described in~\Cref{sec: pes}. Following the census and the PES, we learn $H^{cen}$ and $g^{cen}$ from the census, and $H^{PES}, \tilde{H}^{PES}$ and $g^{PES}$ from the PES.

We assume that the nation allocates its $R$ representatives to its federal-states using the census results, according to a highest averages method. Highest averages methods, such as the D'Hondt method described in~\Cref{sec: knesset seat allocation}, are a class of methods which allocate representatives to federal-states using an imaginary table. Each row in this table represents a state, and each column represents a potential number of representatives it may win. Each cell of this table holds a value which depends on its federal-state (its row), the number of representatives it represents (its column) and the number of residents in its federal-state according to the census. The $R$ cells with the highest values in the table are colored, and each federal-state receives a number of representatives equal to the number of colored cells it has in its row. The value at cell $[s, r]$, where $s\in \mathcal{S}$ and $r\in [R]$, is:

\begin{align} \label{eq: census repr allocation}
    \frac{g^{cen}(s) + c_s}{d(r)},
\end{align}
    
where $c_s$ is a constant which depends on the state $s$, and $d:\bbN \rightarrow \bbN$ is a monotonically increasing function. Since $d$ is monotonically increasing, the values in each row of the table are monotonically decreasing.

The choice of $c_s$ and $d$ determines the exact allocation method within the class of highest averages methods. For example, setting $d(j)=j$ and $c_s=0$ for any $s\in \mathcal{S}$ results in the D'Hondt method described in~\Cref{sec: knesset seat allocation}.

The additive factor in~\eqref{eq: census repr allocation}, $c_s$, allows our model some added flexibility, meaning it can include more political systems. If, for example, representatives are allocated to federal-states according to a weighted sum of their population and their land-area, as done in Denmark~\cite{denmarkElectorealSystem}, this model supports this type of seat allocation by defining $c_s$ to be the land-area of $s$ in appropriate units. This additive factor can also be used in cases where part of a state's population is not included in the PES. In the US, for example, we would want to exclude people living in group residence (e.g.\ homeless people, nursing home residents, people living in remote Alaska, etc') from the audit, since they are not covered by the PES. If we wish to exclude them from the RLA, we can assume their number according to the census is accurate and run the audit over the rest of the population. To do so within this model, we can define  $c_s$ to be the number of persons without permanent residence in state $s$ according to the census. 

More formally, a census-RLA is defined as follows:
\begin{definition}
    Let $f$ be a social choice function which allocates representatives to federal-states based on census data. Let $\mathcal{C}$ be a randomized algorithm which outputs a non-negative value, and takes the following inputs:
    \begin{itemize}
        \item A list of households according to the census $H^{cen}$ and according to the PES $H^{PES}$, where the state of each household is known.
        \item A subset of households which were surveyed during the PES: $\tilde{H}^{PES}\subseteq H^{PES}$.
        \item Census results $g^{cen}:H^{cen}\rightarrow \bbN_0$.
        \item PES results $g^{PES}:\tilde{H}^{PES}\rightarrow \bbN_0$.
    \end{itemize}
     
    $\mathcal{C}$ is a census risk-limiting audit for the social choice function $f$ if for any $H^{cen},H^{PES}$, $g^{cen}$ and PES results over all households $g^{PES}:H^{PES}\rightarrow \bbN_0$ , given a random subset of PES households $\tilde{H}^{PES} \subseteq H^{PES}$ of a certain pre-determined size, we have:
    \begin{gather*}
        f(g^{cen}(\cdot)) \neq f\left(g^{PES}(\cdot)\right) \\ \Downarrow \\
        \forall \alpha \in [0,1], \Pr \left[\mathcal{C}\left(H^{cen}, H^{PES}, \tilde{H}^{PES}, g^{cen}(\cdot), g^{PES}\left(\tilde{H}^{PES}\right)\right) \leq \alpha \right] < \alpha.
    \end{gather*}
    Where $g^{PES}\left(\tilde{H}^{PES}\right)$ denotes that $\mathcal{C}$ only receives access to the PES results over the surveyed households $\tilde{H}^{PES}$.
\end{definition}
Note that in this definition, $g^{PES}$ encodes the results of the PES if it was to run over all households. However, $\mathcal{C}$ only receives access to $g^{PES}$ over households that were actually surveyed during the PES. Thus, if $f(g^{cen}(\cdot)) \neq f\left(g^{PES}(\cdot)\right)$ is true, it means if the PES had surveyed all of $H^{PES}$, it would result in a different allocation of representatives than that of the census. If this is true, the~\hyperlink{census rla guarantee}{census RLA guarantee} demands that for any $\alpha\in[0,1]$ a census RLA will output a value smaller than $\alpha$ w.p.\ of at most $\alpha$, which is exactly the demand stated in the definition above.

\subsection{The Census RLA Algorithm}
This section suggests a new method to performs census RLAs, which relies on the SHANGRLA framework. In~\Cref{sec: census household-level assorter}, we design SHANGRLA assertions for auditing the census' resulting allocation of representatives to the federal-states. While these assertions can be used as-is to perform a census RLA, they are only an intermediate step in the development of more efficient assertions that are eventually presented in~\Cref{sec: census comparison assorter}. These assertions are used by an adapted version of the ALPHA martingale test to perform a census RLA, as described in~\Cref{sec: census RLA description}.

\subsubsection{Assumption and Notation}
\label{sec: census assumption}
Our census RLA  method relies on one simplifying assumption:
\begin{assumption}
    In both the census and in the PES, the number of residents in a single household is upper-bounded by a known value, denoted as $g^{max}$.
\end{assumption}

The value $g^{max}$ must be set before the PES is conducted. Both the census and the PES must report that all households have $g^{max}$ residents at most. 

This assumption is necessary due to a critical difference between elections and censuses; In elections, a single ballot has very limited power. In a census, if it was not for this assumption, a single household could hold an arbitrarily large number of residents and completely swing the allocation of representatives to the states.

Under this assumption, the number of residents at a household according to the census is given by the function $g^{cen}: H^{cen} \rightarrow [g^{max}]\cup\{0\}$, and the number of residents according to the PES is given by $g^{PES}:\tilde{H}^{PES} \rightarrow [g^{max}] \cup\{0\}$. Throughout the next sections, we sometimes abuse notation by applying $g^{cen}$ on households that are not from $H^{cen}$, or applying $g^{PES}$ on households that were not surveyed during the PES. In any such case, we assume that these functions return 0. Finally, for a state $s\in \mathcal{S}$ and a household $h\in H$, we denote the number of residents from state $s$ at household $h$ by $g^{cen}_s(h)$ (according to the census) and $g^{PES}_s(h)$ (according to the PES). If $h$ is not in state $s$, we simply have $g^{PES}_s(h) = g^{cen}_s(h) = 0$.

\subsubsection{Census Assorters}
\label{sec: census assorters}
We begin by adapting the definition of assertions and assorters to the language of census RLAs. When auditing elections, an assorter is defined as a non-negative function over the set of possible ballots a voter may cast. When auditing a census, we define an assorter as a non-negative function over the set of all households, meaning  $a\colon H\rightarrow [0, \infty)$. An assorter $a$ satisfies the assertion $\frac{1}{|H|}\sum_{h\in H}{a(h)} > \half$ if and only if some condition regarding the allocation of representatives to the federal states is true. 
\begin{definition}
    A set of functions: $a_1,...,a_\ell:H\rightarrow [0,\infty)$ are \textbf{census assorters} if the allocation of representatives according to the census and the PES match iff for all $k\in [\ell]$:
$$
    \frac{1}{|H|}\sum_{h\in H}{a_k(h)} > \half.
$$ 
These $\ell$ inequalities are referred to as the \textbf{census assertions}.
\end{definition}

\subsubsection{Designing Household-Level Assorters}
\label{sec: census household-level assorter}

In this section, we present assorters that can be used for a census RLA in our described model. In~\Cref{sec: census assorters}, we use these assorters to develop new, more efficient assorters which are used during the census RLA method described in~\Cref{sec: census RLA description}.

These assorters are developed by reducing the problem of confirming the census' allocation of representatives to the problem of verifying that a set of linear inequalities are all true (\Cref{thm: census assorters 1}). Once we have such inequalities, we use the method described in~\Cref{sec: finding assorters} to convert them to equivalent SHANGRLA assertions, giving us our census assorters.

\begin{theorem} \label{thm: census assorters 1}
    Assume the PES surveyed all households. The allocation of representatives according to the census and according to the PES match, if and only if for any two states $s_1,s_2\in\mathcal{S}$:
    \begin{align} \label{eq: census linear ineq}
        \frac{\sum_{h\in H} g_{s_1}^{PES}(h) + c_{s_1}}{d(r^{cen}(s_1))} > \frac{\sum_{h\in H} g_{s_2}^{PES}(h) + c_{s_2}}{d(r^{cen}(s_2)+1)}.
    \end{align}
    Where $r^{cen}(s)$ is the number of representatives that state $s$ is allocated according to the census. The rest of the notation is defined in~\Cref{sec: census RLA model}.
\end{theorem}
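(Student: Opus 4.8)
The plan is to mirror the argument used for the Knesset move-seat assertion (\Cref{claim: move-seat assorter} together with \Cref{thm: Knesset}), adapted to the abstract highest-averages table of \Cref{sec: census RLA model}. Throughout, I imagine the highest-averages table built from the \emph{PES} population counts: the cell $[s,r]$ holds $(\sum_{h\in H} g_s^{PES}(h) + c_s)/d(r)$, the $R$ cells with the largest values are colored, and — since $d$ is increasing and the populations are non-negative, so each row is monotonically decreasing — a state $s$ receives exactly its first $r^{PES}(s)$ cells colored, where $r^{PES}(s)$ denotes the number of representatives $s$ would obtain from the PES. I will also use the obvious bookkeeping identity $\sum_{s\in\mathcal{S}} r^{PES}(s) = R = \sum_{s\in\mathcal{S}} r^{cen}(s)$.

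For the forward direction, assume the census and PES allocations coincide, i.e.\ $r^{PES}(s) = r^{cen}(s)$ for every $s$, and fix $s_1,s_2\in\mathcal{S}$. Since $s_1$ gets $r^{cen}(s_1)$ representatives under the PES, the cell $[s_1, r^{cen}(s_1)]$ is colored; since $s_2$ gets only $r^{cen}(s_2)$ representatives, the cell $[s_2, r^{cen}(s_2)+1]$ is \emph{not} colored. As the colored cells are precisely those carrying the $R$ largest values, a colored cell dominates an uncolored one, which is exactly the inequality~\eqref{eq: census linear ineq} for the pair $(s_1,s_2)$.

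For the backward direction I argue by contraposition. Suppose the two allocations differ. Because both distribute exactly $R$ representatives, there must be a state $s_2$ with $r^{PES}(s_2) > r^{cen}(s_2)$ and a state $s_1$ with $r^{PES}(s_1) < r^{cen}(s_1)$. Then in the PES table the cell $[s_2, r^{cen}(s_2)+1]$ is colored (as $r^{PES}(s_2) \geq r^{cen}(s_2)+1$) while the cell $[s_1, r^{cen}(s_1)]$ is not (as $r^{PES}(s_1) \leq r^{cen}(s_1)-1$). Hence the value at $[s_2, r^{cen}(s_2)+1]$ is at least the value at $[s_1, r^{cen}(s_1)]$, contradicting~\eqref{eq: census linear ineq} for this particular pair. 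So if all the inequalities hold, the allocations agree, completing the equivalence.

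I expect the one genuine subtlety to be the boundary-tie case: when the $R$-th and $(R+1)$-th largest table entries are equal, ``color the $R$ largest cells'' is ambiguous, and then a colored cell only weakly dominates an uncolored one, so the stated inequalities would have to be weak rather than strict. I would handle this exactly as the Knesset analysis does — assume the highest-averages method carries a fixed tie-breaking rule (equivalently, work in the generic no-tie regime), under which the colored-versus-uncolored comparison is strict in the needed direction. The remaining degenerate point, a state with $r^{cen}(s)=0$ (so $d(r^{cen}(s))$ may vanish under e.g.\ D'Hondt), is a non-issue: the left-hand side of~\eqref{eq: census linear ineq} is then $+\infty$ and the inequality holds vacuously, matching the fact that such a state can never ``lose'' a representative. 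A final routine check is that ``a state occupies exactly its first $r(s)$ cells'' really follows from each row being monotonically decreasing, which is immediate from $d$ increasing.
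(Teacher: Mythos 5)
Your proposal is correct and follows essentially the same route as the paper: the forward direction is the identical colored-versus-uncolored cell comparison in the PES highest-averages table, and your contrapositive for the backward direction simply merges the paper's two intermediate claims (the pairwise condition $\left(r^{PES}(s_1)\geq r^{cen}(s_1)\right)\vee\left(r^{PES}(s_2)\leq r^{cen}(s_2)\right)$ derived from~\eqref{eq: census linear ineq}, and the conservation-of-$R$ argument showing that differing allocations violate it for some pair) into one step. Your added remarks on tie-breaking and the $r^{cen}(s)=0$ boundary case are reasonable edge-case observations that the paper leaves implicit, not a different method.
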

\begin{proof}
    First, assume that the two allocations of representatives match. Examine the imaginary table with which representatives are allocated to states according to the PES, as described in~\Cref{sec: census RLA model}. Recall that each state has exactly its first $r^{PES}(s)$ cells colored. Since we assume that for any $s\in\mathcal{S}$, $r^{PES}(s)=r^{cen}(s)$, we have it that for any $s_1,s_2\in \mathcal{S}$, the cell at index $[s_1,r^{cen}(s_1)]$ is colored, while the cell at $[s_2,r^{cen}(s_2)+1]$ is not. Since the colored cells are the ones which hold the largest values in the table, the cell at $[s_1,r^{cen}(s_1)]$ has a larger value than the cell at $[s_2,r^{cen}(s_2)+1]$. Writing these values out results exactly in \eqref{eq: census linear ineq}- the larger term is the value at $[s_2,r^{cen}(s_2)+1]$, and the smaller is the value at $[s_1,r^{cen}(s_1)]$.

    Towards proving the other direction of the equivalence, we show that if~\eqref{eq: census linear ineq} is true for any $s_1,s_2\in \mathcal{S}$, then a certain condition~\eqref{eq: census assorter condition} holds for any $s_1,s_2$. We then show that if this condition is true, then the allocation of representatives according to the census and according to the PES match.
    \begin{claim} \label{claim: census move-seat}
        Let $r^{PES}(s)$ be the number of representatives a state $s$ is allocated according to the full PES results. For any $s_1,s_2\in \mathcal{S}$, if \eqref{eq: census linear ineq} is true then:
        \begin{align} \label{eq: census assorter condition}
            \left(r^{PES}(s_1) \geq r^{cen}(s_1)\right) \vee \left(r^{PES}(s_2) \leq r^{cen}(s_2)\right)
        \end{align}
    \end{claim}
    \begin{proof}
        Assume towards contradiction that for some $s_1, s_2 \in \mathcal{S}$, the condition~\eqref{eq: census assorter condition} is false, meaning that its negation, $\left(r^{PES}(s_1) < r^{cen}(s_1)\right) \wedge \left(r^{PES}(s_2) > r^{cen}(s_2)\right)$, is true.

        Examine the table used to allocate representatives to states according to the PES results. According to this table, $s_2$ is awarded $r^{PES}(s_2)$ representatives. Since $r^{PES}(s_2) > r^{cen}(s_2)$, and since the row $s_2$ has exactly its first $r^{PES}(s_2)$ cells colored, the cell at $[s_2, r^{cen}(s_2)+1]$ is colored. Additionally, since $s_1$ was awarded exactly $r^{PES}(s_1)$ representatives and since $r^{PES}(s_1) < r^{cen}(s_1)$, the cell at $[s_1, r^{cen}(s_1)]$ is not colored.
    
        By the paragraph above, if $\left(r^{PES}(s_1) \geq r^{cen}(s_1)\right) \vee \left(r^{PES}(s_2) \leq r^{cen}(s_2)\right)$ is false, then the cell at $[s_2, r^{cen}(s_2)+1]$ is colored while the cell at $[s_1, r^{cen}(s_1)]$ is not. Since the colored cells are the ones which hold the largest values, it follows that the cell at $[s_2, r^{cen}(s_2)+1]$ has a larger value than the cell at $[s_1, r^{cen}(s_1)]$, meaning that:
        $$
            \frac{\sum_{h\in H} g_{s_1}^{PES}(h) + c_{s_1}}{d(r^{cen}(s_1))} \leq \frac{\sum_{h\in H} g_{s_2}^{PES}(h) + c_{s_2}}{d(r^{cen}(s_2)+1)}.
        $$
        The larger term in this inequality is the value at index $[s_2, r^{cen}(s_2)+1]$ and the smaller one is the value at index $[s_1, r^{cen}(s_1)]$. This contradicts~\eqref{eq: census linear ineq}, and thereby proves this claim.
    \end{proof}
    \begin{claim} \label{claim: allocations match}
        If \eqref{eq: census assorter condition} is true for any $s_1,s_2 \in \mathcal{S}$, then the allocation of representatives according to the census and according to the full PES are identical.
    \end{claim}
    \begin{proof}
        Assume towards contradiction that the two allocations are not identical. Therefore, there must be at least one state $s$ with $r^{PES}(s) \neq r^{cen}(s)$. If $r^{PES}(s) > r^{cen}(s)$, since the number of total representatives is constant, there must be another state $s'$ with $r^{PES}(s') < r^{cen}(s')$. Similarly, if $r^{PES}(s) < r^{cen}(s)$, there must be another state $s'$ with $r^{PES}(s') > r^{cen}(s')$. Either way, either $$\left(r^{PES}(s) \geq r^{cen}(s)\right) \vee \left(r^{PES}(s') \leq r^{cen}(s')\right)$$ 
        or 
        $$\left(r^{PES}(s') \geq r^{cen}(s')\right) \vee \left(r^{PES}(s) \leq r^{cen}(s)\right)$$ are false, meaning that~\eqref{eq: census assorter condition} is not true regarding all pairs of states. Thus, if~\eqref{eq: census assorter condition} is true for every pair of states, then the two allocations must be identical, completing the proof.
    \end{proof}
    Using these two claims, we can now complete the proof of this theorem. Assume~\eqref{eq: census linear ineq} is true for any pair of states. By~\Cref{claim: census move-seat},~\eqref{eq: census assorter condition} is also true for any pair of states, and by~\Cref{claim: allocations match}, this makes the allocation of representatives according to the census and according to the PES identical. This proves the other direction of the equivalence and concludes the proof of this theorem.
\end{proof}

\paragraph*{Finding the Assorters - Goal}

By~\Cref{thm: census assorters 1}, to prove that the census and PES lead to the same allocation of representatives, it suffices to verify that \eqref{eq: census linear ineq} is true for any pair of states. Next, for every pair of states $s_1,s_2\in\mathcal{S}$, we develop a SHANGRLA assertion which is equivalent to~\eqref{eq: census linear ineq}, giving us our census RLA assorters. Towards this goal, we slightly re-arrange~\eqref{eq: census linear ineq} into an equivalent form that is simpler to work with:
\begin{align} \label{eq: census linear ineq 2}
    \frac{\sum_{h\in H} g_{s_1}^{PES}(h)}{d(r^{cen}(s_1))} - \frac{\sum_{h\in H} g_{s_2}^{PES}(h)}{d(r^{cen}(s_2)+1)} > \frac{c_{s_2}}{d(r^{cen}(s_2) + 1)} - \frac{c_{s_1}}{d(r^{cen}(s_1))}.
\end{align} 

For every $s_1, s_2\in \mathcal{S}$, we wish to find a non-negative function $a^{PES}_{s_1,s_2}$ such that~\eqref{eq: census linear ineq 2} is equivalent to:
$$
    \frac{1}{|H|}\sum_{h\in H}{a^{PES}_{s_1,s_2}(h)} > \half.
$$
This is done using the method described in~\Cref{sec: finding assorters}, which converts linear inequalities regarding ballot tallies to SHANGRLA assertions. This method originally applies on elections, and not on censuses. To use it here, we need to view the census as a an election. 

\paragraph*{From Census to Elections}

To view the census as an election, we define an election system where each ballot corresponds to a household in the census, and the elections result in an allocation of representatives to states, just as is done in the census. In this election system, each ballot holds the state and number of residents of its corresponding household according to the PES.

More formally, the set possible ballots a voter may cast in these elections is $\mathcal{C}:= \mathcal{S} \times \{[g^{max}| \cup\{0\}\}$, and the ballots that were truly cast are $B=\{(s_h, g^{PES}(h))\}_{h\in H}$, where $s_h$ denotes the state of household $h$. Additionally, we use $v(s,k)$ to denote the number of ballots cast for $(s,k)$. Using this notation, we can rewrite~\eqref{eq: census linear ineq 2} as:
\begin{align} \label{eq: census linear ineq 3}
    \sum^{g^{max}}_{k=0}{\left(\frac{1}{d(r^{cen}(s_1))}k v(s_1,k)  - \frac{1}{d(r^{cen}(s_2)+1)} k v(s_2,k)\right)} > \frac{c_{s_2}}{d(r^{cen}(s_2) + 1)} - \frac{c_{s_1}}{d(r^{cen}(s_1))}.
\end{align}
This is equivalent to~\eqref{eq: census linear ineq 2} since both $\sum^{g^{max}}_{k=0}{k v(s,k)}$ and $\sum_{h\in H}{g^{PES}_{s}(h)}$ count the number of residents at state $s$.

\eqref{eq: census linear ineq 3} is a linear inequality regarding the vote tallies in some elections, which allows us to apply the method from~\Cref{sec: finding assorters} to convert it to an equivalent SHANGRLA assertion. The resulting assertions are $\frac{1}{|B|}\sum_{b\in B}{a_{s_1,s_2}(b)}>\half$ for each $s_1,s_2\in\mathcal{S}$, with:
\begin{align} \label{eq: census assorter bad}
        a_{s_1,s_2}(s, k) =
        \frac{k\bbOne_{s=s_1}}{c d(r^{cen}(s_1))}+\frac{g^{max} - k\bbOne_{s=s_2}}{cd(r^{cen}(s_2)+1)},
 \end{align}
where we denote:
\begin{align} \label{eq: c definition}
    c := 2\left(\frac{g^{max}}{d(r^{cen}(s_2)+1)}+\frac{c_{s_2}}{|H| d(r^{cen}(s_2) + 1)} - \frac{c_{s_1}}{|H|d(r^{cen}(s_1))}\right)
\end{align}

\paragraph*{Back From Elections to the Census}

The assorter above is intended for our imagined elections. We now wish to convert it to an assorter which operates on households instead of ballots. Towards this purpose, observe that for a household $h$ and its conversion to a ballot $(s_h, k)$, we have for any $s\in \mathcal{S}$: $k\bbOne_{s_h=s} = g^{PES}_{s}(h)$. Meaning that defining $a^{pes}_{s_1,s_2}$ to operate directly on the households results in the following:
\begin{definition} \label{def: census assorters prelim}
    The census assorter $a^{PES}_{s_1,s_2}$ is defined as:
    $$
        a^{PES}_{s_1,s_2}(h) :=
        \frac{g^{PES}_{s_1}(h)}{c d(r^{cen}(s_1))}+\frac{g^{max} - g^{PES}_{s_2}(h)}{cd(r^{cen}(s_2)+1)},
    $$
    where $r^{cen}(s)$ is the number of representatives state $s$ is awarded according to the census, $c$ is defined as in~\eqref{eq: c definition}.
\end{definition}

Since we used the method from~\Cref{sec: finding assorters}, we have, for any two states $s_1, s_2$:
\begin{align*}
    \left(\frac{1}{|H|}\sum_{h\in H}{a^{PES}_{s_1,s_2}(h)} > \half \right) \Longleftrightarrow
    \left(\frac{\sum_{h\in H} g_{s_1}^{PES}(h) + c_{s_1}}{d(r^{cen}(s_1))} > \frac{\sum_{h\in H} g_{s_2}^{PES}(h) + c_{s_2}}{d(r^{cen}(s_2)+1)}\right).
\end{align*}
And by~\Cref{thm: census assorters 1}, the allocation of representatives according to the census and PES match iff for all $s_1,s_2\in \mathcal{S}$:
\begin{align} \label{eq: census assorter valid 1}
     \frac{1}{|H|}\sum_{h\in H}{a_{s_1, s_2}(h)} > \half.
\end{align}

\subsubsection{Designing Household-Comparison Assorters}
\label{sec: census comparison assorter}
For each assorter $a^{PES}_{s_1, s_2}$ from~\Cref{def: census assorters prelim}, we now define a new assorter $A_{s_1,s_2}$ which can also be used to audit the same census. $A_{s_1,s_2}$ has a significant advantage over $a^{PES}_{s_1,s_2}$, which motivates us to use it instead. Each assorter $a^{PES}_{s_1,s_2}$ essentially audits the number of residents per household according to the PES, without using the per-household census data. Meanwhile, $A_{s_1,s_2}$ audits the discrepancy in the number of household members between the census and the PES. Since we typically expect this discrepancy to be small, this yields a more stable and efficient audit.

\paragraph*{Some Intuition}

This next part is only meant to explain how these final assorters are deduced, and not to prove that auditing them results in a valid census-RLA. A formal proof that these assorters satisfy the condition described above is shown in~\Cref{thm: census assorters}.

First, note that each assorter $a_{s_1,s_2}^{PES}$ can also be defined over the census population counts instead of the PES counts. We denote this as $a_{s_1,s_2}^{cen}$:
\begin{definition}
    \begin{align*}
        a_{s_1,s_2}^{cen}(h) &:= \frac{g^{cen}_{s_1}(h)}{c d(r^{cen}(s_1))}+\frac{g^{max} - g^{cen}_{s_2}(h)}{cd(r^{cen}(s_2)+1)}.
    \end{align*}    
\end{definition}

As mentioned previously, we would like $A_{s_1, s_2}$ to operate over the per-household discrepancy between the census and the PES as it relates to $a^{PES}_{s_1,s_2}$. Meaning, it should operate over:
$$
    a_{s_1, s_2}^{PES}(h) - a_{s_1, s_2}^{cen}(h),
$$

A simple way of doing so is to define our new assorter $A_{s_1,s_2}$ to have a similar form to the Batchcomp assorter from~\Cref{def: Batchcomp assorter}:
\begin{align} \label{eq: census A base form}
    A_{s_1,s_2}(h) = \half + \frac{m_{s_1,s_2} + a^{PES}_{s_1,s_2}(h) - a^{cen}_{s_1,s_2}(h)}{\cdots},
\end{align}
with some choice of denominator in place of $(\cdots)$. $m_{s_1,s_2}$ here is the margin of $a^{cen}_{s_1,s_2}$:
\begin{align} \label{eq: census rep margin}
    m_{s_1,s_2} := \frac{1}{|H|}\sum_{h'\in H}{a^{cen}_{s_1, s_2}(h')} - \half.
\end{align}
Observe that for any $s_1,s_2\in \mathcal{S}$, $m_{s_1,s_2}>0$. This is true since otherwise, $\frac{1}{|H|}\sum_{h'\in H}{a^{cen}_{s_1, s_2}(h')} \leq \half$, meaning that if the PES and census results completely match on all households, we have $\frac{1}{|H|}\sum_{h'\in H}{a^{PES}_{s_1, s_2}(h')} \leq \half$, contradicting~\eqref{eq: census assorter valid 1}. 

With this definition of $A_{s_1,s_2}$, for any positive denominator we use in place of the dots ($\cdots$), we would have:
$$
    \left(\frac{1}{|H|}\sum_{h\in H}{a_{s_1, s_2}(h)} > \half \right) \Longleftrightarrow \left(\frac{1}{|H|}\sum_{h\in H}{A_{s_1, s_2}(h)} > \half\right) ,
$$
as we show while proving~\Cref{claim: census assorter valid 2}. What remains is to choose the denominator.

\paragraph*{Choosing the Denominator}

Ideally, we would like $A_{s_1,s_2}$ to return large values when the census and the PES agree on the number of residents of a certain household, since this would cause the audit to approve a correct census (one that matches the PES) sooner. When the census and the PES agree on some household $h$, we have:
$$
    A_{s_1,s_2}(h) = \half + \frac{m_{s_1,s_2} + \overbrace{a^{PES}_{s_1,s_2}(h) - a^{cen}_{s_1,s_2}(h)}^{=0}}{\cdots} = \half + \frac{m_{s_1,s_2}}{\cdots}.
$$
And since, as explained right after after~\eqref{eq: census rep margin}, $m_{s_1,s_2}>0$, this value will be larger the smaller our chosen denominator is. However, if we choose a denominator which is too small, $A_{s_1,s_2}$ could potentially return negative values. By these two observations, we should choose the denominator to be the smallest positive such that $A_{s_1,s_2}$ is non-negative. To find which value satisfies this, we find the minimal value that the nominator may have. By the definition of $a^{PES}_{s_1,s_2}$ and $a^{cen}_{s_1,s_2}$, the value of the nominator is:
$$
   m_{s_1,s_2} + a^{PES}_{s_1,s_2}(h) - a^{cen}_{s_1,s_2}(h) \\[1.5ex]
    = m_{s_1,s_2} + \frac{g^{PES}_{s_1}(h) - g^{cen}_{s_1}(h)}{c d(r^{cen}(s_1))}+\frac{g^{cen}_{s_2}(h) - g^{PES}_{s_2}(h)}{cd(r^{cen}(s_2)+1)}.
$$
$h$ is either from $s_1$, from $s_2$ or from neither of them. If it's from neither, than this expression is equal $m_{s_1,s_2}$. If it's from $s_1$, then:
\begin{align*}
    m_{s_1,s_2} + \frac{\overbrace{g^{PES}_{s_1}(h)}^{\geq 0} - \overbrace{g^{cen}_{s_1}(h)}^{\leq g^{max}}}{c d(r^{cen}(s_1))}+\frac{\overbrace{g^{cen}_{s_2}(h) - g^{PES}_{s_2}(h)}^{=0}}{cd(r^{cen}(s_2)+1)}
    &\geq m_{s_1,s_2} - \frac{g^{max}}{c d(r^{cen}(s_1))}, \\
\intertext{where $g^{max}$ is the maximal number of residents a single household may have. If $h$ is from $s_2$, then:} \\
    m_{s_1,s_2} + \frac{\overbrace{g^{PES}_{s_1}(h) - {g^{cen}_{s_1}(h)}}^{=0}}{c d(r^{cen}(s_1))}+\frac{\overbrace{g^{cen}_{s_2}(h)}^{\geq 0} - \overbrace{g^{PES}_{s_2}(h)}^{\leq g^{max}}}{cd(r^{cen}(s_2)+1)}
    & \geq m_{s_1,s_2} - \frac{g^{max}}{cd(r^{cen}(s_2)+1)}.
\end{align*}
So for any $h\in H$:
\begin{align} \label{eq: A min nominator}
    &m_{s_1,s_2} + a^{PES}_{s_1,s_2}(h) - a^{cen}_{s_1,s_2}(h) \nonumber \\
    \geq & \min \left\{m_{s_1,s_2}, m_{s_1,s_2} - \frac{g^{max}}{cd(r^{cen}(s_2)+1)}, m_{s_1,s_2} - \frac{g^{max}}{c d(r^{cen}(s_1))} \right\}
\end{align}
We can now set the denominator to be the smallest value for which $A_{s_1,s_2}$ is non-negative:
$$
    A_{s_1,s_2}(h) := \half + \frac{m_{s_1,s_2} + a^{PES}_{s_1,s_2}(h) - a^{cen}_{s_1,s_2}(h)}{-2\min \left\{m_{s_1,s_2} - \frac{g^{max}}{cd(r^{cen}(s_2)+1)}, m_{s_1,s_2} - \frac{g^{max}}{c d(r^{cen}(s_1))}, m_{s_1,s_2} \right\}}.
$$
For brevity, we denote:
\begin{align} \label{eq: z def}
    z_{s_1,s_2} := \max \left\{\frac{g^{max}}{cd(r^{cen}(s_2)+1)}, \frac{g^{max}}{c d(r^{cen}(s_1))}, 0 \right\}.
\end{align}
And write $A_{s_1,s_2}$ as follows:
\begin{definition}
    The census comparison assorter for states $s_1, s_2\in \mathcal{S}$ is defined as:
    $$
        A_{s_1,s_2}(h) := \half + \frac{m_{s_1,s_2} + a^{PES}_{s_1,s_2}(h) - a^{cen}_{s_1,s_2}(h)}{2(z_{s_1,s_2} - m_{s_1,s_2})}.
    $$    
\end{definition}
We now prove that the assorters $\{A_{s_1,s_2}\,|\,s_1,s_2 \in \mathcal{S} \times \mathcal{S}, s_1\neq s_2\}$ are valid for auditing the census' allocation of representatives to the federal states.

\begin{theorem} \label{thm: census assorters}
    Assume that the PES surveyed all households. The assorters $\{A_{s_1,s_2}\,|\,s_1,s_2\in\mathcal{S}\}$, as defined above, are all non-negative and satisfy the following condition: The allocation of representatives according to the census and the PES match iff for all $s_1,s_2\in \mathcal{S}$:
    $$
        \frac{1}{|H|}\sum_{h\in H}{A_{s_1, s_2}(h)} > \half.
    $$
\end{theorem}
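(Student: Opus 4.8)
The assorter $A_{s_1,s_2}$ is built in exactly the ``$\half$ plus per-household discrepancy over margin'' shape of the Batchcomp assorter of \Cref{def: Batchcomp assorter}, with the census counts in the role of the reported tally and the PES counts in the role of the true tally, and with $m_{s_1,s_2}$ playing the role of the reported margin. So I would mirror the two elementary facts established there for each fixed pair $s_1\neq s_2$: non-negativity of $A_{s_1,s_2}$ (the claim following \Cref{def: Batchcomp assorter}), and the mean-equivalence $\frac{1}{|H|}\sum_h A_{s_1,s_2}(h)>\half \iff \frac{1}{|H|}\sum_h a^{PES}_{s_1,s_2}(h)>\half$ (the census analogue of \Cref{claim: Batchcomp 1}); and then chain the second with \Cref{thm: census assorters 1} to get the stated iff with ``the census and the full PES allocations match''. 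Throughout I would keep in mind that the reduction of \Cref{sec: finding assorters}, used to build $a^{PES}_{s_1,s_2}$ and $a^{cen}_{s_1,s_2}$, guarantees $c>0$ in all non-trivial cases, so that $z_{s_1,s_2}=\max\{g^{max}/(c\,d(r^{cen}(s_2)+1)),\,g^{max}/(c\,d(r^{cen}(s_1))),0\}\ge 0$ and each of its first two terms is nonnegative.

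\textbf{The crux.} Everything rests on the denominator $2(z_{s_1,s_2}-m_{s_1,s_2})$ of $A_{s_1,s_2}$ being \emph{strictly positive}, and I expect this to be the main obstacle. By the definition of the margin in \eqref{eq: census rep margin}, $m_{s_1,s_2}+\half=\frac{1}{|H|}\sum_{h\in H}a^{cen}_{s_1,s_2}(h)$, so $z_{s_1,s_2}>m_{s_1,s_2}$ is equivalent to $\frac{1}{|H|}\sum_{h\in H}a^{cen}_{s_1,s_2}(h)<\half+z_{s_1,s_2}$. I would prove this from the bounds $0\le g^{cen}_s(h)\le g^{max}$ together with the explicit value of $c$ in \eqref{eq: c definition}: a pointwise check gives $a^{cen}_{s_1,s_2}(h)\le z_{s_1,s_2}$ for every household outside $s_1$, and $a^{cen}_{s_1,s_2}(h)\le \half+z_{s_1,s_2}$ for households in $s_1$; since at least one household lies outside $s_1$ (there is more than one state), averaging yields the strict inequality. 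This is immediate when $c_s\equiv 0$ (the D'Hondt instance), where the second summand of $a^{cen}_{s_1,s_2}$ is always $\le\half$ and the first is always $\le z_{s_1,s_2}$; carrying a possibly nonzero additive constant $c_s$ cleanly through this estimate (or isolating the degenerate/trivial assertions where it must instead be excluded) is the delicate point.

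\textbf{Non-negativity.} Given $z_{s_1,s_2}>m_{s_1,s_2}$, I would reproduce the Batchcomp argument verbatim. Because each household belongs to a single state, in the numerator $m_{s_1,s_2}+a^{PES}_{s_1,s_2}(h)-a^{cen}_{s_1,s_2}(h)$ at most one of the two fractions of $a^{PES}_{s_1,s_2}(h)-a^{cen}_{s_1,s_2}(h)$ is nonzero, and $0\le g^{cen}_s(h),g^{PES}_s(h)\le g^{max}$ bounds that difference below by $-z_{s_1,s_2}$; this is exactly the estimate recorded in \eqref{eq: A min nominator}, which gives numerator $\ge m_{s_1,s_2}-z_{s_1,s_2}$. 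Dividing by the positive quantity $2(z_{s_1,s_2}-m_{s_1,s_2})$ yields $A_{s_1,s_2}(h)\ge \half - \half = 0$.

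\textbf{The iff.} Averaging the definition of $A_{s_1,s_2}$ over $H$ and substituting $\frac{1}{|H|}\sum_h a^{cen}_{s_1,s_2}(h)=m_{s_1,s_2}+\half$ from \eqref{eq: census rep margin}, the $m_{s_1,s_2}$'s cancel and
\[
  \frac{1}{|H|}\sum_{h\in H}A_{s_1,s_2}(h)\;=\;\half+\frac{\frac{1}{|H|}\sum_{h\in H}a^{PES}_{s_1,s_2}(h)-\half}{2\,(z_{s_1,s_2}-m_{s_1,s_2})},
\]
so, since the denominator is positive, $\frac{1}{|H|}\sum_h A_{s_1,s_2}(h)>\half \iff \frac{1}{|H|}\sum_h a^{PES}_{s_1,s_2}(h)>\half$, i.e.\ exactly condition \eqref{eq: census assorter valid 1}. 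Finally I would chain: by the construction of $a^{PES}_{s_1,s_2}$ via \Cref{sec: finding assorters} applied to the linear inequality \eqref{eq: census linear ineq 3}, this is equivalent to \eqref{eq: census linear ineq} for the pair $(s_1,s_2)$; and by \Cref{thm: census assorters 1}, \eqref{eq: census linear ineq} holding for \emph{every} pair of states is equivalent to the census and the full PES allocating the same number of representatives to every state. Combining this equivalence over all pairs with the non-negativity step gives precisely the statement of the theorem.
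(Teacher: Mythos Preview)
Your approach is essentially the paper's: non-negativity via the numerator bound \eqref{eq: A min nominator}, and the iff by averaging the definition of $A_{s_1,s_2}$, cancelling $m_{s_1,s_2}$ against $\frac{1}{|H|}\sum_h a^{cen}_{s_1,s_2}(h)$, and then chaining with \eqref{eq: census assorter valid 1} and \Cref{thm: census assorters 1}. The one substantive difference is your treatment of the ``crux'' $z_{s_1,s_2}>m_{s_1,s_2}$. The paper dispatches this with the single line $z_{s_1,s_2}\ge\max_{h\in H}a^{cen}_{s_1,s_2}(h)\ge m_{s_1,s_2}$, while you give a state-by-state pointwise bound followed by averaging. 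Your version is in fact the more robust one: a household in $s_1$ with $g^{max}$ residents has $a^{cen}_{s_1,s_2}(h)=\frac{g^{max}}{c\,d(r^{cen}(s_1))}+\frac{g^{max}}{c\,d(r^{cen}(s_2)+1)}$, which is the \emph{sum} of the two nonzero candidates in the max defining $z_{s_1,s_2}$, so the paper's first inequality can fail as written. Your argument repairs this (at least when $c_s\equiv 0$, where the second summand is exactly $\half$), and your flagging of the nonzero-$c_s$ case as delicate is appropriate.
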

\begin{proof}
    We first prove that $A_{s_1,s_2}$ is non-negative for any $s_1,s_2\in \mathcal{S}$, and then    show that the equivalence above holds.

    \begin{claim}
        For any $s_1,s_2\in \mathcal{S}$, $A_{s_1,s_2}$ is non-negative.
    \end{claim}
    \begin{proof}
        Fix two states $s_1,s_2\in \mathcal{S}$. By~\eqref{eq: A min nominator} and by the definition of $z_{s_1,s_2}$:
        \begin{align}
            m_{s_1,s_2} + a^{PES}_{s_1,s_2}(h) - a^{cen}_{s_1,s_2}(h) 
            \geq m_{s_1, s_2} - z_{s_1,s_2}.
        \end{align}
        Meaning that for any $h\in H$:
        \begin{align*}
            A_{s_1, s_2} &= \half + \frac{m_{s_1,s_2} + a^{PES}_{s_1,s_2}(h) - a^{cen}_{s_1,s_2}(h)}{2(z_{s_1,s_2} - m_{s_1,s_2})} \\[1.5ex]
            & \geq \half + \frac{m_{s_1, s_2} - z_{s_1,s_2}}{2(z_{s_1,s_2} - m_{s_1,s_2})} \\
            &= 0,
        \end{align*}
        proving the claim.
    \end{proof}
    \begin{claim} \label{claim: census assorter valid 2}
        The allocation of representatives according to the census and the PES match iff for all $s_1,s_2\in \mathcal{S}$:
        $$
            \frac{1}{|H|}\sum_{h\in H}{A_{s_1, s_2}(h)} > \half.
        $$
    \end{claim} 
    \begin{proof}
        In~\Cref{sec: census household-level assorter}, we developed assorters $a^{PES}_{s_1, s_2}$ such that the allocation of representatives according to the census and the PES match iff for all $s_1,s_2\in \mathcal{S}$:
        $$
            \frac{1}{|H|}\sum_{h\in H}{a^{PES}_{s_1, s_2}(h)} > \half
        $$
        Therefore, to prove this claim, it suffices to prove that for every $s_1, s_2\in \mathcal{S}$:
        $$
            \left(\frac{1}{|H|}\sum_{h\in H}{A_{s_1, s_2}(h)} > \half \right) \Longleftrightarrow \left(\frac{1}{|H|}\sum_{h\in H}{a_{s_1, s_2}(h)} > \half \right).
        $$
        Fix any two federal-states $s_1, s_2 \in \mathcal{S}$. We show that the two inequalities above are equivalent:
        \begin{align*}
            & \frac{1}{|H|}\sum_{h\in H}{A_{s_1, s_2}(h)} > \half \\[1.5ex]
            \Longleftrightarrow & \frac{1}{|H|}\sum_{h\in H}{\left(\half + \frac{m_{s_1,s_2} + a^{PES}_{s_1,s_2}(h) - a^{cen}_{s_1,s_2}(h)}{2(z_{s_1,s_2} - m_{s_1,s_2})}\right)} > \half \\[1.5ex]
            \Longleftrightarrow & \frac{1}{|H|}\sum_{h\in H}{\frac{m_{s_1,s_2} + a^{PES}_{s_1,s_2}(h) - a^{cen}_{s_1,s_2}(h)}{2(z_{s_1,s_2} - m_{s_1,s_2})}} > 0.
            \intertext{Now, using the definition of $m_{s_1,s_2}$ and re-arranging the summation yields the desired  equivalence:}
            \Longleftrightarrow & \frac{1}{|H|}\sum_{h\in H}{\frac{\frac{1}{|H|}\sum_{h'\in H}{a^{cen}_{s_1, s_2}(h')} - \half + a^{PES}_{s_1,s_2}(h) - a^{cen}_{s_1,s_2}(h)}{2(z_{s_1,s_2} - m_{s_1,s_2})}} > 0   \\[1.5ex]
            \Longleftrightarrow & \frac{\frac{1}{|H|}\sum_{h'\in H}{a^{cen}_{s_1, s_2}(h')} + \frac{1}{|H|}\sum_{h\in H}{a^{PES}_{s_1, s_2}(h)} - \frac{1}{|H|}\sum_{h\in H}{a^{cen}_{s_1, s_2}(h)} - \half}{2(z_{s_1,s_2} - m_{s_1,s_2})} > 0   \\[1.5ex]
            \Longleftrightarrow & \frac{\frac{1}{|H|}\sum_{h\in H}{a^{PES}_{s_1, s_2}(h)} - \half}{2(z_{s_1,s_2} - m_{s_1,s_2})} > 0   \\[1.5ex]
            \Longleftrightarrow & \frac{1}{|H|}\sum_{h\in H}{a^{PES}_{s_1, s_2}(h)} > \half,
        \end{align*}
        The last transition relies on the 
        fact that $z_{s_1,s_2}>m_{s_1,s_2}$, which is true since:
        \begin{align*}
            z_{s_1,s_2} \geq \max_{h\in H}{a^{cen}(h)} \geq m_{s_1,s_2}.
        \end{align*}
        This concludes the proof of this claim.
    \end{proof}
    The combination of these two claims completes this theorem's proof.
\end{proof}

\subsubsection{Census RLA Description}
\label{sec: census RLA description}
The algorithm presented next is a slightly altered version of the ALPHA martingale test from~\Cref{sec:alpha test}, when thinking of each household as a ballot whose content is the household's state and its number of residents.

\paragraph*{Adapting the ALPHA Martingale Test to Censuses}

Unlike an election RLA, where the paper-backup ballots are read as the audit is ran, a census RLA is performed after the PES, meaning that the households were re-surveyed before the audit begins. We handle this by sampling households such that from the auditor's perspective, if it doesn't know which households were surveyed by the PES, it receives a previously unsampled household uniformly at random. This is performed according to a subroutine that takes as its input 4 arguments:
\begin{itemize}
    \item $H^1$: the set of households that have yet to be audited.
    \item $H^{cen}$: the list of households according to the census.
    \item $H^{PES}$: the list from which the PES randomly chose the households it surveyed ($\tilde{H}^{PES}$).
    \item $\tilde{H}^{PES}$: a list of households that were surveyed during the PES.
\end{itemize}

\fbox{\begin{minipage}{40em}
  \textbf{Sample\_Household($H^1, H^{cen}, H^{PES}, \tilde{H}^{PES}$):} \\[1.5ex]
  With probability $\frac{|H^{PES} \cap H^1|}{|H^1|}$, sample a household uniformly at random from $\tilde{H}^{PES} \cap H^1$.  \\[1.2ex]
  Otherwise, sample a household uniformly at random from $\left(H \setminus H^{PES}\right)\cap H^1$.
\end{minipage}}

Additionally, recall that instead of pre-setting the risk-limit $\alpha$, the algorithm outputs the smallest value $\alpha$ with which it can approve the representative allocation of the census, as described in the~\hyperlink{census rla guarantee}{Census RLA guarantee}.
This is done by keeping, at all times, the risk-limit of each assertion, $\frac{1}{T^{max}}$. When the audit runs out of households to sample, the procedure outputs the maximal risk-limit across all the assertions (step~\ref{census rla step8}).

Before presenting the algorithm, recall that the variables $\mu, \eta$ and $U$ represent guesses regarding the value that their corresponding assorter will return in the next iteration. $\mu$ is that guess given that the assorter has a mean of exactly $\half$, $\eta$ is the guess assuming that the census and PES agree on all households, and $U$ is a parameter which controls how significantly $T$ changes per iteration. $T$ is the inverse of the risk-limit with which we can approve that its corresponding assertion is true, and $T^{max}$ holds the maximal value of $T$ throughout the audit.

\paragraph*{The Census RLA Algorithm}

\begin{enumerate}
    \item \textbf{Initialization}
    \begin{enumerate}[label*=\arabic*.]
        \item Initialize $H^1=H$ and $H^0=\emptyset$. $H^1$ holds the households that have yet to be audited, and $H^0$ holds the households which were already audited.
        \item For each $(s_1, s_2) \in \mathcal{S}\times \mathcal{S}$ s.t. $s_1\neq s_2$ initialize:
        \begin{itemize}
            \item $T_{s_1, s_2} := 1$.
            \item $T^{max}_{s_1, s_2} :=  1$.
            \item $\mu_{s_1, s_2} := \half$.
            \item $\eta_{s_1, s_2} := \half + \frac{m_{s_1,s_2}}{2(z_{s_1,s_2} - m_{s_1,s_2})}$.
            \item $U_{s_1, s_2} := \half + \frac{m_{s_1, s_2} + \delta}{2(z_{s_1,s_2} - m_{s_1,s_2})}$, where $\delta > 0$. 
        \end{itemize}
        For definitions of $m_{s_1,s_2}$ and $z_{s_1,s_2}$ see~\eqref{eq: census rep margin} and~\eqref{eq: z def}.
    \end{enumerate}
    \item \textbf{Auditing Stage:} While $H^1 \cap \tilde{H}^{PES}\neq \emptyset$, perform:
    \begin{enumerate}[label*=\arabic*.]
        \item\label{census rla step3} Sample a household $h$ using the subroutine $\text{Sample\_Household}(H^1, H^{cen}, H^{PES}, \tilde{H}^{PES})$.
        \item Remove $h$ from $H^1$ and add it to $H^0$.
        \item\label{census rla tmax} For each $s_1, s_2$, update $T_{s_1,s_2}$ and $T^{max}_{s_1,s_2}$:
        \begin{itemize}
            \item $T_{s_1,s_2} \leftarrow T_{s_1,s_2} \frac{1}{U_{s_1,s_2}}\left(A_{s_1,s_2}(h) \frac{\eta_{s_1,s_2}}{\mu_{s_1,s_2}} + (U_{s_1,s_2}-A_{s_1,s_2}(h)) \frac{U_{s_1,s_2} - \eta_{s_1,s_2}}{U_{s_1,s_2} - \mu_{s_1,s_2}}\right)$.
            \item $T^{max}_{s_1,s_2} \leftarrow \max\left\{T^{max}_{s_1,s_2}, T_{s_1,s_2}\right\}$.
        \end{itemize}
        \item For each $s_1, s_2$ update $\mu_{s_1, s_2}, \eta_{s_1, s_2}$ and $U_{s_1, s_2}$ to be:
        \begin{itemize}
            \item $\mu_{s_1, s_2} \leftarrow \frac{\half |H| - \sum_{h'\in H^0}{A_{s_1, s_2}(h')}}{|H_1|}$.
            \item $\eta_{s_1, s_2}  \leftarrow max\left\{\half + \frac{m_{s_1, s_2}}{2(z_{s_1, s_2}-m_{s_1, s_2})}, \mu_{s_1, s_2} + \epsilon\right\}$.
            \item $U_{s_1, s_2} \leftarrow max\{U_{s_1, s_2}, \eta_{s_1, s_2} + \epsilon\}$.
        \end{itemize}
        Where $\epsilon$ is some very small positive meant to ensure that $\mu_{s_1, s_2} < \eta_{s_1, s_2} < U_{s_1, s_2}$. We assume these variables are updated according to the order of their listing above.
        \item\label{census force approve step} For any $s_1, s_2$, if $\mu_{s_1, s_2} < 0$, we must have $\frac{1}{|H|}\sum_{h\in H}{A_{s_1,s_2}(h)}>\half$, so set $T_{s_1,s_2}^{max}=\infty$. This means that we can approve this assertion with risk-limit 0 - this assertion is necessarily true.
    \end{enumerate}
    
    \item \label{census rla step8} \textbf{Output:} The result of the audit is the maximal risk-limit across all assertions:
    $$
        \max_{s_1,s_2 \in \mathcal{S}}\left\{\frac{1}{T^{max}_{s_1,s_2}}\right\}.
    $$
\end{enumerate}

Before proving that this algorithm satisfies the~\hyperlink{census rla guarantee}{census RLA guarantee}, we make one preliminary claim.

\begin{claim} \label{claim: census household sampling}
    If the PES is conducted over a random subset of households from $H^{PES}$, the sample\_household subroutine returns a household which is selected uniformly at random from $H^1$.  
\end{claim}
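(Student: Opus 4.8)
The plan is to prove the claim by induction over the successive calls to \textup{Sample\_Household}, carrying along the invariant that, conditioned on everything revealed so far, the still-unexamined part of $\tilde{H}^{PES}$ remains an unbiased sample of the still-unexamined part of the PES pool. Write $N:=|H|$, let $m:=|\tilde{H}^{PES}|$ (fixed before the PES by the definition of the model), and recall that $\tilde{H}^{PES}$ is a uniformly random $m$-element subset of $H^{PES}$ and is independent of the internal coins of the subroutine (the PES was carried out before the audit). For a given call let $H^0:=H\setminus H^1$ be the households returned by earlier calls. The invariant I would maintain is: conditioned on the ordered list of households returned so far \emph{and}, for each of them, on whether it lies in $\tilde{H}^{PES}$ (equivalently, conditioned on $H^0$ together with $\tilde{H}^{PES}\cap H^0$), the set $\tilde{H}^{PES}\cap H^1$ is distributed uniformly among all size-$m'$ subsets of $H^{PES}\cap H^1$, where $q:=|H^{PES}\cap H^1|$ and $m':=m-|\tilde{H}^{PES}\cap H^0|$. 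The reason for conditioning on $\tilde{H}^{PES}\cap H^0$ rather than only on $H^0$ is that it makes $m'=|\tilde{H}^{PES}\cap H^1|$ a \emph{deterministic} function of the conditioning, which is exactly what the one-step computation below needs. The base case $H^0=\emptyset$ is precisely the stated hypothesis on the PES.

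For the inductive step I would fix a target $h^*\in H^1$ and compute the probability that it is the household returned, splitting on whether $h^*\in H^{PES}$. The subroutine enters its first branch with probability $|H^{PES}\cap H^1|/|H^1|=q/|H^1|$, using a coin independent of $\tilde{H}^{PES}$, and then returns a uniform element of $\tilde{H}^{PES}\cap H^1$, a set whose size equals the deterministic value $m'$; otherwise it returns a uniform element of $(H\setminus H^{PES})\cap H^1$, a set of size $|H^1|-q$ (since $H^1\subseteq H$ splits into its intersections with $H^{PES}$ and with $H\setminus H^{PES}$), and the second branch is therefore taken with probability $(|H^1|-q)/|H^1|$. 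If $h^*\in H^{PES}$ it can be returned only via the first branch, and since $\tilde{H}^{PES}\cap H^1$ is a uniform $m'$-subset of a $q$-set, $h^*$ lies in it with probability $m'/q$, so
$$
\Pr[h^*\text{ returned}]=\frac{q}{|H^1|}\cdot\frac{m'}{q}\cdot\frac{1}{m'}=\frac{1}{|H^1|}.
$$
If instead $h^*\notin H^{PES}$ it can be returned only via the second branch, and $\Pr[h^*\text{ returned}]=\frac{|H^1|-q}{|H^1|}\cdot\frac{1}{|H^1|-q}=\frac{1}{|H^1|}$. In either case the probability is $1/|H^1|$ whatever the value of the conditioning, so by averaging it is also $1/|H^1|$ conditioned on just $H^1$ — which is the assertion of the claim.

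To close the induction I would check that the invariant is restored before the next call. If $h^*\notin H^{PES}$, the event ``$h^*$ is returned'' is measurable with respect to coins independent of $\tilde{H}^{PES}$, so conditioning on it does not change the law of $\tilde{H}^{PES}\cap H^1$, and passing to $H^1\setminus\{h^*\}$ leaves $H^{PES}\cap H^1$ unchanged, so the invariant holds with $m'$ unchanged. If $h^*\in H^{PES}$, then necessarily $h^*\in\tilde{H}^{PES}$ (the first branch draws only from $\tilde{H}^{PES}$), and a one-line Bayes computation shows that conditioning a uniform $m'$-subset $S$ of $H^{PES}\cap H^1$ on the event that $h^*$ is the element drawn uniformly from $S$ makes $S$ uniform among the $m'$-subsets that contain $h^*$; equivalently $S\setminus\{h^*\}=\tilde{H}^{PES}\cap(H^1\setminus\{h^*\})$ becomes uniform among the $(m'-1)$-subsets of $H^{PES}\cap(H^1\setminus\{h^*\})$, which is the invariant with $H^0$ enlarged by $h^*$ and $m'$ decremented. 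I would also remark that every call inside the auditing loop is well defined: the first branch is entered with probability $q/|H^1|$ and, under the loop guard $H^1\cap\tilde{H}^{PES}\neq\emptyset$, draws from the nonempty set $\tilde{H}^{PES}\cap H^1$, while the second branch is entered with probability $(|H^1|-q)/|H^1|$ and hence only when $(H\setminus H^{PES})\cap H^1$ is nonempty; and that the asserted uniformity is over the joint randomness of the PES draw and the subroutine's coins (for a fixed realization of $\tilde{H}^{PES}$ the output is of course supported only on $(\tilde{H}^{PES}\cup(H\setminus H^{PES}))\cap H^1$).

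The one genuinely delicate point — where I would be most careful — is the bookkeeping of the conditioning: one must condition on which of the earlier-returned households fell in $\tilde{H}^{PES}$ so that the cardinality $m'=|\tilde{H}^{PES}\cap H^1|$ is constant over the conditioned event, since the cancellation $\frac{q}{|H^1|}\cdot\frac{m'}{q}\cdot\frac{1}{m'}=\frac{1}{|H^1|}$ in the first branch relies on dividing by exactly that non-random cardinality. Once the conditioning is set up this way, the rest is the elementary computation above plus the standard Bayes update for a uniformly random subset, so I expect no further obstacles.
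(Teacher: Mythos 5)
Your proof is correct and rests on the same core computation as the paper's: your cancellation $\frac{q}{|H^1|}\cdot\frac{m'}{q}\cdot\frac{1}{m'}=\frac{1}{|H^1|}$ is exactly the paper's product $\frac{j}{|H^1\cap H^{PES}|}\cdot\frac{|H^1\cap H^{PES}|}{|H^1|}\cdot\frac{1}{j}$. The only difference is that you make explicit, via the induction invariant and the Bayes update after each call, the fact the paper uses implicitly --- that conditioned on the audit's history the still-unexamined surveyed households $\tilde{H}^{PES}\cap H^1$ form a uniform subset of $H^{PES}\cap H^1$ --- which adds rigor to the bookkeeping across successive calls but does not change the argument.
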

\begin{proof}
    Fix any call to sample\_household during the audit. During this proof, we denote $H^1$ as the set $H^1$ is while executing this call. Denote by $j$ the number of households from $H^1$ that were surveyed during the PES. 
    
    Examine any unsampled household from the PES household list $h\in H^{PES}\cap H^1$. $h$ will be returned by sample\_household if and only if these 3 events all occur:
    \begin{itemize}
        \item \textbf{$h$ was surveyed during the PES}- there are $j$ households in $H^1$ that were surveyed in the PES, and $|H^1\cap H^{PES}|$ households in $H^1$ that were considered for surveying by the PES. This puts the probability of $h$ being surveyed during the PES at $\frac{j}{|H^1 \cap H^{PES}|}$.
        \item \textbf{sample\_household sampled a household that was surveyed during the PES} - this happens w.p.\ $\frac{|H^1 \cap H^{PES}|}{|H^1|}$.
        \item \textbf{sample\_household chose $h$, given that the two previous events happened} - there are $j$ households in $H^1$ that were surveyed during the PES, so the probability of this occurring is $\frac{1}{j}$.
    \end{itemize}
    The probability of $h$ getting sampled by sample\_household is therefore:
    \begin{align*}
        \frac{j}{|H^1\cap H^{PES}|} \frac{|H^1 \cap H^{PES}|}{|H^1|} \frac{1}{j} = \frac{1}{|H^1|}.
    \end{align*}
    This establishes that the probability of any single household from $H^1\cap H^{PES}$ being returned is $\frac{1}{|H_1|}$. Now, since sample\_household may only return households from $H^1$, and since any household from $H^1 \setminus H^{PES}$ is returned with equal probability, the probability of any household from $H^1$ getting returned must be $\frac{1}{|H^1|}$. This shows that sample\_household returns a household uniformly at random from $H^1$, proving the claim.  
\end{proof}
We can now prove that this audit is a census RLA.
\begin{theorem}
    For any nation with federal-states $\mathcal{S}$, that allocates representatives to its federal-states in proportion to their population using a highest averages method, this suggested census RLA satisfies the~\hyperlink{census rla guarantee}{census RLA guarantee}: For any $0 < \alpha \leq 1$, if running the PES over all households would lead to a different allocation of representatives than the census, then the probability that a census RLA returns a value $\alpha'$ such that $\alpha' \leq \alpha$ is at most $\alpha$.
\end{theorem}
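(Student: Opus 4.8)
The plan is to mirror the proof of~\Cref{thm: alpha martingale test} almost verbatim, viewing each household as a ``ballot'' carrying its state and number of residents, and the census comparison assorters $A_{s_1,s_2}$ as the assorters being audited. Fix $0<\alpha\le 1$ and suppose running the PES over all households yields an allocation of representatives different from that of the census. By~\Cref{thm: census assorters} there is then a pair $(s_1^{*},s_2^{*})$ of states with $\frac{1}{|H|}\sum_{h\in H} A_{s_1^{*},s_2^{*}}(h)\le\half$; call its assertion the \emph{bad} assertion. The audit outputs $\max_{s_1,s_2}\{1/T^{max}_{s_1,s_2}\}$, so if this output is $\le\alpha$ then in particular $\frac{1}{T^{max}_{s_1^{*},s_2^{*}}}\le\alpha$, i.e.\ $T^{max}_{s_1^{*},s_2^{*}}\ge 1/\alpha$. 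Also $T^{max}_{s_1^{*},s_2^{*}}$ is never forced to $\infty$ in step~\ref{census force approve step}: that would require $\mu_{s_1^{*},s_2^{*}}<0$ at some iteration, i.e.\ $\half|H|<\sum_{h'\in H^0}A_{s_1^{*},s_2^{*}}(h')\le\sum_{h\in H}A_{s_1^{*},s_2^{*}}(h)$ (using $A\ge 0$ and $H^0\subseteq H$), contradicting the bad-assertion inequality --- precisely the analogue of the ``$\mu_k<0$'' steps in~\Cref{thm: alpha martingale test} and~\Cref{thm: Batchcomp theorem}. So it suffices to bound $\Pr[\,T^{max}_{s_1^{*},s_2^{*}}\ge 1/\alpha\,]$ by $\alpha$, which I would do with Ville's inequality applied to the sequence of values $T^i_{s_1^{*},s_2^{*}}$ takes over the (finitely many) iterations of the auditing stage.

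What remains is to check that this sequence is a non-negative supermartingale started at $T^0_{s_1^{*},s_2^{*}}=1$. Non-negativity is immediate after rewriting the update rule of step~\ref{census rla tmax} in the equivalent ``ALPHA form'' $T\leftarrow T\big(\frac{A(h)}{\mu}\cdot\frac{\eta-\mu}{U-\mu}+\frac{U-\eta}{U-\mu}\big)$ (a routine algebraic identity worth recording once): since $A_{s_1^{*},s_2^{*}}\ge 0$ by~\Cref{thm: census assorters} and the algorithm maintains $0\le\mu<\eta<U$, each factor is non-negative, exactly as in~\Cref{claim: t non negative}. For the supermartingale property, conditioning on the first $i-1$ sampled households $h_1,\dots,h_{i-1}$ fixes $T^{i-1}_{s_1^{*},s_2^{*}}$ together with the current $\mu^i,\eta^i,U^i$, so by the ALPHA-form update (as in~\Cref{claim: t non increasing}) it is enough to show $\bbE[A_{s_1^{*},s_2^{*}}(h_i)\mid h_1,\dots,h_{i-1}]\le\mu^i$. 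By construction $\mu^i=\big(\half|H|-\sum_{j<i}A_{s_1^{*},s_2^{*}}(h_j)\big)/|H^1_i|$ is the average of $A_{s_1^{*},s_2^{*}}$ over the $|H^1_i|$ not-yet-sampled households under the hypothesis that the overall average equals $\half$; since the true overall average is $\le\half$ and $A\ge 0$, the average over the unsampled households is $\le\mu^i$. Hence everything reduces to showing that $h_i$ is uniformly distributed over the set of currently unsampled households.

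That last point is where the real work sits, and it is what~\Cref{claim: census household sampling} provides for a single call: run against a uniformly random PES subset $\tilde H^{PES}\subseteq H^{PES}$, Sample\_Household returns a uniformly random element of $H^1$. I would need the conditional/inductive version of this --- that, conditioned on $h_1,\dots,h_{i-1}$ and on all census/PES data revealed about them, $h_i$ is still uniform over $H^1_i$. This holds because conditioning on the first $i-1$ samples only reveals that those among them lying in $H^{PES}$ also lie in $\tilde H^{PES}$; by symmetry the remaining PES sample is still a uniformly random subset of the remaining $H^{PES}$-households of the appropriate size, so the counting computation of~\Cref{claim: census household sampling} applies unchanged and gives probability $1/|H^1_i|$ to every remaining household. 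Disentangling this PES randomness from the audit's own coins in the conditioning is the one delicate point I expect to have to be careful about; once it is in place, $T^0_{s_1^{*},s_2^{*}},T^1_{s_1^{*},s_2^{*}},\dots$ is a non-negative supermartingale, Ville's inequality gives $\Pr[\max_i T^i_{s_1^{*},s_2^{*}}\ge 1/\alpha]\le\alpha\cdot\bbE[T^0_{s_1^{*},s_2^{*}}]=\alpha$, and combined with the first paragraph this yields $\Pr[\text{audit output}\le\alpha]\le\alpha$ for every $0<\alpha\le 1$, which is the~\hyperlink{census rla guarantee}{census RLA guarantee}.
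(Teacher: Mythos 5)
Your proposal is correct and follows essentially the same route as the paper's own proof: identify a violated census assertion via the comparison assorters, rule out the $\mu<0$ force-approve step, show the corresponding $T$-sequence is a non-negative supermartingale using the uniform-without-replacement sampling guaranteed by \Cref{claim: census household sampling}, and conclude with Ville's inequality. If anything, you are slightly more explicit than the paper on two points it treats implicitly — the algebraic equivalence of the census update rule with the ALPHA form, and the conditional (inductive) version of the uniform-sampling claim — but the argument is the same.
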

\begin{proof}
    This proof is similar to the proof of correctness for the ALPHA martingale test in~\Cref{thm: alpha martingale test}. The key point here is that from the perspective of the auditor, the households it receives are sampled uniformly at random and without replacement from the set of all households. Given this observation, the census RLA algorithm can be viewed as a regular election RLA, where each ballot corresponds to a household and contains the household's state and number of residents.
    
    Fix a highest-averages allocation method (meaning a monotonically increasing function $d:\bbN \rightarrow \bbN$ and $c_s\in \bbR$ for each $s\in \mathcal{S}$), lists of households $H^{cen}, H^{PES}$ and census results $g^{cen}$ as specified in~\Cref{sec: census RLA model}. Assume that extending the PES such that it surveys all households leads to a different allocation of representatives than the census, and denote the function which returns these full PES results as $g^{PES}:H^{PES}\rightarrow [g^{max}]\cup\{0\}$. Let $\tilde{H}^{PES}$ be a set of randomly selected households of a pre-determined size from $H^{PES}$ which were actually surveyed during the PES, and fix $\alpha \in [0,1]$. We wish to show that when running the census RLA over these inputs, the probability of the census RLA outputting a value $\alpha'$ s.t.\ $\alpha' < \alpha$ is at most $\alpha$. 
    
    By~\Cref{claim: census assorter valid 2}, since the representative allocations according to the census and according to the PES are different, there must be some census-assorter whose mean is at most $\half$. Denote this assorter as $A_{s'_1,s'_2}$:
    \begin{assumption}
        $$
            \frac{1}{|H|}\sum_{h\in H}{A_{s'_1,s'_2}(h)} \leq \half.
        $$
    \end{assumption}

    $T^{max}_{s'_1,s'_2}$ cannot become $\infty$ in step~\ref{census force approve step}, since that would mean, for any $H^0\subseteq H$:
    $$
       \mu_{s_1, s_2} < 0 \Longrightarrow \half |H| - \sum_{h'\in H^0}{A_{s_1, s_2}(h')} < 0 \Longrightarrow \half < \frac{1}{|H|}\sum_{h'\in H^0}{A_{s_1, s_2}(h')} \leq \frac{1}{|H|}\sum_{h'\in H}{A_{s_1, s_2}(h')}
    $$
    contradicting our assumption. Therefore, $T^{max}_{s'_1,s'_2}$ receives its final update in step~\ref{census rla tmax}.
    
    Denote the values $T_{s'_1,s'_2}$ has after every sampled household during the audit as $T^0_{s'_1,s'_2},T^1_{s'_1,s'_2},...,T^q_{s'_1,s'_2}$, where $T^0_{s'_1,s'_2}$ is its initial value and $q$ is any natural number. Since the algorithm outputs: 
    $$
        \max_{s_1,s_2 \in \mathcal{S}}\left\{\frac{1}{T^{max}_{s_1,s_2}}\right\},
    $$ 
    it outputs a value that is smaller than $\alpha$ only if at the end of the audit we have $T^{max}_{s'_1,s'_2} < \frac{1}{\alpha}$. By this and by the fact $T^{max}_{s'_1,s'_2}$ receives its final value at step~\ref{census rla tmax}, to prove this theorem, it suffices to prove that:
    \begin{align} \label{eq: crla goal}
        \Pr\left[\max_{j}\{T^j_{s'_1,s'_2}\} > \frac{1}{\alpha}\right] \leq \alpha.
    \end{align}
    To achieve this, we show that $T^0_{s'_1,s'_2},...,T^q_{s'_1,s'_2}$ is a non-negative supermartingale, and then use Ville's inequality, similarly to~\Cref{thm: alpha martingale test}. Note that the update rules of $T_{s_1,s_2}$ and of $T_k$ are identical in the census RLA and in the ALPHA martingale test respectively, and that the update rules for $U_{s_1,s_2}, \eta_{s_1,s_2}, \mu_{s_1,s_2}$ always maintain $U_{s_1,s_2} > \eta_{s_1,s_2} > \mu_{s_1,s_2}$.  Moreover, just like the ALPHA martingale test samples ballots randomly and without replacement, the census RLA samples households randomly and without replacement, by~\Cref{claim: census household sampling}.
    For these reasons,~\Cref{claim: t non negative} applies here too - given a non-negative assorter, our update rule for $T_{s_1,s_2}$ makes it non-negative itself, meaning that $T^0_{s'_1,s'_2},...,T^q_{s'_1,s'_2}$ are all non-negative.

    Additionally, for the same arguments as in~\Cref{claim: t non increasing}, we have it that for any $i\in[q]$:
    $$
    \bbE[T^i_{s'_1,s'_2}\,|\,T_{s'_1,s'_2}^1,...,T_{s'_1,s'_2}^{i-1}] \leq T_{s'_1,s'_2}^{i-1}.
    $$
    This is true because both $T_{s'_1,s'_2}$ here and $T_1$ in~\Cref{claim: t non increasing} belong to assertions which are false (their assorters have a mean of $\half$ or less), and are updated in the exact same manner. This makes $T^0_{s'_1,s'_2},...,T^q_{s'_1,s'_2}$ a non-negative supermartingale, meaning that by \hyperref[Ville's inequality]{Ville's inequality}~\cite{durrett2019probability}:
    \begin{align*}
        \Pr\left[\max_{j}\{T^j_{s'_1,s'_2}\} > \frac{1}{\alpha}\right] \leq \alpha \cdot T^0_{s'_1,s'_2} = \alpha.
    \end{align*}
    Which proves~\eqref{eq: crla goal} and thereby completes the proof of this theorem.
\end{proof}

\subsection{Census RLA Simulations} \label{sec: census RLA sim}

This section simulates the suggested census RLA on the Cypriot census and its resulting allocation of representatives to districts in the House of Representatives of Cyprus. Our original intention was to simulate the suggested census RLA method on the US census and its resulting allocation of representatives in the US House of Representatives to the states. This turned out to be infeasible, however, due to the relatively large number of states (50) and representatives (435). Allocating many representatives to many states increases the probability of there being a single representative whose allocation is determined by a very small number of state residents. Therefore, such systems require a very small enumeration error to change the census' allocation, and are therefore more difficult to audit. Using our suggested method on the American setting, a PES which surveys 10\% of households results in a risk-limit of only 0.75. In reality, the US PES surveys around 1\% of households~\cite{usPesSize}.

To show that the census RLA is useful in other cases, we chose to simulate the audit on the House of representatives of Cyprus, where 56 representatives are allocated to 5 districts. Due to the somewhat limited available resources in English regarding the Cypriot census, we view this as a pet-setting for testing our census RLA method, and not as a ready-as-is implementation.

\subsubsection{The House of Representatives of Cyprus} \label{sec: crla technical details}
The House of representatives of Cyprus is its sole legislating body. Officially, The house holds 80 seats, where 56 are to be elected by the Greek Cypriot community and 24 by the Turkish Cypriot community. In 1964, the Turkish-Cypriots withdrew from the political decision-making process, leaving their house seats vacant~\cite{charalambous2008house}.

Since then, the remaining 56 seats of the house are allocated to 5 districts. Currently, the allocation of seats to the districts is amended by law when found necessary, and does not change automatically following a census according to a set method. Our census RLA could be useful when performing these amendments, to ensure that the resulting allocation of seats to districts is sufficiently reliable.

\subsubsection{Data Generation and Technical Details}
The data used to perform this simulation is based on the population census conducted in 2021~\cite{cyprusCensus}. The Statistical Service of Cyprus publicly reports the total number of residents in every district, but not the individual household data, which the census RLA requires. To generate this data, we assumed that the number of residents per household distributes as it does in the United States, as reported by its census~\cite{usCensusHouseholds}. We additionally assumed that 1\% of households do not respond to the census and are counted as if they have no residents.
The per-household data used in these simulations was generated as follows:

\paragraph*{Generating the Census Data}

\begin{enumerate}
    \item The number of households per district was calculated by dividing the district's population by the expected number of residents per household.
    \item The number of residents in each household was drawn from the distribution specified in the US census~\cite{usCensusHouseholds}. 
    \item Due to the randomness involved in the previous step, the real census and our generated one might disagree on the population of the districts. To balance this, the constant of each district ($c_s$ in~\eqref{eq: census repr allocation} at~\Cref{sec: census RLA model}) was set as the difference between the population of the district according to the real census and according to our generated one. With this definition, the allocation of representatives to districts by the real census and by our generated one is necessarily identical.
\end{enumerate}

\paragraph*{Audit Parameters and Other Details}

We allocated representatives to districts using the D'Hondt method. D'Hondt was chosen since it's currently used in the Cypriot elections to allocate seats to political parties. The audit was run assuming that each household holds 15 residents at most, and with $\delta=10^{-10}$. We assumed that the list of households according to the census and the PES match, meaning $H^{PES}=H^{cen}$.
The simulation's code was written in Python and is available in \url{https://github.com/TGKar/Batch-and-Census-RLA}.

\subsubsection{Results}
We present the outputted risk-limit of the census RLA as a factor of the size of the PES. The x-axis shows the share of households that were surveyed of the PES, and each point in the plot represents the audit's outputted risk-limit when using a PES of that specified size. The results are averaged across 10 simulations. 

\subsubsection*{Results When Census and PES Completely Match}
\begin{center}
    \includegraphics[scale=0.42]{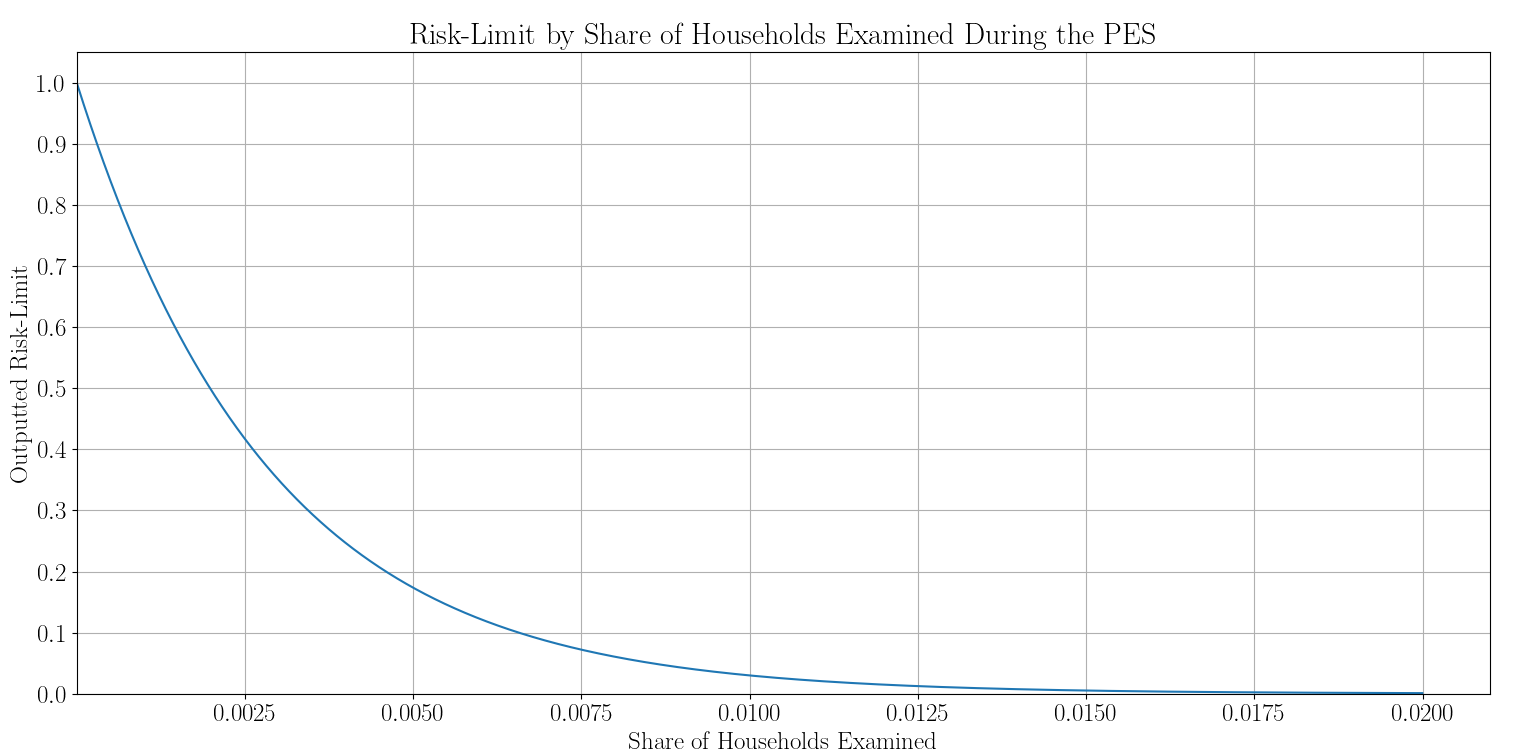}    
\end{center}
The audit's output when the census and PES fully agree on the number of residents in each household. Under these conditions, a PES which samples 0.66\% of households is sufficient for a risk limit of 0.1, and a sample of 0.87\% is sufficient for a risk-limit of 0.05. A PES often surveys around 1\% of households~\cite{hu2022determining}, meaning that our census RLA can confidently approve its resulting allocation of representatives to districts under these specified conditions.

\paragraph*{Results with Small Enumeration Errors}

The next plot shows the same results when the census and PES potentially disagree on 5\% of households. For these 5\% of households, which are selected uniformly at random, the number of residents according to the PES is re-drawn from the distribution of residents per household described in~\Cref{sec: crla technical details}. The following simulations were run over census and full PES results that lead to the same allocation of representatives to states. During the simulated census RLA, the audit only receives the PES results over a subset of randomly selected households.  
\begin{center}
    \includegraphics[scale=0.5]{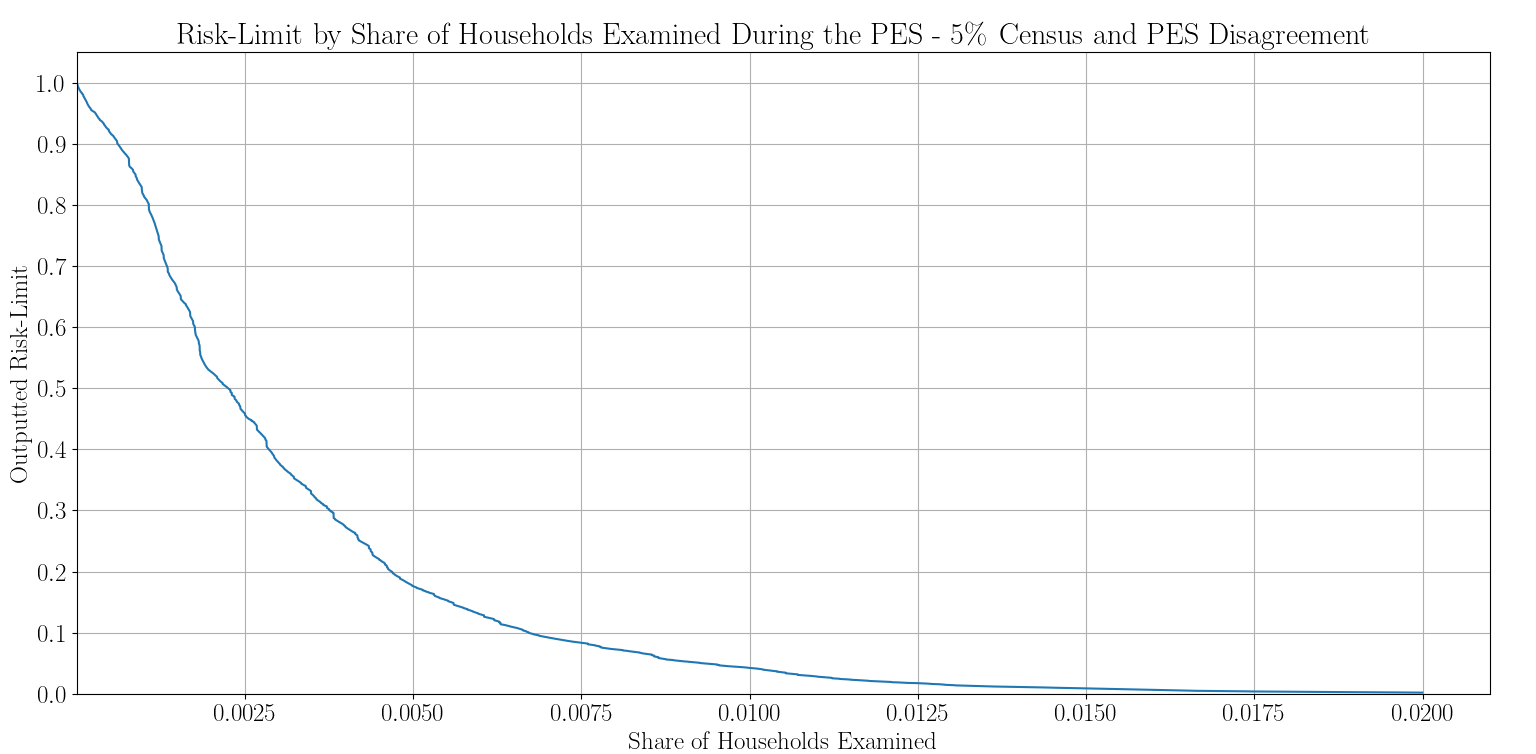}    
\end{center}
These results appear very similar to the previous plot, where the census and PES agreed on the number of residents at all households. With the described rate of disagreement between the PES and the census, a PES which surveys 0.72\% of households is required for the audit to approve the allocation with a risk limit of 0.1, compared to 0.66\% if there were no enumeration disagreements. To get a risk-limit of 0.05, the PES would need to survey 1\% of households, compared to 0.87\% with no enumeration errors.

\subsubsection{Simulation Conclusions} \label{sec: crla conclusions}

\paragraph*{Settings Where Our Method is Applicable}

As mentioned previously, our original goal was to run the census RLA over the US census and House of Representatives, but our method could not confirm such representative allocations with sufficient confidence unless the PES was very large. In nations with less representatives and federal-states, such as Cyprus, our method can approve the census with a relatively low risk-limit with a PES which covers 1\% of households. More generally, our method is applicable when the minimal census enumeration error which leads to a change in the resulting allocation of representatives is relatively large (0.25\% appears to be sufficient for Cyprus based on simulations). When there are many representatives and federal-states, even a small mistake in the census can lead to a wrongful allocation of representatives and auditing the census results requires a larger PES sample.

\paragraph*{Tolerance to Census and PES Disagreements}

Our method appears to be tolerant to small disagreements between the census and the PES results under these specified settings. A disagreement over 5\% of households leads to a small decrease in the audit's resulting risk-limit for any given PES sample size. A PES usually surveys around 1\% of households, which is sufficient for a risk-limit of 0.05 when the census and PES disagree on 5\% of households, compared to a risk-limit of 0.03 when there are no such disagreements. 

\subsubsection*{Recovering From Assertions with a High Risk-Limit}

Even when there are relatively few representatives and states, it's possible that a small enumeration error could lead to a different allocation of representatives. In such cases, a full census RLA would struggle to approve the census results with a sufficiently low risk-limit. If this occurs, the auditing body can examine the risk-limit of the individual assertions ($\frac{1}{T^{max}_{s_1,s_2}}$ for each $s_1,s_2\in \mathcal{S}$) to see which assertions have a higher risk-limit. As seen while simulating an election RLA in~\Cref{sec:batch RLA simulations}, in many cases, one specific assertion is significantly more difficult to approve than others, meaning that its risk-limit is significantly higher. If the risk-limit of all other assertions is sufficiently low, we might still decide that the census' allocation of representatives is reliable, with the exception of a single representative whose state-allocation could not be determined with confidence.

If the audit results in an insufficient risk-limit, we can also examine the risk-limit of each state, i.e.\ the risk-limit that would be outputted if we only wanted to approve that the number of representatives a specific state receives according to the census and according to the PES match. For a state $s$, this risk-limit is the maximal risk-limit of all assertions which involve $s$, meaning:
$$
    \max_{s'\in \mathcal{S}\setminus\{s\}}\left\{\frac{1}{T^{max}_{s,s'}}, \frac{1}{T^{max}_{s',s}}\right\}.
$$

\newpage
\section{Discussion and Further Research}

Throughout this work, we can observe that an election's social choice function and setting can severely limit the efficiency of their RLAs. Systems like the Israeli Knesset elections and the US House of Representatives' allocation of representatives to states are very sensitive to enumeration errors, making it difficult to audit them efficiently. 

When designing new election systems, the ability to audit their social choice function in a risk-limiting manner should be examined in advance. If a system has a sensitive social-choice function, it should be compensated by other means. E.g., it can use a vote tabulation system which returns the specific interpretation it gave each paper-backup ballots and ensure individual ballots can be accessed efficiently, allowing for ballot-comparison audits. If a system cannot be audited efficiently at all, implementing electronic vote tabulations for it should be done with extra care or avoided altogether.

\subsection*{The Batchcomp RLA}

In current literature, different RLA methods are usually compared by showing their simulated performance over some reported election results. Most RLA methods, including the ones presented in this work, are not analyzed in terms of query-complexity. This issue is especially prominent in the field of batch RLAs, where the performance of methods which can audit social choice functions beyond a simple plurality is often not analyzed at all, neither analytically or by using simulations. 

The simulations presented in this work (\Cref{sec:batch RLA simulations}) indicate that Batchcomp provides a noticeable improvement over ALPHA-batch in the limited settings that were tested. Despite this relative success, we cannot definitively say it outperforms existing methods without a clear, rigorous way of analyzing their efficiency.

\subsection*{Auditing the Knesset Elections}

\Cref{sec:batch RLA simulations} shows the difficulties in auditing the Knesset election. Due to the small margins these election results some times have, implementing RLAs for them seems could be problematic, since they will require a full recount when any constraint is close to be unsatisfied. 

Currently, a more practical use for RLAs in Israel could be to approve specific conditions regarding the election results. E.g., confirm that a certain party is above or below the electoral threshold. In such cases, the audit should be simulated in advance to ensure that the number of ballots it would require to read is manageable.

\subsection*{Census RLAs}

The census RLA method appears to be useful in some limited settings, and can be implemented using existing post-enumeration surveys. In systems where our method is currently not sufficient, a census RLA could perhaps aim for a weaker guarantee - that the number of representatives each state should receive according to the PES is close to the number it has according to the census.

The work raises many open questions and potential research directions:

\begin{description}
    \item \textbf{Applying RLAs in Additional Settings:}   Generally speaking, RLAs are relevant and can be applied whenever one wishes to verify the computation of some function over a large number of inputs obtained through potentially error-prone processes. While political elections provide a natural environment for their application, we advocate for their use in a wider range of settings to ensure reliable results. 
    
    As an example of such settings, RLAs could potentially be used to verify that decisions taken based on datasets which were altered in order to satisfy differential privacy are correct according to the real data. This could be achieved by running an RLA in a protected environment (enclave) which holds a subsample of the original, noiseless data. In this setting, the noisy, (differential private) dataset is seen as the reported result, while the noiseless dataset is the true results. An RLA can verify that the results of some computation over the differential private dataset and over the original noiseless dataset are likely to be identical, based on a (hopefully) small random sample from the original dataset. One challenge is to make sure that the very fact that the data passed the test does not hurt the desired differential privacy property.
    
    \item \textbf{Analytical Analysis of the Efficiency of RLAs:} Most recent literature in the field, including this work, focuses on suggesting new RLA algorithms and fitting them to additional electoral systems and settings. There is little to no analytical analysis of the {\em efficiency} and capabilities of many RLA methods. Without a more rigorous analysis, it is not possible to definitively determine which RLA methods are better for which settings. Such analysis could help, for instance, to argue analytically whether Batchcomp is indeed preferable over ALPHA-Batch.

    \item \textbf{Analyzing the Ability to Audit Different Systems:} Future research should analyze how efficiently different social choice functions can be audited. The ''audatibility" of a social choice function might be connected to the its noise stability, i.e., the probability of its outputted winners changing if every vote is changed with some equal, independent probability. If a social choice function has low noise-stability, it's more likely to lead to election results which have small margins, meaning they would be difficult to audit efficiently. The noise-stability of different voting rules has been examined previously in literature~\cite{heilman2022noise, heilman2021three}. Connecting these works to the field of RLAs may aid in determining the potential capabilities of RLAs for different election systems. 

    \item\textbf{Connection Between RLAs and Computational Models:} Thus far, advancements in the field of RLAs were done mostly independently and without connection to computational models. Finding such connections  may inspire new RLA algorithms, or suggest new methods for analyzing the capabilities and efficiency of existing methods. As an example of these connections, RLAs can essentially be viewed as randomized decision trees, where each branch represents a different sequence of paper-backup ballots that can be uncovered during the audit. Viewing RLAs in this manner allows us to analyze their query complexity (number of ballots examined) or instance complexity (best possible performance over specific election results) and to apply existing results from other fields onto RLAs.

    \item \textbf{Unlabeled Instance Complexity and RLAs:}  Future research may find lower bounds for the query-complexity of RLAs by analyzing the randomized unlabeled certificate complexity of the social choice function they operate on, as defined by Grossman, Komargodski and Naor~\cite{grossman2020instance}. The randomized unlabeled certificate complexity is a complexity measure of a function over some specific input. It's defined roughly as the minimal number of queries, in expectation, that any randomized decision tree which computes this function has to perform over the specified input, given a permuted version of it as a certificate. This notion is relevant for RLAs since they are essentially randomized decision trees which calculate a social choice function's output (the true winners) while using the reported election results. In the ballot-level RLA setting, these reported results are given as a reported tally of the votes, which is equivalent to an unlabeled certificate - a randomly permuted version of the paper-backup ballots. Therefore, it appears that an RLA's expected query-complexity over accurate reported results is lower bounded by the election's social choice function's randomized unlabeled certificate complexity over these same results.

Making such an observation, however, requires some adjustments in the definition of the randomized unlabeled instance complexity. RLAs are expected to be efficient even if their certificate is nearly accurate. Meaning, if the reported tally they receive only contains small errors which do not change the election winners, they are still expected to be relatively efficient. Decision trees which are optimal for a specific input may be very inefficient when the certificate is even slightly inaccurate. Thus, the unlabeled instance complexity of a function may be determined by randomized decision trees which would make for bad RLAs, as they may lead to a full recount if the reported results contain negligible mistakes. Without adjusting its definition, lower bounds which rely on this complexity measure may therefore be too loose. 

    \item\textbf{The Expressibility of SHANGRLA:} Future research should attempt to assess the capabilities and limitations of different RLA frameworks, such as SHANGRLA. Such research could, for example, find bounds on the efficiency (number of examined ballots) of a SHANGRLA based RLA given some reported election results, or discover classes of social choice functions which can and cannot be audited using SHANGRLA. Some social choice functions, such as instant runoff voting, do not currently have reductions to SHANGRLA assertions which are both sufficient and necessary for the reported winners of the elections to be correct~\cite{stark2020sets}. Finding clear limitations for SHANGRLA can prevent researchers from trying to develop SHANGRLA based RLAs for systems where it cannot apply.     

    \item\textbf{Batch-Level RLAs Beyond SHANGRLA:} Currently, There are few batch-level RLA algorithms for social choice functions beyond a simple plurality. This work suggests a generic method for converting any SHANGRLA based RLA to a batch-level RLA. Some election systems, such as instant runoff voting, do not currently have a reduction to SHANGRLA assertions that are both sufficient and necessary for the reported winners of the elections to be correct. This raises the need for an even more general batch-level RLA method, which can be used in systems that cannot currently be audited using SHANGRLA. 
    
    \item \textbf{Census RLAs:} This work presents the first RLA for population censuses. It is possible and even likely that other RLA algorithms could be adapted to audit censuses, perhaps with different goals or guarantees than our suggested method.

    Additionally, our census RLA method could potentially be optimized further. This method operates iteratively, by sequentially sampling households and reading their census results, and maintains a probability with which it can approve the census at all times. It is possible that other algorithms, which use the entirety of the census and PES data in one shot, could outperform our method. 
\end{description}

\section{References}
\printbibliography[heading=none]

\end{document}